\newcommand{\bbone}{\mathbbm{1}}
\setlist[enumerate]{itemsep=0.1ex}
\setlist[itemize]{itemsep=0.1ex}
\newcommand\expcommand\newcommand
\newcommand\refmain\ref
\DeclarePairedDelimiter\paren\lparen\rparen
\DeclarePairedDelimiter\bracket\lbrack\rbrack
\DeclarePairedDelimiter\braces\lbrace\rbrace
\DeclarePairedDelimiter\abs\lvert\rvert
\providecommand{\bbone}{\mathbf{1}}
\DeclarePairedDelimiterXPP\indicator[1]{\bbone}{\lbrack}{\rbrack}{}{#1}
\DeclarePairedDelimiterXPP\expf[1]{\exp}{\lparen}{\rparen}{}{#1}
\DeclarePairedDelimiterXPP\logf[1]{\log}{\lparen}{\rparen}{}{#1}
\DeclarePairedDelimiterXPP\maxf[1]{\max}{\lparen}{\rparen}{}{#1}
\DeclarePairedDelimiterXPP\minf[1]{\min}{\lparen}{\rparen}{}{#1}
\DeclarePairedDelimiterXPP\func[2]{#1}{\lparen}{\rparen}{}{#2}
\newcommand{\set}[1]{#1}
\newcommand\suchthat{:\allowbreak\mathopen{}}
\DeclarePairedDelimiter\setb\lbrace\rbrace
\newcommand{\Reals}{\mathbb{R}}
\newcommand{\Naturals}{\mathbb{N}}
\newcommand{\bset}{\setb{0, 1}}
\newcommand{\comp}{\mathsf{c}}
\DeclareMathOperator*{\argmin}{arg\,min}
\DeclarePairedDelimiter\card\lvert\rvert
\newcommand*{\tran}{{\mathpalette\@tran{}}}
\newcommand*{\@tran}[2]{\raisebox{\depth}{$\m@th#1\intercal$}}
\DeclarePairedDelimiterXPP\tnorm[2]{}{\lVert}{\rVert_{1}}{}{#2}
\DeclarePairedDelimiterXPP\enorm[2]{}{\lVert}{\rVert_{2}}{}{#2}
\DeclarePairedDelimiterXPP\inorm[2]{}{\lVert}{\rVert_{\infty}}{}{#2}
\DeclarePairedDelimiterXPP\pnorm[2]{}{\lVert}{\rVert_{#1}}{}{#2}
\DeclarePairedDelimiterXPP\detf[1]{\det}{\lparen}{\rparen}{}{#1}
\DeclareMathOperator{\trsym}{tr}
\DeclarePairedDelimiterXPP\tr[1]{\trsym}{\lparen}{\rparen}{}{#1}
\providecommand\given{}
\newcommand\givensymbol[1]{\nonscript\:#1\vert\allowbreak\nonscript\:\mathopen{}}
\let\Pr\relax
\DeclareMathOperator{\Prsym}{pr}
\DeclarePairedDelimiterXPP\Pr[1]{\Prsym}{\lparen}{\rparen}{}{%
	\renewcommand\given{\givensymbol{\delimsize}}%
	#1}
\DeclarePairedDelimiterXPP\Prs[1]{\Prsym}{\lbrace}{\rbrace}{}{%
	\renewcommand\given{\givensymbol{\delimsize}}%
	#1}
\DeclarePairedDelimiterXPP\Prt[1]{\Prsym}{\lbrack}{\rbrack}{}{%
	\renewcommand\given{\givensymbol{\delimsize}}%
	#1}
\DeclareMathOperator{\Esym}{E}
\DeclarePairedDelimiterXPP\E[1]{\Esym}{\lparen}{\rparen}{}{%
	\renewcommand\given{\givensymbol{\delimsize}}%
	#1}
\DeclarePairedDelimiterXPP\Es[1]{\Esym}{\lbrace}{\rbrace}{}{%
	\renewcommand\given{\givensymbol{\delimsize}}%
	#1}
\DeclarePairedDelimiterXPP\Et[1]{\Esym}{\lbrack}{\rbrack}{}{%
	\renewcommand\given{\givensymbol{\delimsize}}%
	#1}
\DeclareMathOperator{\Varsym}{var}
\DeclarePairedDelimiterXPP\Var[1]{\Varsym}{\lparen}{\rparen}{}{%
	\renewcommand\given{\givensymbol{\delimsize}}%
	#1}
\DeclarePairedDelimiterXPP\EstVar[1]{\widehat{\Varsym}}{\lparen}{\rparen}{}{%
	\renewcommand\given{\givensymbol{\delimsize}}%
	#1}
\DeclareMathOperator{\Covsym}{cov}
\DeclarePairedDelimiterXPP\Cov[1]{\Covsym}{\lparen}{\rparen}{}{%
	\renewcommand\given{\givensymbol{\delimsize}}%
	#1}
\newcommand{\indep}{\protect\mathpalette{\protect\@indep}{\perp}}
\newcommand*{\@indep}[2]{\mathrel{\rlap{$#1#2$}\mkern3mu{#1#2}}}
\newcommand{\bigOsym}{\mathcal{O}}
\DeclarePairedDelimiterXPP\bigO[1]{\bigOsym}{\lparen}{\rparen}{}{#1}
\newcommand{\littleOsym}{o}
\DeclarePairedDelimiterXPP\littleO[1]{\littleOsym}{\lparen}{\rparen}{}{#1}
\newcommand{\bigOpsym}{\bigOsym_p}
\DeclarePairedDelimiterXPP\bigOp[1]{\bigOpsym}{\lparen}{\rparen}{}{#1}
\newcommand{\littleOpsym}{\littleOsym_p}
\DeclarePairedDelimiterXPP\littleOp[1]{\littleOpsym}{\lparen}{\rparen}{}{#1}
\newcommand{\bigOmegasym}{\Omega}
\DeclarePairedDelimiterXPP\bigOmega[1]{\bigOmegasym}{\lparen}{\rparen}{}{#1}
\newcommand{\littleOmegasym}{\omega}
\DeclarePairedDelimiterXPP\littleOmega[1]{\littleOmegasym}{\lparen}{\rparen}{}{#1}
\newcommand{\bigThetasym}{\Theta}
\DeclarePairedDelimiterXPP\bigTheta[1]{\bigThetasym}{\lparen}{\rparen}{}{#1}
\newcommand{\diff}{\mathop{}\!\mathrm{d}}
\theoremstyle{plain}
\newtheorem{corollary}{Corollary}
\newtheorem{lemma}{Lemma}
\newtheorem{proposition}{Proposition}
\newtheorem{theorem}{Theorem}
\newenvironment{reflemma}[1]
{\innerreflemma}
{\endinnerreflemma}
\newenvironment{refproposition}[1]
{\innerrefproposition}
{\endinnerrefproposition}
\theoremstyle{definition}
\newtheorem{condition}{Condition}
\newtheorem{definition}{Definition}
\theoremstyle{remark}
\renewcommand{\set}[1]{\mathcal{#1}}
\newcommand{\sumin}{\sum_{i=1}^{n}}
\newcommand{\avgin}{\frac{1}{n}\sumin}
\newcommand{\cavgin}{n^{-1}\sumin}
\newcommand{\Sample}{\set{U}}
\newcommand{\att}{\tau_{\normalfont\textsc{att}}}
\newcommand{\attest}{\hat{\tau}_{\normalfont\textsc{att}}}
\newcommand{\Xvpop}{X}
\newcommand{\Xv}[1]{\Xvpop_{#1}}
\newcommand{\Xvi}{\Xv{i}}
\newcommand{\Xvj}{\Xv{j}}
\newcommand{\allX}{\set{X}}
\newcommand{\Wpop}{W}
\newcommand{\W}[1]{\Wpop_{#1}}
\newcommand{\Wi}{\W{i}}
\newcommand{\Wj}{\W{j}}
\DeclarePairedDelimiterXPP\POpop[1]{\Ypop}{\lparen}{\rparen}{}{#1}
\newcommand{\POtpop}{\POpop{1}}
\newcommand{\POcpop}{\POpop{0}}
\DeclarePairedDelimiterXPP\PO[2]{\Ypop_{#1}}{\lparen}{\rparen}{}{#2}
\newcommand{\POi}[1]{\PO{i}{#1}}
\newcommand{\POti}{\POi{1}}
\newcommand{\POci}{\POi{0}}
\newcommand{\Ypop}{Y}
\newcommand{\Y}[1]{\Ypop_{#1}}
\newcommand{\Yi}{\Y{i}}
\newcommand{\Yj}{\Y{j}}
\newcommand{\Treated}{\set{T}}
\newcommand{\Controls}{\set{C}}
\newcommand{\Matched}{\set{M}^*}
\newcommand{\nTreated}{N_1}
\newcommand{\nControls}{N_0}
\newcommand{\matchsym}{m}
\newcommand{\allM}{\mathcal{M}}
\newcommand{\optM}{\allM^*}
\newcommand{\mgensym}{\matchsym}
\DeclarePairedDelimiterXPP\mgen[1]{\mgensym}{\lparen}{\rparen}{}{#1}
\newcommand{\mgeni}{\mgen{i}}
\newcommand{\mgenj}{\mgen{j}}
\newcommand{\moptsym}{\matchsym^*}
\DeclarePairedDelimiterXPP\mopt[1]{\moptsym}{\lparen}{\rparen}{}{#1}
\newcommand{\mopti}{\mopt{i}}
\newcommand{\distsym}{d}
\DeclarePairedDelimiterXPP\dist[2]{\distsym}{\lparen}{\rparen}{}{#1, #2}
\newcommand{\pscoresym}{\pi}
\DeclarePairedDelimiterXPP\pscore[1]{\pscoresym}{\lparen}{\rparen}{}{#1}
\newcommand{\avgpscore}{\bar{\pscoresym}}
\newcommand{\genpar}{\theta}
\newcommand{\genest}{\hat{\theta}}
\newcommand{\rPSpop}{\Pi}
\newcommand{\rPS}[1]{\rPSpop_{#1}}
\newcommand{\rPSi}{\rPS{i}}
\newcommand{\rPSj}{\rPS{j}}
\newcommand{\PSpoint}{p^*}
\newcommand{\mdiscrep}{D}
\newcommand{\suppPS}{\mathbf{\Pi}_{\normalfont\textsc{supp}}}
\newcommand{\suppPSp}{\Pi^+_{\normalfont\textsc{supp}}}
\newcommand{\suppPSm}{\Pi^-_{\normalfont\textsc{supp}}}
\DeclarePairedDelimiterXPP\EPOPS[2]{\mu_{#1}}{\lparen}{\rparen}{}{#2}
\newcommand{\EPOPSc}[1]{\EPOPS{0}{#1}}
\newcommand{\lemmaconsistentunbiased}{%
	If an estimator $\genest$ is consistent for a parameter $\genpar$, then it is asymptotically unbiased or its variance is asymptotically unbounded:
	\begin{equation}
		\lim_{\varepsilon \to 0} \lim_{n \to \infty} \Pr[\big]{\abs{\genest - \genpar} \geq \varepsilon} = 0 \quad \implies \quad
		\lim_{n \to \infty} \E{\genest} = \genpar \quad\text{or}\quad \limsup_{n \to \infty}\Var{\genest} = \infty.
	\end{equation}%
}
\newcommand{\lemmaboundedvar}{%
	Given Condition~\refmain{cond:reg-cond},
	\begin{equation}
		\limsup_{n \to \infty}\Var{\attest} \leq \frac{4\E{\Ypop^2}}{\avgpscore^2} < \infty.
	\end{equation}%
}
\newcommand{\coromdiscrepinconsistent}{%
	Given Condition~\refmain{cond:reg-cond}, if the matching estimator is asymptotically biased with respect to the average treatment effect of the treated, then it is inconsistent:
	\begin{equation}
		\limsup_{n \to \infty} \abs[\big]{\E{\attest} - \att} > 0
		\quad \implies \quad
		\lim_{\varepsilon \to 0} \limsup_{n \to \infty} \Pr[\big]{\abs{\attest - \att} \geq \varepsilon} > 0.
	\end{equation}%
}
\newcommand{\lemmanocrossingmatches}{%
	An optimal propensity score matching contains no crossing matches.%
}
\newcommand{\proppscorebias}{%
	Given Conditions~\refmain{cond:reg-cond},~\refmain{cond:pscore-lipschitz} and~\refmain{cond:left-closed}, when the matching is constructed without replacement using the true propensity score,
	\begin{equation}
		\lim_{n \to \infty} \E{\attest} = \att + \frac{\Pr{\rPSpop \geq \PSpoint}}{2\avgpscore} \bracket[\Big]{\Es[\big]{\POcpop \given \Wpop = 1, \rPSpop \geq \PSpoint} - \Es[\big]{\POcpop \given \Wpop = 0, \rPSpop \geq \PSpoint}}.
	\end{equation}%
}
\title{\textbf{On the inconsistency of\\matching without replacement}}
\author{Fredrik Sävje}
\affil{Department of Political Science and Department of Statistics \& Data Science, Yale University.}
\date{\today}
\begin{document}

\maketitle

\begin{abstract}
\begin{singlespace}
\noindentThe paper shows that matching without replacement on propensity scores produces estimators that generally are inconsistent for the average treatment effect of the treated.
To achieve consistency, practitioners must either assume that no units exist with propensity scores greater than one-half or assume that there is no confounding among such units.
The result is not driven by the use of propensity scores, and similar artifacts arise when matching on other scores as long as it is without replacement.

\end{singlespace}
\end{abstract}

\doparttoc
\faketableofcontents

\bigskip
\section{Introduction}

Matching aims to adjust for confounded treatment assignment when estimating treatment effects with observational data.
Each treated unit is matched to one or more similar control units according to some similarity measure based on the units' observed characteristics.
The treatment effect is estimated by the average difference between the outcomes of the treated units and their matched controls.
There are many variations to this basic recipe.
An important consideration is whether to match with or without replacement.
In the first case, several treated units can be matched to the same control.
In the second case, at most one treated unit can be matched to each control.
Matching with replacement produces matches of higher quality, but the information provided by the controls may be used inefficiently.
This is because fewer controls will be matched with treated units when the matching is with replacement, and unmatched units are discarded from the analysis.
Practitioners sometimes opt for matching without replacement to avoid such issues.
This type of matching has previously been discussed and studied by \citet{Dehejia2002Propensity}, \citet{Rosenbaum2002Observational}, \citet{Stuart2010Matching} and \citet{Abadie2012Martingale}, among others.

The purpose of this paper is to investigate the asymptotic properties of the matching estimator when the matching is done without replacement.
The main result is that the matching estimator generally is inconsistent for the treatment effect it aims to estimate.
The underlying idea is conceptually straightforward.
The sample must contain more controls than treated units to construct a matching without replacement.
If not, one would run out of control units, and some treated units would be left unmatched.
For the matching estimator to be consistent, the quality of the matches must improve as the sample grows in size.
This is only possible if there are more controls than treated units for every possible value of the observed covariates.
If not, one would run out of control units for some covariate values, which would force some treated units to be matched with distant control units.
To ensure that control units are locally abundant, the propensity score must be less than one half everywhere on the support of the covariates.
This condition is considerably stronger than typically assumed when matching is used.

The result suggests that practitioners may want to consider alternatives to matching without replacement when adjusting for confounding.
These alternatives include matching with replacement and various weighting methods.
It is also possible to modify the matching procedure to mitigate the problem.
For example, a caliper that diminishes in the sample size would ensure consistency as long as the estimator is modified to account for the fact that the caliper implicit reweights the treated units.
A concern with this modification, and others like it, is that it would undo much of the potential efficiency gains that prompt practitioners to use matching without replacement in the first place.

The key technical contribution of the paper is that the analysis does not condition on the observed covariates.
To the best of my knowledge, all previous investigations of the behavior of matching without replacement are based on such conditioning, or they consider asymptotic regimes that produce a similar effect, as discussed in Section~\ref{sec:diminishing}.
Once one conditions on the observed covariates, the matching is deterministic, and one may ignore the stochastic behavior of the matching procedure.
The practice greatly simplifies the analysis, but it may be problematic for two reasons.
First, the stochastic behavior of the matching procedure may be a non-negligible source of imprecision of the matching estimator, which is ignored.
Second, to investigate large sample performance, the behavior of the matching method must be stipulated.
This may, for example, be an assumption that the distance between matched units diminishes asymptotically.
However, such assumptions leave practitioners wondering whether matching without replacement in fact behaves in this way when units are sampled from a population.
The worry is that matchings that satisfy the stipulated behavior may be rare, in which case the analyses in the previous literature would condition on an event that practitioners are unlikely to meet in practice.
The technical innovation of this paper is an approach that allows for an unconditional analysis of the matching function and its asymptotic behavior in a standard sampling framework.

\section{Illustration}

An illustration using a categorical covariate will fix ideas.
Consider a population in which $10\%$ of the units belong to a certain covariate category $A$, and $3 / 4$ of these units are treated.
Among the treated units in category $A$, only an expected fraction of $1 / 3$ will be matched to controls in the same category.
This is because controls in the category run out after the first third of the treated units have been matched.
The remaining $2 / 3$, which are $5\%$ of the total sample in expectation, must be matched to controls in other categories.
Unless these other units are representative of the treated units in category $A$, which we have no reason to believe that they are, the poor quality matches will prevent the estimator from concentrating around the treatment effect.
The argument applies as soon as more than half of the units in any category are treated.
Thus, to achieve consistency, there cannot be any such categories.
The purpose of the rest of this paper is to demonstrate that this phenomenon occurs more generally and to derive the asymptotic bias in closed form.

\section{Preliminaries}

\subsection{Notation and regularity conditions}

Consider a population described by a distribution $\braces{\Xvpop, \Wpop, \POcpop, \POtpop, \Ypop}$, where $\Xvpop \in \allX$ is a covariate in some, possibly multi-dimensional, covariate set, $\Wpop \in \bset$ is an indicator of treatment assignment, $\POcpop$ and $\POtpop$ are real-valued potential outcomes, and $\Ypop = \POpop{\Wpop}$ is the realized outcome.
The notation requires that the potential outcomes are unambiguous for each treatment condition, ruling out, for example, that the treatment assigned to one unit affects the outcome of another unit.

A sample of $n$ units is drawn from the population, and the observations are indexed by $\Sample = \setb{1, \dotsc, n}$.
The potential outcomes are never directly observed, and the available information for unit $i$ is $\paren{\Xvi, \Wi, \Yi}$.
The population is assumed to be large, so the observations can be seen as independent and identically distributed according to the population distribution.

The parameter of interest is the treatment effect of the treated units in the population:
\begin{equation}
	\att = \Es[\big]{\POtpop - \POcpop \given \Wpop = 1}.
\end{equation}
The main inferential challenge is that treatment assignment is suspected to be confounded.
That is, the two conditional distributions of $\POcpop$ given $\Wpop$ may not be the same.
Our hope is that we have observed all confounding variables so that we can adjust for the difference in the conditional distributions.
The focus here is when this adjustment is done using matching without replacement.

To formalize the method, let $\Treated = \setb{i \in \Sample \suchthat \Wi = 1}$ and $\Controls = \setb{i \in \Sample \suchthat \Wi = 0}$ be the sets of treated and control units in the sample.
Let $\nTreated = \card{\Treated}$ and $\nControls = \card{\Controls}$ denote their sizes.
A matching can be described as an injective function $\mgensym : \Treated \to \Controls$ where $\mgeni$ gives the match for $i \in \Treated$.
The function is injective because the matching is without replacement: $\mgeni \neq \mgenj$ for all $i \neq j$.
Let $\allM$ collect all such functions, which is the set of all admissible matchings.

Several methods have been devised to select a suitable $\matchsym$ from $\allM$.
This paper considers optimal matching as described by \citet{Rosenbaum1989Optimal}.
The covariate set is here endowed with a metric, $\distsym : \allX \times \allX \to \Reals^+$.
The resulting metric space captures how similar the units are with respect to their covariates.
That is, if $\dist{\Xvi}{\Xvj} < \dist{\Xvi}{\Xv{k}}$, then unit $i$ is deemed more similar to unit $j$ than to $k$.
A continuity assumption will later give the metric meaning by connecting it to the potential outcomes.

Optimal matchings are those that minimize the sum of distances between matched units:
\begin{equation}
	\optM = \argmin_{\matchsym \in \allM} \sum_{i \in \Treated} \dist[\big]{\Xvi}{\Xv{\mgen{i}}}.
\end{equation}
If there is not a unique optimal matching, a matching is picked arbitrarily from the set of optimal matchings in a deterministic fashion.
The selected optimal matching $\moptsym \in \optM$ is thus completely determined by $\paren{\Xv{1}, \dotsc, \Xv{n}}$ and $\paren{\W{1}, \dotsc, \W{n}}$.
If one were to condition on the covariates and treatment assignments, $\moptsym$ is not random.
However, as noted in the introduction, no such conditioning will be done here, and the matching is random.
The sets $\Treated$, $\Controls$ and $\allM$ are also random.

With $\moptsym$ in hand, the treatment effect $\att$ is estimated by the average difference in observed outcomes between each treated unit and its matched control:
\begin{equation}
	\attest = \frac{1}{\nTreated} \sum_{i \in \Treated} \paren[\big]{\Yi - \Y{\mopt{i}}}.
\end{equation}
This is the estimator studied by \citet{Abadie2006Large,Abadie2012Martingale}, and it is widely used by practitioners.
The estimator as stated is not defined when the sample contains no treated units or when it contains more treated units than controls.
Practitioners do not tend to use matching without replacement in either of those two cases, and the conditions below ensure they happen with a probability approaching zero at an exponential rate.
Nevertheless, for completeness, the estimator is defined to be zero in these cases.

To motivate the use of matching adjustment, the population is assumed to satisfy a set of conditions.
A key aspect of these conditions is the propensity score: the fraction of units in the population assigned to treatment.
Let $\avgpscore = \Pr{\Wpop = 1}$ be the overall fraction of treated units, and let $\pscore{x} = \Pr{\Wpop = 1 \given \Xvpop = x}$ be the fraction conditional on the covariate.

\begin{condition}\label{cond:reg-cond}
	The population satisfies:
	\vspace{-6pt}
	\begin{enumerate}[label=\roman*.,left=16pt]
		\item Unconfoundedness: $\POcpop$ is conditionally independent of $\Wpop$ given $\Xvpop = x$ on the support of $\Xvpop$.
		\item Overlap: $\pscore{x}$ is bounded away from one on the support of $\Xvpop$.
		\item Existence of treated units: $\avgpscore$ is greater than zero.
		\item Abundance of control units: $\avgpscore$ is less than one-half.
		\item Well-behaved outcomes: $\E{\Ypop^2}$ exists.
	\end{enumerate}
\end{condition}

The first condition states that all confounding variables are observed.
This ensures that covariate adjustment in principle could resolve the confounding.
The second condition states that the support of the covariate for the treated units is in the interior of the support for the controls.
This ensures that there is enough information in the population for the adjustment.
The combination of the two conditions is sometimes called ignorable treatment assignment \citep{Rosenbaum1983Central}.
Ignorable assignment is, however, typically taken to also include unconfoundedness for the other potential outcome and a lower bound on the propensity score.
This is not needed here because the treatment effect of the treated units is the focus \citep{Heckman1997Matching}.

The third and fourth conditions are the ones mentioned above.
They ensure that large samples almost always contains more controls than treated units and at least some treated units.
This, in turn, ensures that it is possible to construct a matching without replacement with a probability approaching one.
The fifth condition ensures that the outcome distribution in the population does not have tails that are too heavy.

\subsection{Asymptotic bias and consistency}

The focus in the following section is the asymptotic bias of the estimator.
However, the bias itself is not of interest in this paper.
The reason for this focus is that asymptotic unbiasedness is a necessary condition for consistency for the matching estimator given Condition~\ref{cond:reg-cond}.
The following two lemmas provide the details.
All proofs are presented in the supplement.

\begin{lemma}\label{lem:consistent-unbiased}
	\lemmaconsistentunbiased{}
\end{lemma}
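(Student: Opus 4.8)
The plan is to prove the equivalent implication obtained by moving the second disjunct to the hypothesis. Since $P \implies (Q \text{ or } R)$ is the same as $(P \text{ and } \lnot R) \implies Q$, and since the premise $\lim_{\varepsilon \to 0} \lim_{n \to \infty} \Pr{\abs{\genest - \genpar} \geq \varepsilon} = 0$ is exactly convergence in probability $\genest \parrow \genpar$ (the inner limit being zero for each fixed $\varepsilon$ is the definition), it suffices to assume consistency together with $\limsup_{n \to \infty}\Var{\genest} = V < \infty$ and then deduce $\lim_{n \to \infty}\E{\genest} = \genpar$. Writing $b_n = \E{\genest} - \genpar$ for the bias, the goal reduces to showing $b_n \to 0$.

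The engine of the argument is to split the bias according to whether $\genest$ is close to or far from $\genpar$. For a threshold $\varepsilon > 0$ I would write $b_n = \E{(\genest - \genpar)\indicator{\abs{\genest - \genpar} < \varepsilon}} + \E{(\genest - \genpar)\indicator{\abs{\genest - \genpar} \geq \varepsilon}}$. The near term is at most $\varepsilon$ in absolute value because the integrand is bounded by $\varepsilon$ on that event. The far term I would control by Cauchy--Schwarz, bounding it by $\sqrt{\E{(\genest - \genpar)^2}}\,\sqrt{\Pr{\abs{\genest - \genpar} \geq \varepsilon}}$, where the probability factor vanishes as $n \to \infty$ by consistency. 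The identity tying this to the hypothesis is $\E{(\genest - \genpar)^2} = \Var{\genest} + b_n^2$.

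The hard part is that this second moment contains the very quantity $b_n$ I am trying to bound, so the estimate is circular: bounded variance alone does not bound the second moment, since the bias could in principle diverge. I would break the circularity in two stages. First, taking $\varepsilon = 1$ and using $\sqrt{V + 1 + b_n^2} \leq \sqrt{V + 1} + \abs{b_n}$ together with the fact that $p_n = \Pr{\abs{\genest - \genpar} \geq 1} \to 0$ (so eventually $\sqrt{p_n} \leq 1/2$), the split yields a self-referential inequality $\abs{b_n} \leq 1 + \paren[\big]{\sqrt{V + 1} + \abs{b_n}}\sqrt{p_n}$, which rearranges to show that $\abs{b_n}$ is eventually bounded by a constant $B$. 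Second, with $\abs{b_n} \leq B$ in hand, the second moment is uniformly bounded by $V + 1 + B^2$, so for each fixed $\varepsilon$ the far term tends to zero and the split gives $\limsup_{n \to \infty}\abs{b_n} \leq \varepsilon$.

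Letting $\varepsilon \to 0$ then forces $\limsup_{n \to \infty}\abs{b_n} = 0$, i.e.\ $\E{\genest} \to \genpar$, which is the asymptotic unbiasedness sought. Conceptually, the two stages amount to upgrading bounded variance to a uniform bound on the second moment, hence to uniform integrability of $\genest$, so the conclusion could alternatively be read off from the standard fact that convergence in probability plus uniform integrability implies $L^1$ convergence; I prefer the direct split above because it is self-contained and makes the role of the boundedness hypothesis explicit.
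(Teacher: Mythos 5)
Your proof is correct, but it takes a genuinely different route from the paper's. The paper argues by contradiction and works on the variance side of the inequality: conditioning on the indicator $A_n = \indicator[\big]{\abs{\genest - \genpar} \geq \varepsilon_n}$ for a sequence $\varepsilon_n \to 0$ with $\delta_n = \Pr{A_n = 1} \to 0$, it applies the law of total variance to obtain the lower bound $\Var{\genest} \geq b_n\paren{b_n - 2\varepsilon_n}\paren{1 - \delta_n}/\delta_n$, so a bias bounded away from zero forces the variance to diverge as $\delta_n \to 0$. You instead argue directly on the first moment, splitting $\E{\genest} - \genpar$ over the near and far events, controlling the far piece by Cauchy--Schwarz through the second moment $\Var{\genest} + b_n^2$, and breaking the resulting self-reference with a two-stage bootstrap (first a crude uniform bound on $\abs{b_n}$ at $\varepsilon = 1$, then $\varepsilon \to 0$); this is in essence the standard uniform-integrability argument that convergence in probability plus a uniform $L^2$ bound yields convergence of means. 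Both proofs rest on the same implicit integrability premise (a finite variance presupposes a finite second moment, hence a well-defined mean), and your reading of the iterated-limit definition of consistency as convergence in probability is justified by monotonicity of the tail probability in $\varepsilon$. What the paper's decomposition buys is an explicit quantitative version of the intuition stated after the lemma --- the variance must grow at least like $b_n^2/\delta_n$ --- and it avoids circularity because the bias appears only on the small side of its inequality; what yours buys is a shorter, more standard argument needing only Cauchy--Schwarz rather than the conditional-variance decomposition, at the cost of the extra bootstrap step.
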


\begin{lemma}\label{lem:bounded-var}
	\lemmaboundedvar{}
\end{lemma}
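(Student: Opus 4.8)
The plan is to bound the variance by the second moment, $\Var{\attest} \leq \E{\attest^2}$, and to control $\E{\attest^2}$ using only the fact that the matching is \emph{without replacement}, making no appeal whatsoever to how good the matches are. The degenerate cases are harmless: $\attest$ is defined to be zero unless $1 \leq \nTreated \leq \nControls$, so it suffices to bound the second moment on this event, where in particular $\nTreated \geq 1$ and the normalizer $1/\nTreated$ is well-defined.

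On that event I would write the estimator as a difference of the treated outcome average and the matched-control outcome average and apply the Cauchy--Schwarz inequality to each squared average, giving
\begin{equation}
	\attest^2 \leq \frac{2}{\nTreated} \sum_{i \in \Treated} \Yi^2 + \frac{2}{\nTreated} \sum_{i \in \Treated} \Y{\mopt{i}}^2 .
\end{equation}
The decisive step is the second sum. Since $\moptsym$ is injective, the matched controls $\setb{\mopt{i} : i \in \Treated}$ are distinct members of $\Controls$, so $\sum_{i \in \Treated} \Y{\mopt{i}}^2 \leq \sum_{j \in \Controls} \Yj^2 \leq \sum_{i \in \Sample} \Yi^2$; the first sum is bounded the same way. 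Hence $\attest^2 \leq 4 \nTreated^{-1} \sum_{i \in \Sample} \Yi^2$. I want to stress that this uses nothing about \emph{which} controls are matched: even the worst conceivable matching cannot inflate the sum of squared matched outcomes beyond the total squared outcome of the sample, precisely because no control is reused. This is exactly what permits a finite variance bound regardless of match quality.

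It then remains to take the expectation of $4 \nTreated^{-1} \sum_{i \in \Sample} \Yi^2$, and the single delicate point---which I expect to be the main obstacle---is the dependence between the random normalizer $1/\nTreated$ and the outcome sum, since the same treatment indicators determine both. Rather than truncating on a high-probability event (which would require outcome moments beyond the second), I would resolve the coupling exactly. Splitting $\sum_{i \in \Sample} \Yi^2$ into its treated and control parts and symmetrizing over the i.i.d.\ units reduces each part to $n$ times the expectation for a single representative unit; conditioning on that unit's treatment status decouples its outcome from the count of treated units among the remaining $n - 1$ units, which is $\text{Binomial}(n - 1, \avgpscore)$ and independent of it. The reciprocal binomial moment $\E{(1 + \text{Binomial}(n - 1, \avgpscore))^{-1}}$ and the analogous moment restricted to a positive count are both of order $(\avgpscore n)^{-1}$ by the standard identity, so the factor of $n$ cancels and one is left with a bound of order $\E{\Ypop^2} / \avgpscore$. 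Finally, Condition~\ref{cond:reg-cond} supplies $\avgpscore < 1/2$, which is exactly what converts this into the stated $4\E{\Ypop^2} / \avgpscore^2$, while $\E{\Ypop^2} < \infty$ and $\avgpscore > 0$ from the same condition keep the bound finite; passing to $\limsup_{n \to \infty}$ completes the argument.
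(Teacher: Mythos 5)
Your proof is correct, and its first half is essentially the paper's: the bound $\attest^2 \leq \frac{2}{\nTreated}\sum_{i\in\Treated}\Yi^2 + \frac{2}{\nTreated}\sum_{i\in\Treated}\Y{\mopt{i}}^2$ followed by the injectivity observation $\sum_{i\in\Treated}\Y{\mopt{i}}^2 \leq \sum_{j\in\Controls}\Yj^2$ is the same step the paper performs by writing $\paren[\big]{\sum_{i\in\mathbf{A}}\abs{\Yi}}^2 \leq 2\nTreated\sum_{i\in\mathbf{A}}\Yi^2$ with $\mathbf{A}=\Treated\cup\Matched$ and then using $\Matched\subseteq\Controls$; both arguments rest entirely on the fact that no control is reused and say nothing about match quality. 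Where you genuinely diverge is the random normalizer. The paper conditions on $\Treated$, extracts $\E{\Ypop^2\given\Wpop=w}\leq\E{\Ypop^2}/\avgpscore$, and is left with $\E{n/\maxf{1,\nTreated}}$, which it controls only asymptotically by splitting on whether $\nTreated$ deviates from $\avgpscore n$ by more than $\sqrt{n\logf{n}}$ and invoking Hoeffding's inequality (reusing $\limsup_{n\to\infty}\E{n/\nTreated\given 1\leq\nTreated\leq\nControls}\leq 2/\avgpscore$ from Lemma~\ref{lem:mdiscrep-bias-step2}). You instead symmetrize over the i.i.d.\ units and resolve the coupling exactly via the reciprocal binomial moment $\E[\big]{(1+B)^{-1}}=\paren{1-(1-\avgpscore)^n}/(n\avgpscore)$ with $B\sim\mathrm{Bin}(n-1,\avgpscore)$ independent of the representative unit, together with $\indicator{B\geq1}/B\leq 2/(1+B)$ for the control part. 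This buys a finite-sample bound rather than a $\limsup$, avoids Hoeffding and truncation entirely, and in fact yields a bound of order $\E{\Ypop^2}/\avgpscore$, sharper than the stated $4\E{\Ypop^2}/\avgpscore^2$. The only loose end is the constant you leave implicit: invoking $\avgpscore<1/2$ to pass from an order-$\E{\Ypop^2}/\avgpscore$ bound to $4\E{\Ypop^2}/\avgpscore^2$ requires that constant to be at most $8$, and with your $\attest^2\leq 4\nTreated^{-1}\sum_{i\in\Sample}\Yi^2$ the computation gives $4\E{\Wpop\Ypop^2}/\avgpscore+8\E{(1-\Wpop)\Ypop^2}/\avgpscore\leq 8\E{\Ypop^2}/\avgpscore\leq 4\E{\Ypop^2}/\avgpscore^2$, so the claim holds but should be written out.
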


Lemma~\ref{lem:consistent-unbiased} states that the expectation of a consistent estimator must concentrate around the parameter it aims to estimate unless its variance grows without limit in the sample size.
This holds for any estimator, and not only for the matching estimator.
The intuition is that consistency implies that only a negligible mass of the estimator's sampling distribution is outside a small neighborhood of the parameter.
If this small mass is enough to affect the estimator's expectation, the mass must move away from the bulk of the sampling distribution, leading to asymptotically unbounded variance.

The lemma implies that an estimator that is known to have asymptotically bounded variance can only be consistent if it is asymptotically unbiased.
That is, it is necessary (but not sufficient) for consistency that the expectation of the estimator concentrates around the parameter.
Lemma~\ref{lem:bounded-var} shows that the variance of the matching estimator is asymptotically bounded, so it is an estimator of this kind.
Of course, our hope is that the variance will approach zero under suitable conditions.
The purpose of Lemma~\ref{lem:bounded-var} is to show that the variance is bounded no matter what these additional conditions might be.

\begin{corollary}\label{coro:mdiscrep-inconsistent}
	\coromdiscrepinconsistent{}
\end{corollary}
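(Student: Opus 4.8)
The plan is to argue by contraposition, splicing together the two preceding lemmas. Since probabilities are nonnegative and, as $\varepsilon \to 0$, the inner limsup is monotone, the limit in the conclusion exists and is nonnegative; hence the negation of the displayed conclusion is exactly
\begin{equation}
	\lim_{\varepsilon \to 0} \limsup_{n \to \infty} \Pr[\big]{\abs{\attest - \att} \geq \varepsilon} = 0.
\end{equation}
I would first show that this is precisely the consistency premise demanded by Lemma~\ref{lem:consistent-unbiased}. The only wrinkle is the appearance of $\limsup_{n\to\infty}$ here against $\lim_{n\to\infty}$ in the lemma. Because $\varepsilon \mapsto \limsup_{n\to\infty}\Pr{\abs{\attest-\att}\geq\varepsilon}$ is nonincreasing, the displayed limit dominates every individual limsup, so each $\limsup_{n\to\infty}\Pr{\abs{\attest-\att}\geq\varepsilon}$ vanishes; nonnegativity then upgrades each to $\lim_{n\to\infty}\Pr{\abs{\attest-\att}\geq\varepsilon}=0$. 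This recovers the antecedent $\lim_{\varepsilon\to0}\lim_{n\to\infty}\Pr{\abs{\attest-\att}\geq\varepsilon}=0$ of Lemma~\ref{lem:consistent-unbiased}, applied with estimator $\attest$ and parameter $\att$.

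Next I would feed this through Lemma~\ref{lem:consistent-unbiased}, obtaining the disjunction that either $\lim_{n\to\infty}\E{\attest} = \att$ or $\limsup_{n\to\infty}\Var{\attest} = \infty$. The second alternative is eliminated by Lemma~\ref{lem:bounded-var}: under Condition~\ref{cond:reg-cond} one has $\limsup_{n\to\infty}\Var{\attest} \leq 4\E{\Ypop^2}/\avgpscore^2 < \infty$, where finiteness rests on the existence of $\E{\Ypop^2}$ and on $\avgpscore > 0$. With the unbounded-variance branch excluded, only $\lim_{n\to\infty}\E{\attest} = \att$ can hold.

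Finally, $\lim_{n\to\infty}\E{\attest} = \att$ is equivalent to $\lim_{n\to\infty}\abs{\E{\attest}-\att} = 0$, which forces $\limsup_{n\to\infty}\abs{\E{\attest}-\att} = 0$ and thus negates the hypothesis of the corollary. Since the failure of the conclusion has been shown to entail the failure of the hypothesis, the contrapositive yields the stated implication. I do not expect a genuine technical obstacle, as the corollary is essentially a logical combination of the two lemmas; the only point that demands care is the bookkeeping between the $\lim$ and $\limsup$ formulations of (in)consistency and between the two horns of the disjunction, making sure that Condition~\ref{cond:reg-cond} is exactly what closes off the divergent-variance escape route.
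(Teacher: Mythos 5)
Your proposal is correct and follows exactly the route the paper intends: the corollary is the contrapositive combination of Lemma~\ref{lem:consistent-unbiased} and Lemma~\ref{lem:bounded-var}, with the $\limsup$-versus-$\lim$ bookkeeping handled by monotonicity and nonnegativity just as you describe. The paper offers no separate written proof, treating the corollary as an immediate consequence of the two lemmas, so there is nothing to add.
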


\subsection{Matching on propensity scores}

Matching on high-dimensional covariates do not tend to perform well.
As shown by \citet{Beyer1999When} and others, random points in high-dimensional spaces tend to be equidistant, which makes their distances uninformative.
Indeed, \citet{Abadie2006Large} show that the rate of convergence of the matching estimator is negatively affected by the dimensionality of the covariates when the matching is done with replacement.
Practitioners using matching should for this reason be motivated to reduce the dimensionality of the covariates before matching.
To achieve this, a function can be applied to each unit's covariate, providing a coarser description of its characteristics.
A matching can then be constructed as above but with the coarsened covariates substituted for the raw covariates.

A concern when coarsening the covariates is that one may lose valuable information; unconfoundedness may not hold for the coarsened covariates even if it does so for the raw covariates.
A way to maintain unconfoundedness is to use a coarsening function that still balances the raw covariates.
Such a function is called a balancing score, and \citet{Rosenbaum1983Central} show that the propensity score, as defined above, is the coarsest balancing score.
A metric based on the propensity scores is therefore an attractive alternative to, for example, Euclidean distances on raw covariates, and it is the focus of this paper.
Matching on other scores, including the raw covariates, is discussed in the supplement.

The propensity score is often not known in advance, and it must then be estimated.
I will, however, disregard the estimation step in this paper, and take the propensity score as known.
This will avoid some technical complications of little interest for the current discussion.
In particular, consistent estimation of the propensity score is generally not possible unless additional assumptions are imposed \citep[see, e.g.,][]{Robins1997Curse}.
By focusing on the setting with a known propensity score, I hope to highlight that the results below are not driven by the challenges in this estimating step.
Put differently, it is not surprising that the matching estimator performs poorly when one fails to estimate the propensity score well, but that is not the point I make in this paper.\footnote{%
	It has been shown that one can improve efficiency by using an estimated propensity score even when the true score is known \citep[see, e.g.,][]{Heckman1998Matching,Hirano2003Efficient,Abadie2016Matching}.
	The intuition is that the estimation can implicitly adjust for chance imbalances between treated and control units in the sample.
	This consideration is not relevant here because the focus is on the asymptotic bias.%
}

Because the propensity score takes a prominent position, dedicated notation will expedite the discussion.
Let $\rPSpop = \pscore{\Xvpop}$ be a random variable describing the distribution of the propensity score in the population.
Similarly, let $\rPSi = \pscore{\Xvi}$ be the propensity score for unit $i$ in the sample.
The metric used for the matching is the absolute difference between units' propensity scores: $\dist{x}{x'} = \abs{\pscore{x} - \pscore{x'}}$.
That is, the optimal matching is the one that minimizes the sum of $\abs[\big]{\rPSi - \rPS{\mgen{i}}}$ over the treated units $i \in \Treated$.

A matching metric must be coupled with a restriction on the potential outcomes to give it meaning.
Matching adjustment is implicitly motivated by an assumption that units that are similar with respect to the matching metric are expected to be similar also with respect to their potential outcomes, but Condition~\ref{cond:reg-cond} only ensures that units with identical covariate values are comparable.
The following condition formalizes this assumption.

\begin{condition}[Continuity]\label{cond:pscore-lipschitz}
	$\Es{\POcpop \given \rPSpop = p}$ is Lipschitz continuous on the support of $\rPSpop$.
\end{condition}

\section{Inconsistency of the matching estimator}

The question at hand is how $\attest$ behaves when the sample is drawn from a population satisfying Conditions~\ref{cond:reg-cond} and~\ref{cond:pscore-lipschitz}.
As noted in the introduction, the reason this question may be beyond reach is that matching without replacement is not asymptotically stable.
This is in contrast to matching with replacement and many other adjustment methods.

When several treated units are allowed to be matched to the same control, small perturbations of the sample will have small effects.
For example, if we were to remove one unit and replace it with a new unit drawn from the population, the only affected units are those matched to the unit that was removed and those close to the unit that replaces it.
Asymptotically, these units will be a negligible fraction of the sample.
This stability makes an analysis tractable, which, for example, facilitated the investigation by \citet{Abadie2006Large}.

The concern when the matching is done without replacement is that such small perturbations can initiate chain effects.
This is because at most one treated unit can be matched to each control.
If we, as above, were to replace one matched control unit with a new unit drawn from the population, then the treated unit that was matched to the replaced unit needs to find a new match.
If this new match was previously matched to another treated unit, that treated unit must also find a new match.
This could potentially start a chain of units forced to find new matches, and the chain could potentially be long.
Thus, replacing just a single unit could induce large changes in the matching.
It might be reasonable to believe that such long chain effects are rare, and the set of matched controls will in any case be more stable than the matching itself.
The instability of the matching nevertheless complicates the analysis.

It turns out that the following lemma provides enough leverage to characterize the matching in large samples.

\begin{definition}\label{def:crossing-matches}
	A matching $\matchsym$ is said to contain crossing matches if
	\begin{equation}
		\maxf[\big]{\rPS{i}, \rPS{\mgen{j}}} < \minf[\big]{\rPS{j}, \rPS{\mgen{i}}}
		\qquad\text{for some}\qquad
		i, j \in \Treated.
	\end{equation}
\end{definition}

\begin{lemma}\label{lem:no-crossing-matches}
	\lemmanocrossingmatches{}
\end{lemma}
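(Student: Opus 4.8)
The plan is to argue by contradiction through a swap (exchange) argument, using that the matching metric is the absolute difference of propensity scores, so all four relevant scores sit on a line, where a crossing is always strictly wasteful. Suppose, contrary to the claim, that the optimal matching $\moptsym$ contains a crossing, so that $\maxf{\rPS{i}, \rPS{\mopt{j}}} < \minf{\rPS{j}, \rPS{\mopt{i}}}$ for some $i, j \in \Treated$. Taking $i = j$ would require $\maxf{\rPS{i}, \rPS{\mopt{i}}} < \minf{\rPS{i}, \rPS{\mopt{i}}}$, which is impossible, so the two treated units are necessarily distinct.

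First I would build a competing matching by untangling the crossing: let $\matchsym'$ agree with $\moptsym$ everywhere except that it reassigns $\matchsym'(i) = \mopt{j}$ and $\matchsym'(j) = \mopt{i}$. Since $\moptsym$ is injective and $i \neq j$, the controls $\mopt{i}$ and $\mopt{j}$ are distinct, so $\matchsym'$ is again an injection from $\Treated$ into $\Controls$ with the same image, and hence $\matchsym' \in \allM$ is admissible. Because the two matchings differ only at $i$ and $j$, the difference in the objective $\sum_{i \in \Treated} \abs{\rPS{i} - \rPS{\mgen{i}}}$ collapses to the two affected pairs.

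The crux is then a short identity on the real line. Writing $a = \rPS{i}$, $b = \rPS{\mopt{i}}$, $c = \rPS{j}$, and $d = \rPS{\mopt{j}}$, the crossing condition $\maxf{a, d} < \minf{b, c}$ forces $a < b$ and $d < c$, so the contribution of the two pairs under $\moptsym$ is $\abs{a - b} + \abs{c - d} = (b - a) + (c - d)$, while under $\matchsym'$ it is $\abs{a - d} + \abs{b - c}$. Replacing each sum and each absolute value by its $\max$/$\min$ decomposition, the mixed terms cancel and yield
\[
	\bracket[\big]{\abs{a - b} + \abs{c - d}} - \bracket[\big]{\abs{a - d} + \abs{b - c}} = 2\bracket[\big]{\minf{b, c} - \maxf{a, d}}.
\]
The right-hand side is exactly twice the crossing gap, which is strictly positive by assumption, so $\matchsym'$ attains a strictly smaller total distance than $\moptsym$. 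This contradicts the optimality of $\moptsym$, and therefore no optimal matching contains a crossing.

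I do not anticipate a substantive obstacle: the argument is elementary, and the only points requiring care are checking that the swapped assignment remains an admissible matching (immediate from injectivity) and verifying the displayed identity. The one conceptual remark worth flagging is that the improvement from untangling is strict, so the conclusion holds for \emph{every} optimal matching regardless of how ties among optimal matchings are broken.
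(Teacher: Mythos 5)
Your proof is correct and takes essentially the same route as the paper: an exchange argument showing that swapping the two matched controls in a crossing strictly decreases the sum of within-match propensity score differences, contradicting optimality. Your exact identity $\bigl[\abs{a-b}+\abs{c-d}\bigr]-\bigl[\abs{a-d}+\abs{b-c}\bigr]=2\bigl[\minf{b,c}-\maxf{a,d}\bigr]$ checks out (using $a+d=\maxf{a,d}+\minf{a,d}$ and $b+c=\maxf{b,c}+\minf{b,c}$) and is a slightly cleaner way to obtain the strict improvement that the paper derives via sign variables and inequalities.
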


The intuition behind the lemma is that the matching objective, the sum of within-match propensity score differences, can be made smaller if two matches are crossing.
We simply need to switch the controls the two treated units are matched to.
An implication of Lemma~\ref{lem:no-crossing-matches} is that if we observe an unmatched control unit with propensity score $p$ in an optimal matching, then we know that no unit with a propensity score greater than $p$ is matched with a unit with a propensity score less than $p$.
The insight provides a way to separate the investigation into two parts, making an analysis tractable.
If there is a point $\PSpoint$ where an unmatched unit exists with high probability asymptotically, then the units to the left and right of $\PSpoint$ can be seen as two unconnected matching problems, which can be analyzed separately.

The point we are looking for is the smallest value $p$ such that at least half of the units in the population with propensity scores greater or equal to $p$ are treated.
In particular, consider $\Pr{\Wpop = 1 \given \rPSpop \geq p}$.
This is the probability that a unit with a propensity score greater or equal to $p$ is treated.
We know that this probability is greater than $p$ because $\Pr{\Wpop = 1 \given \rPSpop = p} = p$.
For example, at least half of the units with $\rPSpop \geq 1 / 2$ are treated, so $\Pr{\Wpop = 1 \given \rPSpop \geq 1 / 2} \geq 1 / 2$.
The point that partitions the matching problem is
\begin{equation}
	\PSpoint = \inf\setb[\big]{p \suchthat \Pr{\Wpop = 1 \given \rPSpop \geq p} \geq 1 / 2}.
\end{equation}
No such point exists when $\Pr{\rPSpop \geq 1 / 2} = 0$.
Let $\PSpoint = 1 / 2$ in that case, so $\PSpoint$ always is defined.

Provided that there are some units in the population with $\rPSpop \geq 1 / 2$, there will be an unmatched unit in a small neighborhood around $\PSpoint$ with high probability in large samples, and this gives us the partition we seek.
The first part consists of units with $\rPSi > \PSpoint$, and the second part consists of units with $\rPSi < \PSpoint$.
The properties of the matching are remarkably different in these two parts.
On the one hand, controls will be scarce above $\PSpoint$, so all units are matched.
This means that there here will be a limit to how much the quality of the matching can improve as the sample grows.
On the other hand, controls are abundant below $\PSpoint$, and the match quality improves without limit here, although it may be at a slow rate.

The argument does not apply to units with $\rPSi = \PSpoint$, and such units complicate the discussion without adding any profound insights.
A simple way to avoid the concern is to assume that there is no atom at $\PSpoint$, so $\Pr{\rPSpop = \PSpoint} = 0$.
The following condition does the same but is slightly weaker.
It effectively says that if there is an atom at $\PSpoint$, then we can consider those units to belong to the group with propensity scores greater than $\PSpoint$.

\begin{condition}\label{cond:left-closed}
	The set $\setb{p \suchthat \Pr{\Wpop = 1 \given \rPSpop \geq p} \geq 1 / 2}$ is left-closed or empty.
\end{condition}

\begin{proposition}\label{prop:pscore-bias}
	\proppscorebias{}
\end{proposition}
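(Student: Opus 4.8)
The plan is to isolate the bias in the classical matching decomposition and then use Lemma~\ref{lem:no-crossing-matches} to reduce the analysis of the (random) matched set to a counting argument around $\PSpoint$. Writing $\POti = \Yi$ and $\PO{\mopt{i}}{0} = \Y{\mopt{i}}$, I would split
\begin{equation}
	\attest = \frac{1}{\nTreated}\sum_{i\in\Treated}\paren[\big]{\POti - \POci} + \frac{1}{\nTreated}\sum_{i\in\Treated}\paren[\big]{\POci - \PO{\mopt{i}}{0}}.
\end{equation}
By the law of large numbers the first average converges to $\Es{\POtpop - \POcpop \given \Wpop = 1} = \att$, so the entire asymptotic bias is carried by the second \emph{matching-discrepancy} term, and the task is to evaluate its expectation in the limit. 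Lemma~\ref{lem:bounded-var}, which rests on $\E{\Ypop^2} < \infty$, supplies the uniform integrability needed to exchange the limit with the expectation throughout.

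Next I would exploit the partition at $\PSpoint$. The key structural fact, to be established from Lemma~\ref{lem:no-crossing-matches}, is that with probability approaching one there is an unmatched control in an arbitrarily small neighborhood of $\PSpoint$; by the no-crossing property, no match then joins a unit above $\PSpoint$ to a unit below it. The definition of $\PSpoint$ together with Condition~\ref{cond:left-closed} gives $\Pr{\Wpop = 1 \given \rPSpop \geq \PSpoint} = 1/2$, so the subpopulation with $\rPSpop \geq \PSpoint$ is balanced between treated and control units to leading order; consequently every control with $\rPSpop \geq \PSpoint$ is matched, and is matched to a treated unit with $\rPSpop \geq \PSpoint$. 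The same balance identity forces a strict local surplus of controls in every upper tail $[p, \PSpoint)$ below the threshold, which—again through no crossing—lets each treated unit there be matched with a vanishing propensity-score gap, $\abs{\rPSi - \rPS{\mopt{i}}} \parrow 0$.

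I would then evaluate the two regions separately. Below $\PSpoint$, unconfoundedness makes $\rPSpop$ a balancing score, so that $\Es{\POcpop \given \Wpop = 1, \rPSpop = p} = \Es{\POcpop \given \Wpop = 0, \rPSpop = p}$; combined with the Lipschitz continuity of Condition~\ref{cond:pscore-lipschitz} and the vanishing gaps, the expected discrepancy from these matches tends to zero. Above $\PSpoint$, the decisive simplification is that the matched controls are \emph{exactly} all controls with $\rPSpop \geq \PSpoint$, so the detailed pairing is irrelevant and the discrepancy collapses to a difference of two sample averages,
\begin{equation}
	\frac{1}{\nTreated}\sum_{\substack{i\in\Treated\\ \rPSi \geq \PSpoint}} \POci - \frac{1}{\nTreated}\sum_{\substack{i\in\Controls\\ \rPSi \geq \PSpoint}} \POci,
\end{equation}
whose expectation converges, by the law of large numbers with $\nTreated / n \parrow \avgpscore$ and $\Pr{\Wpop = 1 \given \rPSpop \geq \PSpoint} = 1/2$, to $\tfrac{1}{2\avgpscore}\Pr{\rPSpop \geq \PSpoint}\bracket[\big]{\Es{\POcpop \given \Wpop = 1, \rPSpop \geq \PSpoint} - \Es{\POcpop \given \Wpop = 0, \rPSpop \geq \PSpoint}}$. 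Adding the two contributions to $\att$ yields the claimed expression.

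The hard part is the unconditional control of the random matching asserted in the second step, since no conditioning on the covariates is permitted. I expect the main work to be (i) proving that an unmatched control exists in a shrinking neighborhood of $\PSpoint$ with probability tending to one, so that the no-crossing separation applies; and (ii) showing that the $\bigOp{\sqrt{n}}$ sampling fluctuation in the treated and control counts above $\PSpoint$, together with any matches that cross within the vanishing boundary window, contribute only $\littleO{1}$ to the averaged discrepancy—this is precisely where the finite second moment is needed to bound the outcomes attached to those $\bigOp{\sqrt{n}}$ units. Any atom at $\PSpoint$ is absorbed by Condition~\ref{cond:left-closed}.
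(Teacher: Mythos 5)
Your proposal follows essentially the same route as the paper's proof: the same reduction of the bias to a matching-discrepancy term, the same partition of the sample at $\PSpoint$ justified by Lemma~\ref{lem:no-crossing-matches} and the balance identity $\Pr{\Wpop = 1 \given \rPSpop \geq \PSpoint} = 1/2$ from Condition~\ref{cond:left-closed}, the same Lipschitz argument below $\PSpoint$, and the same collapse to a difference of two sample averages above $\PSpoint$. The steps you defer as ``the main work''---that all controls above $\PSpoint$ are matched with probability tending to one and that the match gaps below $\PSpoint$ vanish---are precisely what Lemmas~\ref{lem:pscore-mdiscrep-term1}--\ref{lem:pscore-mdiscrep-term2-bin-overflow} establish via Hoeffding bounds and a binning construction; the only cosmetic difference is that the paper replaces the random denominator $\nTreated$ by $\avgpscore n$ (Lemmas~\ref{lem:mdiscrep-bias}--\ref{lem:mdiscrep-bias-step4}) and computes the first term's conditional expectation exactly rather than invoking the law of large numbers with uniform integrability, which sidesteps the need for second moments of $\POcpop$ among the treated.
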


The proposition captures consequences of poor match quality among units with $\rPSi \geq \PSpoint$.
Specifically, it shows that the poor quality translates into bias.
There are two ways to achieve asymptotic unbiasedness, and thereby possibly consistency.

The first option is to assume that there is no confounding among units with $\rPSi \geq \PSpoint$:
\begin{equation}
	\Es[\big]{\POcpop \given \Wpop = 1, \rPSpop \geq \PSpoint} = \Es[\big]{\POcpop \given \Wpop = 0, \rPSpop \geq \PSpoint}.
\end{equation}
Note that these expectations condition on a range of propensity scores rather than an exact score, so Condition~\ref{cond:reg-cond} does not imply that the expectations are equal.
The second option is to assume that there are no units above $\PSpoint$, so that $\Pr{\rPSpop \geq \PSpoint} = 0$.
This is only possible if $\Pr{\rPSpop \geq 1 / 2} = 0$.
This is a strengthening of the overlap assumption in Conditions~\ref{cond:reg-cond}, requiring that $\pscore{x}$ is less than $1 / 2$ almost everywhere on the support of the covariate.
Neither option is attractive.

Some readers may object that the conditions used for Proposition~\ref{prop:pscore-bias} are weaker than those typically used in applications and claim that consistency is achieved under slightly stronger conditions.
However, the proposition holds for all populations satisfying Conditions~\ref{cond:reg-cond},~\ref{cond:pscore-lipschitz} and~\ref{cond:left-closed}.
A strengthening of the conditions will therefore not lead to consistency unless the strengthening is exactly one of the two options discussed in the previous paragraph.
Their disjunction is a necessary condition for consistency in this setting.

\section{Diminishing fraction of treated units}\label{sec:diminishing}

The paper has so far considered settings where the population is fixed throughout the asymptotic sequence.
Matching methods have also been studied in other asymptotic regimes.
For example, in addition to the regime used in this paper, \citet{Abadie2006Large} investigate matching with replacement when the treated units are a diminishing fraction of the population.
They assume that the treated and control units are sampled separately in proportions such that $\nTreated^r / \nControls \to k$ for some $r > 1$ and $k < \infty$.
In other words, they consider the case in which $\avgpscore \to 0$.

The alternative regime is a better approximation of situations in which it is considerably easier to sample additional controls than it is to sample additional treated units.
One example is when a novel medical treatment is evaluated using conventional treatment as comparison.
The only patients with $\Wpop = 1$ are those in the hospitals offering the new treatment.
By virtue of being a novel treatment, such patients are rare.
However, patients with $\Wpop = 0$ are easy to find because there are many hospitals that offer the conventional treatment.
In an imagined sequence of samples, it is here appropriate to assume that the fraction of treated units would approach zero.

\citet{Abadie2012Martingale} study matching without replacement in this regime, and they show that the matching estimator is consistent.
Their main contribution is a characterization of the large sample distribution of the estimator using a martingale representation.
The representation consists of a conditional bias term and a martingale array, of which the latter is the primary focus.
However, to achieve consistency, the conditional bias must be shown to diminish, which the authors demonstrate in an appendix.
This result critically depends on additional conditions on the asymptotic behavior of the population distribution.

In the standard asymptotic regime, all aspects of the population are fixed.
Besides $\avgpscore$, this includes the propensity score and the conditional densities of the treated and control units over the covariate set.
When we let $\avgpscore \to 0$, some of these other aspects must also change.
We cannot simultaneously hold the propensity score and the conditional densities fixed if the fraction of treated units approaches zero.
The route that \citet{Abadie2012Martingale} take is to fix the conditional densities.
The consequence is that the propensity score approaches zero everywhere on the support of the covariate.
That is, they implicitly assume that $\pscore{x} \to 0$ for almost all $x$ on the support of $\Xvpop$.

While it may be reasonable to consider the case $\avgpscore \to 0$, it may not always be reasonable to assume that $\pscore{x} \to 0$ holds everywhere.
The example with the hospitals and the novel treatment provides an illustration.
We can here sample patients with the conventional treatment much easier than we sample patients with the new treatment.
However, all these additional controls will be of a special type.
They will be patients in hospitals offering only the conventional treatment.
Among patients in hospitals offering the new treatment, a non-negligible fraction will be treated, so $\pscore{x} \to 0$ does not hold.

The illustration mirrors a sentiment that appears to be common among practitioners: when controls are abundant, most of them are not useful because they are too different from the treated units.
Put differently, it might be easy to find controls, but it is hard to find controls that are good.
A more appropriate regime might therefore be one that holds the propensity score fixed when $\avgpscore \to 0$, and adjusts the conditional densities as needed.
An inspection of the proof of Proposition~\ref{prop:pscore-bias} suggests that not much would change under this regime, except that the relevant scaling is $\avgpscore n$ rather than $n$.
This means that consistency would require $\Pr{\rPSpop \geq \PSpoint} / \avgpscore \to 0$, unless one assumes that there is no confounding among units above $\PSpoint$.

\section{Discussion}

Practitioners often see consistency as an integral property of an estimator, and the result in this paper is discouraging for matching without replacement.
The picture becomes even grimmer with the realization that the variance of the estimator may converge to zero even if the bias does not.
Confidence intervals based on estimated standard errors would in that case be dangerously misleading.
This may prompt practitioners to reconsider whether matching without replacement is appropriate for their studies.

It would, however, be too rash to categorically dismiss matching without replacement based on these results.
Practitioners may be willing to accept the bias introduced by the method in light of its benefits, such as ease of analysis and a potential reduction in variance.
Furthermore, practitioners often examine the balance between treatment groups after matching.
The asymptotic bias demonstrated above would be mirrored by an imbalance in the propensity scores between the treatment groups.
In other words, to examine whether the concern raised here applies to a specific context, one can estimate $\Pr{\rPSpop \geq 1 / 2}$ and test if it differs from zero.
Practitioners may also use calipers when they construct their matchings \citep{Cochran1973Controlling}.
The approach avoids matches of poor quality by excluding problematic units from the estimation.
Consistency would be achieved, at the cost of efficiency, if the caliper approaches zero as the sample grows, provided that the observations are weighted appropriately to account for the excluded treated units.

\citet{Abadie2011Bias} describe a bias adjustment approach when the matching is done with replacement.
The purpose is to addresses the concern raised by \citet{Abadie2006Large} that the unadjusted matching estimator sometimes converges at a slower than root-$n$ rate.
The adjustment requires estimates of the full response surfaces of the potential outcomes.
If these surfaces can be estimated consistently, a similar bias adjustment approach would address the concern raised in this paper.
However, consistency is then achieved solely because of our ability to estimate the response surfaces, and the matching step becomes largely redundant.
If the response surfaces are presumed to take some particular form, such as when they are estimated by a parametric model, then consistency would require functional form assumptions.

The feature that creates the asymptotic bias is the fact that the matching is done without replacement rather than that it is done using the propensity score.
This paper focused on propensity score matching for expositional reasons.
The supplement extends Proposition~\ref{prop:pscore-bias} to matching without replacement using arbitrary matching scores and metrics.
This includes other balancing scores, various prognostic scores as discussed by \citet{Hansen2008prognostic}, and Euclidean and Mahalanobis distances on raw or transformed covariates.
The form of the bias is similar to the form when using the propensity score, but the interpretation is somewhat more intricate, and additional regularity conditions are needed to derive the bias in closed form.
Nevertheless, the lesson is the same: matching without replacement is not generally consistent.

The focus of this paper was point estimation in the tradition of \citet{Abadie2006Large,Abadie2012Martingale}.
An alternative approach to analyze matched samples is the randomization-based inference framework.
\citet{Rosenbaum2002Observational} provides an overview.
Some features of this framework makes it difficult to judge whether the results in this paper also apply there.
For example, this literature rarely considers point estimators, and when they are considered, they are of the Hodges--Lehmann-type, which differ greatly from the type of estimator considered here.
Inherent to the permutation approach used in this literature is also that the analysis is conditional on the matching.
Still, the concerns raised here should motivate practitioners to be cautious also when working in the randomization-based framework.
Using the terminology of \citet{Rosenbaum2002Observational}, this paper shows that there can be substantial ``overt bias'' even if the sample is large and all conventional matching conditions hold.
Therefore, practitioners should make sure to follow the recommendations by \citet{Rosenbaum2002Observational} and others on how to address overt bias no matter the size of their sample.
Furthermore, practitioners should be wary about theoretical investigations that condition on the matching and assume the quality of the matching will improve without limit as the sample grows.

\begin{singlespace}
\bibliography{part-ref}

\begin{thebibliography}{}

\bibitem [\protect \citeauthoryear {%
Abadie%
\ \BBA {} Imbens%
}{%
Abadie%
\ \BBA {} Imbens%
}{%
{\protect \APACyear {2006}}%
}]{%
Abadie2006Large}
\APACinsertmetastar {%
Abadie2006Large}%
\begin{APACrefauthors}%
Abadie, A.%
\BCBT {}\ \BBA {} Imbens, G\BPBI W.%
\end{APACrefauthors}%
\unskip\
\newblock
\APACrefYearMonthDay{2006}{}{}.
\newblock
{\BBOQ}\APACrefatitle {Large Sample Properties of Matching Estimators for
  Average Treatment Effects} {Large sample properties of matching estimators
  for average treatment effects}.{\BBCQ}
\newblock
\APACjournalVolNumPages{Econometrica}{74}{1}{235--267}.
\newblock
\begin{APACrefDOI} \doi{10.1111/j.1468-0262.2006.00655.x} \end{APACrefDOI}
\PrintBackRefs{\CurrentBib}

\bibitem [\protect \citeauthoryear {%
Abadie%
\ \BBA {} Imbens%
}{%
Abadie%
\ \BBA {} Imbens%
}{%
{\protect \APACyear {2008}}%
}]{%
Abadie2008Failure}
\APACinsertmetastar {%
Abadie2008Failure}%
\begin{APACrefauthors}%
Abadie, A.%
\BCBT {}\ \BBA {} Imbens, G\BPBI W.%
\end{APACrefauthors}%
\unskip\
\newblock
\APACrefYearMonthDay{2008}{}{}.
\newblock
{\BBOQ}\APACrefatitle {On the Failure of the Bootstrap for Matching Estimators}
  {On the failure of the bootstrap for matching estimators}.{\BBCQ}
\newblock
\APACjournalVolNumPages{Econometrica}{76}{6}{1537--1557}.
\newblock
\begin{APACrefDOI} \doi{10.3982/ecta6474} \end{APACrefDOI}
\PrintBackRefs{\CurrentBib}

\bibitem [\protect \citeauthoryear {%
Abadie%
\ \BBA {} Imbens%
}{%
Abadie%
\ \BBA {} Imbens%
}{%
{\protect \APACyear {2011}}%
}]{%
Abadie2011Bias}
\APACinsertmetastar {%
Abadie2011Bias}%
\begin{APACrefauthors}%
Abadie, A.%
\BCBT {}\ \BBA {} Imbens, G\BPBI W.%
\end{APACrefauthors}%
\unskip\
\newblock
\APACrefYearMonthDay{2011}{}{}.
\newblock
{\BBOQ}\APACrefatitle {Bias-corrected Matching Estimators for Average Treatment
  Effects} {Bias-corrected matching estimators for average treatment
  effects}.{\BBCQ}
\newblock
\APACjournalVolNumPages{Journal of Business {\&} Economic
  Statistics}{29}{1}{1--11}.
\newblock
\begin{APACrefDOI} \doi{10.1198/jbes.2009.07333} \end{APACrefDOI}
\PrintBackRefs{\CurrentBib}

\bibitem [\protect \citeauthoryear {%
Abadie%
\ \BBA {} Imbens%
}{%
Abadie%
\ \BBA {} Imbens%
}{%
{\protect \APACyear {2012}}%
}]{%
Abadie2012Martingale}
\APACinsertmetastar {%
Abadie2012Martingale}%
\begin{APACrefauthors}%
Abadie, A.%
\BCBT {}\ \BBA {} Imbens, G\BPBI W.%
\end{APACrefauthors}%
\unskip\
\newblock
\APACrefYearMonthDay{2012}{}{}.
\newblock
{\BBOQ}\APACrefatitle {A Martingale Representation for Matching Estimators} {A
  martingale representation for matching estimators}.{\BBCQ}
\newblock
\APACjournalVolNumPages{Journal of the American Statistical
  Association}{107}{498}{833--843}.
\newblock
\begin{APACrefDOI} \doi{10.1080/01621459.2012.682537} \end{APACrefDOI}
\PrintBackRefs{\CurrentBib}

\bibitem [\protect \citeauthoryear {%
Abadie%
\ \BBA {} Imbens%
}{%
Abadie%
\ \BBA {} Imbens%
}{%
{\protect \APACyear {2016}}%
}]{%
Abadie2016Matching}
\APACinsertmetastar {%
Abadie2016Matching}%
\begin{APACrefauthors}%
Abadie, A.%
\BCBT {}\ \BBA {} Imbens, G\BPBI W.%
\end{APACrefauthors}%
\unskip\
\newblock
\APACrefYearMonthDay{2016}{}{}.
\newblock
{\BBOQ}\APACrefatitle {Matching on the Estimated Propensity Score} {Matching on
  the estimated propensity score}.{\BBCQ}
\newblock
\APACjournalVolNumPages{Econometrica}{84}{2}{781--807}.
\newblock
\begin{APACrefDOI} \doi{10.3982/ecta11293} \end{APACrefDOI}
\PrintBackRefs{\CurrentBib}

\bibitem [\protect \citeauthoryear {%
Beyer%
, Goldstein%
, Ramakrishnan%
\BCBL {}\ \BBA {} Shaft%
}{%
Beyer%
\ \protect \BOthers {.}}{%
{\protect \APACyear {1999}}%
}]{%
Beyer1999When}
\APACinsertmetastar {%
Beyer1999When}%
\begin{APACrefauthors}%
Beyer, K\BPBI S.%
, Goldstein, J.%
, Ramakrishnan, R.%
\BCBL {}\ \BBA {} Shaft, U.%
\end{APACrefauthors}%
\unskip\
\newblock
\APACrefYearMonthDay{1999}{}{}.
\newblock
{\BBOQ}\APACrefatitle {When Is ``Nearest Neighbor'' Meaningful?} {When is
  ``nearest neighbor'' meaningful?}{\BBCQ}
\newblock
\BIn{} \APACrefbtitle {Proceedings of the 7\textsuperscript{th} International
  Conference on Database Theory} {Proceedings of the 7\textsuperscript{th}
  international conference on database theory}\ (\BPGS\ 217--235).
\newblock
\APACaddressPublisher{Berlin}{Springer}.
\newblock
\begin{APACrefDOI} \doi{10.1007/3-540-49257-7_15} \end{APACrefDOI}
\PrintBackRefs{\CurrentBib}

\bibitem [\protect \citeauthoryear {%
Cochran%
\ \BBA {} Rubin%
}{%
Cochran%
\ \BBA {} Rubin%
}{%
{\protect \APACyear {1973}}%
}]{%
Cochran1973Controlling}
\APACinsertmetastar {%
Cochran1973Controlling}%
\begin{APACrefauthors}%
Cochran, W\BPBI G.%
\BCBT {}\ \BBA {} Rubin, D\BPBI B.%
\end{APACrefauthors}%
\unskip\
\newblock
\APACrefYearMonthDay{1973}{}{}.
\newblock
{\BBOQ}\APACrefatitle {Controlling Bias in Observational Studies: A Review}
  {Controlling bias in observational studies: A review}.{\BBCQ}
\newblock
\APACjournalVolNumPages{Sankhy{\={a}}: The Indian Journal of Statistics, Series
  A}{35}{4}{417--446}.
\PrintBackRefs{\CurrentBib}

\bibitem [\protect \citeauthoryear {%
Dehejia%
\ \BBA {} Wahba%
}{%
Dehejia%
\ \BBA {} Wahba%
}{%
{\protect \APACyear {2002}}%
}]{%
Dehejia2002Propensity}
\APACinsertmetastar {%
Dehejia2002Propensity}%
\begin{APACrefauthors}%
Dehejia, R\BPBI H.%
\BCBT {}\ \BBA {} Wahba, S.%
\end{APACrefauthors}%
\unskip\
\newblock
\APACrefYearMonthDay{2002}{}{}.
\newblock
{\BBOQ}\APACrefatitle {Propensity Score-matching Methods for Nonexperimental
  Causal Studies} {Propensity score-matching methods for nonexperimental causal
  studies}.{\BBCQ}
\newblock
\APACjournalVolNumPages{Review of Economics and Statistics}{84}{1}{151--161}.
\newblock
\begin{APACrefDOI} \doi{10.1162/003465302317331982} \end{APACrefDOI}
\PrintBackRefs{\CurrentBib}

\bibitem [\protect \citeauthoryear {%
Hansen%
}{%
Hansen%
}{%
{\protect \APACyear {2008}}%
}]{%
Hansen2008prognostic}
\APACinsertmetastar {%
Hansen2008prognostic}%
\begin{APACrefauthors}%
Hansen, B\BPBI B.%
\end{APACrefauthors}%
\unskip\
\newblock
\APACrefYearMonthDay{2008}{}{}.
\newblock
{\BBOQ}\APACrefatitle {The prognostic analogue of the propensity score} {The
  prognostic analogue of the propensity score}.{\BBCQ}
\newblock
\APACjournalVolNumPages{Biometrika}{95}{2}{481--488}.
\newblock
\begin{APACrefDOI} \doi{10.1093/biomet/asn004} \end{APACrefDOI}
\PrintBackRefs{\CurrentBib}

\bibitem [\protect \citeauthoryear {%
Heckman%
, Ichimura%
\BCBL {}\ \BBA {} Todd%
}{%
Heckman%
\ \protect \BOthers {.}}{%
{\protect \APACyear {1997}}%
}]{%
Heckman1997Matching}
\APACinsertmetastar {%
Heckman1997Matching}%
\begin{APACrefauthors}%
Heckman, J\BPBI J.%
, Ichimura, H.%
\BCBL {}\ \BBA {} Todd, P.%
\end{APACrefauthors}%
\unskip\
\newblock
\APACrefYearMonthDay{1997}{}{}.
\newblock
{\BBOQ}\APACrefatitle {Matching As an Econometric Evaluation Estimator:
  Evidence from Evaluating a Job Training Programme} {Matching as an
  econometric evaluation estimator: Evidence from evaluating a job training
  programme}.{\BBCQ}
\newblock
\APACjournalVolNumPages{Review of Economic Studies}{64}{4}{605--654}.
\newblock
\begin{APACrefDOI} \doi{10.2307/2971733} \end{APACrefDOI}
\PrintBackRefs{\CurrentBib}

\bibitem [\protect \citeauthoryear {%
Heckman%
, Ichimura%
\BCBL {}\ \BBA {} Todd%
}{%
Heckman%
\ \protect \BOthers {.}}{%
{\protect \APACyear {1998}}%
}]{%
Heckman1998Matching}
\APACinsertmetastar {%
Heckman1998Matching}%
\begin{APACrefauthors}%
Heckman, J\BPBI J.%
, Ichimura, H.%
\BCBL {}\ \BBA {} Todd, P.%
\end{APACrefauthors}%
\unskip\
\newblock
\APACrefYearMonthDay{1998}{}{}.
\newblock
{\BBOQ}\APACrefatitle {Matching As an Econometric Evaluation Estimator}
  {Matching as an econometric evaluation estimator}.{\BBCQ}
\newblock
\APACjournalVolNumPages{Review of Economic Studies}{65}{2}{261--294}.
\newblock
\begin{APACrefDOI} \doi{10.1111/1467-937x.00044} \end{APACrefDOI}
\PrintBackRefs{\CurrentBib}

\bibitem [\protect \citeauthoryear {%
Hirano%
, Imbens%
\BCBL {}\ \BBA {} Ridder%
}{%
Hirano%
\ \protect \BOthers {.}}{%
{\protect \APACyear {2003}}%
}]{%
Hirano2003Efficient}
\APACinsertmetastar {%
Hirano2003Efficient}%
\begin{APACrefauthors}%
Hirano, K.%
, Imbens, G\BPBI W.%
\BCBL {}\ \BBA {} Ridder, G.%
\end{APACrefauthors}%
\unskip\
\newblock
\APACrefYearMonthDay{2003}{}{}.
\newblock
{\BBOQ}\APACrefatitle {Efficient Estimation of Average Treatment Effects Using
  the Estimated Propensity Score} {Efficient estimation of average treatment
  effects using the estimated propensity score}.{\BBCQ}
\newblock
\APACjournalVolNumPages{Econometrica}{71}{4}{1161--1189}.
\newblock
\begin{APACrefDOI} \doi{10.1111/1468-0262.00442} \end{APACrefDOI}
\PrintBackRefs{\CurrentBib}

\bibitem [\protect \citeauthoryear {%
Hoeffding%
}{%
Hoeffding%
}{%
{\protect \APACyear {1963}}%
}]{%
Hoeffding1963Probability}
\APACinsertmetastar {%
Hoeffding1963Probability}%
\begin{APACrefauthors}%
Hoeffding, W.%
\end{APACrefauthors}%
\unskip\
\newblock
\APACrefYearMonthDay{1963}{}{}.
\newblock
{\BBOQ}\APACrefatitle {Probability Inequalities for Sums of Bounded Random
  Variables} {Probability inequalities for sums of bounded random
  variables}.{\BBCQ}
\newblock
\APACjournalVolNumPages{Journal of the American Statistical
  Association}{58}{301}{13--30}.
\newblock
\begin{APACrefDOI} \doi{10.1080/01621459.1963.10500830} \end{APACrefDOI}
\PrintBackRefs{\CurrentBib}

\bibitem [\protect \citeauthoryear {%
King%
\ \BBA {} Nielsen%
}{%
King%
\ \BBA {} Nielsen%
}{%
{\protect \APACyear {2019}}%
}]{%
King2019Why}
\APACinsertmetastar {%
King2019Why}%
\begin{APACrefauthors}%
King, G.%
\BCBT {}\ \BBA {} Nielsen, R.%
\end{APACrefauthors}%
\unskip\
\newblock
\APACrefYearMonthDay{2019}{}{}.
\newblock
{\BBOQ}\APACrefatitle {Why Propensity Scores Should Not Be Used for Matching}
  {Why propensity scores should not be used for matching}.{\BBCQ}
\newblock
\APACjournalVolNumPages{Political Analysis}{27}{4}{435--454}.
\newblock
\begin{APACrefDOI} \doi{10.1017/pan.2019.11} \end{APACrefDOI}
\PrintBackRefs{\CurrentBib}

\bibitem [\protect \citeauthoryear {%
Robins%
\ \BBA {} Ritov%
}{%
Robins%
\ \BBA {} Ritov%
}{%
{\protect \APACyear {1997}}%
}]{%
Robins1997Curse}
\APACinsertmetastar {%
Robins1997Curse}%
\begin{APACrefauthors}%
Robins, J\BPBI M.%
\BCBT {}\ \BBA {} Ritov, Y.%
\end{APACrefauthors}%
\unskip\
\newblock
\APACrefYearMonthDay{1997}{}{}.
\newblock
{\BBOQ}\APACrefatitle {Toward a Curse of Dimensionality Appropriate ({CODA})
  Asymptotic Theory for Semi-parametric Models} {Toward a curse of
  dimensionality appropriate ({CODA}) asymptotic theory for semi-parametric
  models}.{\BBCQ}
\newblock
\APACjournalVolNumPages{Statistics in Medicine}{16}{3}{285--319}.
\PrintBackRefs{\CurrentBib}

\bibitem [\protect \citeauthoryear {%
Rosenbaum%
}{%
Rosenbaum%
}{%
{\protect \APACyear {1989}}%
}]{%
Rosenbaum1989Optimal}
\APACinsertmetastar {%
Rosenbaum1989Optimal}%
\begin{APACrefauthors}%
Rosenbaum, P\BPBI R.%
\end{APACrefauthors}%
\unskip\
\newblock
\APACrefYearMonthDay{1989}{}{}.
\newblock
{\BBOQ}\APACrefatitle {Optimal Matching for Observational Studies} {Optimal
  matching for observational studies}.{\BBCQ}
\newblock
\APACjournalVolNumPages{Journal of the American Statistical
  Association}{84}{408}{1024--1032}.
\newblock
\begin{APACrefDOI} \doi{10.1080/01621459.1989.10478868} \end{APACrefDOI}
\PrintBackRefs{\CurrentBib}

\bibitem [\protect \citeauthoryear {%
Rosenbaum%
}{%
Rosenbaum%
}{%
{\protect \APACyear {2002}}%
}]{%
Rosenbaum2002Observational}
\APACinsertmetastar {%
Rosenbaum2002Observational}%
\begin{APACrefauthors}%
Rosenbaum, P\BPBI R.%
\end{APACrefauthors}%
\unskip\
\newblock
\APACrefYear{2002}.
\newblock
\APACrefbtitle {Observational Studies} {Observational studies}.
\newblock
\APACaddressPublisher{New York}{Springer}.
\newblock
\begin{APACrefDOI} \doi{10.1007/978-1-4757-3692-2} \end{APACrefDOI}
\PrintBackRefs{\CurrentBib}

\bibitem [\protect \citeauthoryear {%
Rosenbaum%
\ \BBA {} Rubin%
}{%
Rosenbaum%
\ \BBA {} Rubin%
}{%
{\protect \APACyear {1983}}%
}]{%
Rosenbaum1983Central}
\APACinsertmetastar {%
Rosenbaum1983Central}%
\begin{APACrefauthors}%
Rosenbaum, P\BPBI R.%
\BCBT {}\ \BBA {} Rubin, D\BPBI B.%
\end{APACrefauthors}%
\unskip\
\newblock
\APACrefYearMonthDay{1983}{}{}.
\newblock
{\BBOQ}\APACrefatitle {The Central Role of the Propensity Score in
  Observational Studies for Causal Effects} {The central role of the propensity
  score in observational studies for causal effects}.{\BBCQ}
\newblock
\APACjournalVolNumPages{Biometrika}{70}{1}{41--55}.
\newblock
\begin{APACrefDOI} \doi{10.1093/biomet/70.1.41} \end{APACrefDOI}
\PrintBackRefs{\CurrentBib}

\bibitem [\protect \citeauthoryear {%
Stuart%
}{%
Stuart%
}{%
{\protect \APACyear {2010}}%
}]{%
Stuart2010Matching}
\APACinsertmetastar {%
Stuart2010Matching}%
\begin{APACrefauthors}%
Stuart, E\BPBI A.%
\end{APACrefauthors}%
\unskip\
\newblock
\APACrefYearMonthDay{2010}{}{}.
\newblock
{\BBOQ}\APACrefatitle {Matching Methods for Causal Inference: A Review}
  {Matching methods for causal inference: A review}.{\BBCQ}
\newblock
\APACjournalVolNumPages{Statistical Science}{25}{1}{1--21}.
\newblock
\begin{APACrefDOI} \doi{10.1214/09-sts313} \end{APACrefDOI}
\PrintBackRefs{\CurrentBib}

\end{thebibliography}
\end{singlespace}

\clearpage

\mtcaddpart[Supplement]

\newcommand\resetsuppcounters{%
\setcounter{section}{0}%
\setcounter{table}{0}%
\setcounter{figure}{0}%
\setcounter{equation}{0}%
\setcounter{conjecture}{0}%
\setcounter{corollary}{0}%
\setcounter{lemma}{0}%
\setcounter{proposition}{0}%
\setcounter{theorem}{0}%
\setcounter{assumption}{0}%
\setcounter{condition}{0}%
\setcounter{definition}{0}%
\setcounter{remark}{0}%
\setcounter{example}{0}%
}

\newcommand\updatesuppcounters[1]{%
\renewcommand\thesection{#1\arabic{section}}%
\renewcommand\thetable{#1\arabic{table}}%
\renewcommand\thefigure{#1\arabic{figure}}%
\renewcommand\theequation{#1\arabic{equation}}%
\renewcommand\theconjecture{#1\arabic{conjecture}}%
\renewcommand\thecorollary{#1\arabic{corollary}}%
\renewcommand\thelemma{#1\arabic{lemma}}
\renewcommand\theproposition{#1\arabic{proposition}}
\renewcommand\thetheorem{#1\arabic{theorem}}
\renewcommand\theassumption{#1\arabic{assumption}}
\renewcommand\thecondition{#1\arabic{condition}}
\renewcommand\thedefinition{#1\arabic{definition}}
\renewcommand\theremark{#1\arabic{remark}}
\renewcommand\theexample{#1\arabic{example}}
}

\newcommand\updatesuppanchors[1]{
\renewcommand\theHsection{#1.\arabic{section}}
\renewcommand\theHtable{#1.\arabic{table}}
\renewcommand\theHfigure{#1.\arabic{figure}}
\renewcommand\theHequation{#1.\arabic{equation}}
\renewcommand\theHconjecture{#1.\arabic{conjecture}}
\renewcommand\theHcorollary{#1.\arabic{corollary}}
\renewcommand\theHlemma{#1.\arabic{lemma}}
\renewcommand\theHproposition{#1.\arabic{proposition}}
\renewcommand\theHtheorem{#1.\arabic{theorem}}
\renewcommand\theHassumption{#1.\arabic{assumption}}
\renewcommand\theHcondition{#1.\arabic{condition}}
\renewcommand\theHdefinition{#1.\arabic{definition}}
\renewcommand\theHremark{#1.\arabic{remark}}
\renewcommand\theHexample{#1.\arabic{example}}
}

\resetsuppcounters
\updatesuppanchors{S}
\updatesuppcounters{S}

\begin{center}
	\Huge \textbf{Supplement}
\end{center}

\vspace{1in}

\parttoc

\clearpage

\section{Matching with arbitrary scores and metrics}

\subsection{Extension of main proposition}

\newcommand{\Svpop}{S}
\newcommand{\Sv}[1]{\Svpop_{#1}}
\newcommand{\Svi}{\Sv{i}}
\newcommand{\Svj}{\Sv{j}}

\newcommand{\allS}{\set{S}}
\newcommand{\suppS}{\allS_{\normalfont\textsc{supp}}}
\newcommand{\allSp}{\allS_{+}}
\newcommand{\optS}{\allS^{*}}
\newcommand{\optSc}{\allS^\comp}

\newcommand{\Q}{\set{Q}}
\newcommand{\Qc}{\set{Q}^\comp}

\newcommand{\allSuperS}{\set{A}}

\newcommand{\distSsym}{d_S}
\DeclarePairedDelimiterXPP\distS[2]{\distSsym}{\lparen}{\rparen}{}{#1, #2}

\newcommand{\spoint}{s^\dagger}

The main paper showed that the estimator of the average treatment effect of the treated was not generally consistent when matching without replacement using the propensity score.
That is, the metric $\distsym : \allX \times \allX \to \Reals^+$ used for the matching was the absolute differences in propensity scores: $\dist{x}{x'} = \abs{\pscore{x} - \pscore{x'}}$.
This section extends this result to matching without replacement using arbitrary metrics over arbitrary scores.
In the interest of space, the proof of the extension will be somewhat informal.

Let $s \colon \allX \to \allS$ be a score function summarizing the covariates.
This function could be any type of dimensionality reduction, including the propensity score, other balancing scores and prognostic scores.
The score can be multidimensional, so that $\func{s}{\Xvpop}$ is a vector.
The score can also be the identity function, so that $\Xvpop = \func{s}{\Xvpop}$.
Let $\Svpop = \func{s}{\Xvpop}$ be a random variable describing the distribution of the score in the population.
For convenience, we assume that the score $\Svpop$ is continuously distributed.
Let $\suppS$ be the support of $\Svpop$.

The score is associated with some metric $\distSsym : \allS \times \allS \to \Reals^+$ that captures similarity between different scores.
If the score is scalar, this will often be the absolute difference $\distS{s}{s'} = \abs{s - s'}$.
For multidimensional scores, it might be Euclidean or Mahalanobis distances.
However, the current discussion is not restricted to these particular metrics; it applies to any metric satisfying the conditions below.
Note that the score function and its associated metric induces a pseudometric on the raw covariates: $\dist{x}{x'} = \distS{\func{s}{x}}{\func{s}{x'}}$.

We will extend the standard matching assumptions specified in Conditions~\refmain{cond:reg-cond}~and~\refmain{cond:pscore-lipschitz} in the main paper to the score $\Svpop$ considered here.
This is formalized in the following three conditions.

\begin{condition}\label{cond:reg-cond-arb-score}
	The population satisfies:
	\vspace{-3pt}
	\begin{enumerate}[label=\roman*.,ref=\ref*{cond:reg-cond-arb-score}.\roman*]
		\item Unconfoundedness: $\POcpop$ is conditionally independent of $\Wpop$ given $\Svpop = s$ on $\suppS$.
		\item Overlap: $\Pr{\Wpop = 1 \given \Svpop = s}$ is bounded away from one on $\suppS$.
		\item Existence of treated units: $\Pr{\Wpop = 1}$ is greater than zero.
		\item Abundance of control units: $\Pr{\Wpop = 1}$ is less than one-half.\label{cond:arb-score-abundance-controls}
		\item Well-behaved outcomes: $\E{\Ypop^2}$ exists.
	\end{enumerate}
\end{condition}

\begin{condition}[Continuity]\label{cond:arb-lipschitz}
	$\Es{\POcpop \given \Svpop = s}$ is Lipschitz continuous on $\suppS$ with respect to metric $\distSsym$.
	That is, there exists a constant $c$ such that for all $s,s' \in \suppS$,
	\begin{equation}
		\abs[\big]{\Es{\POcpop \given \Svpop = s} - \Es{\POcpop \given \Svpop = s'}} \leq c \times \distS{s}{s'}.
	\end{equation}
\end{condition}

\begin{condition}[Assignment continuity]\label{cond:assign-lipschitz}
	$\Pr{\Wpop = 1 \given \Svpop = s}$ is Lipschitz continuous on $\suppS$ with respect to metric $\distsym_S$.
\end{condition}

Conditions~\ref{cond:reg-cond-arb-score}~and~\ref{cond:arb-lipschitz} correspond directly to Conditions~\refmain{cond:reg-cond}~and~\refmain{cond:pscore-lipschitz} in the main paper, but there is no condition in the main paper that corresponds to Condition~\ref{cond:assign-lipschitz}.
Indeed, the condition is satisfied by construction when $\Svpop$ is the propensity score, because $\Pr{\Wpop = 1 \given \Svpop = s} = s$ in that case.
Continuity of the conditional assignment probability function does not hold by construction for other scores, and it is therefore explicitly imposed here.

The purpose of Condition~\ref{cond:assign-lipschitz} is to rule out situations where the assignment mechanism is highly fragmented with respect to the score, so that minute perturbations of the score could lead to large changes in the probability of being treated.
Such fragmented assignment mechanisms could facilitate identification using local extrapolation rather than the conventional overlap motivation.
Indeed, consider a scalar, real-valued score $\Svpop$ and an assignment mechanism for which $\Pr{\Wpop = 1 \given \Svpop = s} = 0$ if $s$ is a rational number and $\Pr{\Wpop = 1 \given \Svpop = s} = 1$ otherwise.
Overlap does not hold in this case, but identification of the average treatment effect for the treated is possible because the rationals are dense in the reals, so Condition~\ref{cond:arb-lipschitz} allows us to extrapolate $\Es{\POcpop \given \Svpop = s}$ from the rationals to the remaining real numbers.
Given an appropriate marginal distribution of $\Svpop$, matching without replacement could be consistent in this setting.
Condition~\ref{cond:assign-lipschitz} rules out this alternative motivation for matching, instead focusing the discussion on the conventional setting.
The condition can be weakened at the cost of additional complexity.

To extend the result in the main paper to arbitrary matching scores, we will consider subsets of $\suppS$ defined based on the local probability of being treated.
Let
\begin{equation}
	\allSp = \braces[\Big]{ s \in \suppS : \Pr{\Wpop = 1 \given \Svpop = s} \geq 1/2}
\end{equation}
collect all scores for which the corresponding conditional probability of being treated is one half or greater.
Let $\allSuperS$ collect all supersets of $\allSp$ for which exactly one half of the constituent units are treated.
That is, $\allSuperS$ contains all sets $\Q \subseteq \suppS$ such that
\begin{equation}
\allSp \subseteq \Q
\qquad\text{and}\qquad
\Pr{\Wpop = 1 \given \Svpop \in \Q} = 1/2.
\end{equation}
Condition~\ref{cond:arb-score-abundance-controls} together with continuity of $\Svpop$ ensure that $\allSuperS$ is not empty.

For any $\Q \in \allSuperS$, let $\nu_{\Q,w}$ be the density of the score $\Svpop$ conditional on $\Svpop \in \set{Q}$ and $\Wpop = w$.
That is, if $f$ is the probability density function of $\Svpop$, then
\begin{equation}
	\func{\nu_{\set{Q},w}}{s} = \func[\big]{f}{s \mid \Svpop \in \Q, \Wpop = w}.
\end{equation}
For any $\Q \in \allSuperS$, let $\func{\Gamma}{\Q}$ collect all couplings between $\nu_{\Q,1}$ and $\nu_{\Q,0}$.
That is, $\func{\Gamma}{\Q}$ contains all joint densities over $\Q \times \Q$ such that the two marginal densities of each $\gamma \in \func{\Gamma}{\Q}$ equal $\nu_{\Q,1}$ and $\nu_{\Q,0}$, respectively.
This allows us to define the Wasserstein distance of order one between $\nu_{\Q,1}$ and $\nu_{\Q,0}$:
\begin{equation}
	\func{\mathcal{W}}{\Q} = \inf_{\gamma \in \func{\Gamma}{\Q}} \int_{\Q \times \Q} \distS{s}{s'} \mathrm{d} \gamma.
\end{equation}
We will consider a lower bound on a weighted version of the Wasserstein distance in the set $\allSuperS$:
\begin{equation}
	\mathcal{W}^* = \inf_{\Q \in \allSuperS} \Pr{\Svpop \in \Q \given \Wpop = 1} \times \func{\mathcal{W}}{\Q}.
\end{equation}
The weight $\Pr{\Svpop \in \Q \given \Wpop = 1}$ gives the proportion of treated units that are in $\Q$.

\begin{condition}[Existence of minimum]\label{cond:arb-left-closed}
	There exists some $\optS \in \allSuperS$ that attains the infimum, so that $\Pr{\Svpop \in \optS \given \Wpop = 1} \times \func{\mathcal{W}}{\optS} = \mathcal{W}^*$.
\end{condition}

Condition~\ref{cond:arb-left-closed} is akin to Condition~\refmain{cond:left-closed} in the main paper, which allowed us to disregard situations where the boundary of the partition needed to be split between the two parts of the partition.
Similar to Condition~\refmain{cond:left-closed}, it is possible to remove this assumption, at the cost of considerably technical complexity.
As this added complexity would bring no interesting insights, we proceed under the assumption that $\optS$ exists.

We are now ready to state the extension of Proposition~\refmain{prop:pscore-bias}.

\begin{proposition}\label{prop:extended-main-res}
	Given Conditions~\ref{cond:reg-cond-arb-score}, \ref{cond:arb-lipschitz}, \ref{cond:assign-lipschitz} and \ref{cond:arb-left-closed}, when the matching is constructed without replacement using the score $\Svpop$ and metric $\distSsym$,
	\begin{equation}
		\lim_{n \to \infty} \E{\attest} = \att + \frac{\Pr{\Svpop \in \optS}}{2\avgpscore} \bracket[\Big]{\Es[\big]{\POcpop \given \Wpop = 1, \Svpop \in \optS} - \Es[\big]{\POcpop \given \Wpop = 0, \Svpop \in \optS}}.
	\end{equation}
\end{proposition}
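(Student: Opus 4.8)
The plan is to localize the asymptotic bias to the term involving the matched controls and then to determine, for a typical treated unit, the score of the control it is paired with. Write $g(s) = \Es{\POcpop \given \Svpop = s}$ for the control regression function. Since $\Yi = \POti$ for a treated unit and $\Y{\mopti} = \PO{\mopti}{0}$ for its matched control, the treated-outcome average $\nTreated^{-1}\sum_{i \in \Treated}\POti$ converges to $\Es{\POtpop \given \Wpop = 1}$ by the law of large numbers, so the entire discrepancy $\lim_{n \to \infty}\E{\attest} - \att$ is carried by the matched-control average. Using unconfoundedness (Condition~\ref{cond:reg-cond-arb-score}) to replace realized control outcomes by $g$ together with a law of large numbers, the bias reduces to
\begin{equation}
	\Es[\big]{\POcpop \given \Wpop = 1} - \lim_{n \to \infty}\E[\Big]{\frac{1}{\nTreated}\sum_{i \in \Treated} g\paren{\Sv{\mopti}}}.
\end{equation}
It thus remains to describe the empirical distribution of matched-control scores $\braces{\Sv{\mopti} : i \in \Treated}$.

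First I would relate the finite-sample optimal matching to optimal transport. The objective $\sum_{i \in \Treated}\distS{\Svi}{\Sv{\mopti}}$ is the cost of a partial assignment transporting the empirical treated score measure onto a submeasure of the empirical control score measure. As $n \to \infty$ these empirical measures converge to the population conditional score laws $\mu_1$ and $\mu_0$, of masses $\avgpscore$ and $1-\avgpscore$, and I would invoke convergence of the empirical partial-transport cost to its population value. The structural claim at the heart of the argument is that the limiting transport pairs all units within a region $\optS$ internally, while in the complement the abundant controls absorb the treated locally and the surplus controls go unmatched. On $\allSp$, where $\Pr{\Wpop = 1 \given \Svpop = s} \geq 1/2$, treated units locally outnumber controls and cannot be absorbed locally, which forces internal pairing; where the conditional assignment probability is below $1/2$, controls are locally abundant and the surplus is discarded. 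An internally paired region carries equal treated and control mass, hence lies in $\allSuperS$, and optimality of the matching forces the selection of the balanced superset of $\allSp$ with least transport cost, which is exactly the Wasserstein-minimizing set $\optS$ of Condition~\ref{cond:arb-left-closed}. Condition~\ref{cond:assign-lipschitz} enters precisely here, ruling out a fragmented assignment mechanism that would prevent this clean inside/outside dichotomy from forming.

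With the partition established, I would split $\braces{\Sv{\mopti}}$ according to whether $\Svi \in \optS$. For treated units outside $\optS$ the match distances $\distS{\Svi}{\Sv{\mopti}}$ vanish asymptotically, so by Lipschitz continuity of $g$ (Condition~\ref{cond:arb-lipschitz}) the quantity $\nTreated^{-1}\sum_{\Svi \notin \optS}[g(\Svi) - g(\Sv{\mopti})]$ is $o(1)$ and contributes no bias. For treated units inside $\optS$ the matched controls are asymptotically exactly the controls inside $\optS$, since that region is internally paired; thus the average of $g(\Sv{\mopti})$ over these units converges to $\Es{\POcpop \given \Wpop = 0, \Svpop \in \optS}$, while the corresponding treated counterfactual averages to $\Es{\POcpop \given \Wpop = 1, \Svpop \in \optS}$. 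Using $\Pr{\Wpop = 1 \given \Svpop \in \optS} = 1/2$, the fraction of treated units falling in $\optS$ is $\Pr{\Svpop \in \optS}/(2\avgpscore)$, and multiplying it by the difference of these two conditional means yields the stated bias. Notably the Wasserstein distance itself does not appear in the final expression: it serves only to identify the partition $\optS$, because the bias depends on the internal coupling exclusively through the average outcome of the matched controls, which is invariant to how mass is transported within $\optS$.

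The main obstacle is the structural claim in the second step, namely that the empirical matching localizes to the $\optS$-internal transport. This is the general-metric analog of Lemma~\refmain{lem:no-crossing-matches}, which in the scalar case made the partition immediate through an ordering argument; with an arbitrary metric there is no order to exploit, and, as discussed in the main text, the matching can propagate long chain effects under small perturbations, so its stochastic behavior is delicate. I would control this by combining the greater stability of the set of matched controls, relative to the matching itself, with a quantitative rate for empirical optimal transport, and by arguing through an exchange inequality that any asymptotically non-negligible flow of matches across the boundary of $\optS$ would contradict optimality. The remaining ingredients, the two laws of large numbers and the Lipschitz estimate, are routine.
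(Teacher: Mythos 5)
Your proposal follows essentially the same route as the paper's own (explicitly informal) sketch: the same partition of $\suppS$ into a balanced superset $\optS$ of $\allSp$ and its complement, the same optimal-transport characterization of the within-$\optS$ matching with the Wasserstein criterion serving only to select the partition, and the same bias decomposition using $\Pr{\Wpop = 1 \given \Svpop \in \optS} = 1/2$ and the fact that all controls in $\optS$ are matched while matches in the complement become infinitesimally close. The obstacle you flag --- showing that the empirical matching converges to the population transport and localizes to $\optS$ --- is precisely the step the paper also leaves unproved here, noting that the rigorous sample-to-population argument is carried out only for the propensity-score case in the proof of Proposition~\refmain{prop:pscore-bias}.
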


\begin{proof}[Sketch of proof]
The proof uses a partitioning argument similar to the one used in the main paper.
Consider constructing a matching in the population.
The set $\allSp$ is the region of $\suppS$ where, at every point, there are at least as many treated units as control units.
Unless $\Pr{\Wpop = 1 \given \Svpop = s} = 1/2$ for all $s \in \allSp$, we have that $\Pr{\Wpop = 1 \given \Svpop \in \allSp} > 1/2$.
That is, there will be an abundance of treated units in the region $\allSp$.

Because each treated unit must be matched to a unique control unit, there will not be enough control units in $\allSp$.
In other words, an abundance of treated units in $\allSp$ means that some of these units must be matched to control units outside of $\allSp$.
In particular, a share
\begin{equation}
	\frac{\Pr{\Wpop = 1 \given \Svpop \in \allSp} - \Pr{\Wpop = 0 \given \Svpop \in \allSp}}{\Pr{\Wpop = 1 \given \Svpop \in \allSp}}
\end{equation}
of the treated units in $\allSp$ will be matched to control units outside of $\allSp$.

\newcommand{\allSc}{\allS_{c}}

Consider which control units these excess treated units will be matched to.
Let $\allSc \subset \suppS$ be some set that does not overlap with $\allSp$, so that $\allSp \cap \allSc = \emptyset$.
Because $\allSp$ contains all points with $\Pr{\Wpop = 1 \given \Svpop = s} \geq 1/2$, the set $\allSc$ must contain points for which $\Pr{\Wpop = 1 \given \Svpop = s} < 1/2$.
Hence, there will be an abundance of control units in $\allSc$.
Naively, we could take all control units in $\allSc$ and match with the excess treated units in $\allSp$ to address the problem that there is an abundance of treated units in $\allSp$.
If there are sufficiently many control units in $\allSc$, so that
\begin{equation}
	\Pr{\Wpop = 0, \Svpop \in \allSc} \geq \Pr{\Wpop = 1, \Svpop \in \allSp} - \Pr{\Wpop = 0, \Svpop \in \allSp},
\end{equation}
then we can match all treated units in $\allSp$ with control units in either $\allSp$ or $\allSc$.

However, unless there are only control units in $\allSc$, so that $\Pr{\Wpop = 1 \given \Svpop \in \allSc} = 0$, this naive approach will potentially leave treated units in $\allSc$ unmatched.
Consider the union $\Q = \allSp \cup \allSc$.
Following the naive approach, this union may still contain an abundance of treated units, because the treated units in $\allSc$ must also be matched, but they cannot be matched with control units in $\allSc$ if all those units have been matched with treated units in $\allSp$.
Thus, to properly address the problem that there is an abundance of treated units in $\allSp$, we need to find a set $\allSc$ such that the union $\Q = \allSp \cup \allSc$ consists of exactly half treated units:
\begin{equation}
\Pr{\Wpop = 1 \given \Svpop \in \Q} = 1/2.
\end{equation}
The set $\allSuperS$ defined above collects all such unions $\Q$.
In order words, $\allSuperS$ collects all possible solutions to the problem that there is an abundance of treated units in $\allSp$.

We will now break the overall matching problem into two steps.
The first step is to find a partition of $\suppS$ into sets $\Q$ and $\Qc$, such that $\Q \in \allSuperS$.
The second step is to find the optimal matching within each of the two sets.
If there are no matches in the optimal matching bridging $\Q$ and $\Qc$, so that no unit in $\Q$ is matched with a unit in its complement $\Qc = \suppS \setminus \Q$, then we can consider these two matching problems separately.

\newcommand{\allSuperSalt}{\set{B}}

For this approach to work, we must verify that the optimal solution can be partitioned in this way.
To show this, let $\allSuperSalt$ collect all sets $\Q \subseteq \suppS$ satisfying three properties.
The first property is that $\allSp$ is a subset of $\Q$.
The second property is that at most half of the units in $\Q$ are treated: $\Pr{\Wpop = 1 \given \Svpop \in \Q} \leq 1/2$.
The third property is that no unit in $\Q$ is matched to a unit in $\Qc$ in the optimal matching solution.
Note that $\allSuperSalt$ is non-empty; for example, Condition~\ref{cond:arb-score-abundance-controls} ensures that $\suppS \in \allSuperSalt$.
We will show that $\allSuperSalt$ contains one set $\Q$ with $\Pr{\Wpop = 1 \given \Svpop \in \Q} = 1/2$, which implies $\Q \in \allSuperS$, as desired.

Consider some partition $\Q \in \allSuperSalt$ for which $\Pr{\Wpop = 1 \given \Svpop \in \Q} < 1/2$.
We will show that there is a set $\Q' \subset \Q$ such that $\Q' \in \allSuperSalt$.
Because $\Pr{\Wpop = 1 \given \Svpop \in \Q} < 1/2$, there will be a region $\set{R} \in \Q$ with unmatched control units in an infinitesimally small neighborhood around every point $s \in \set{R}$.
Let $M$ denote whether a control unit is matched.
That is, $M = 1$ denotes a control unit that is matched with a treated unit; for example, $\Pr{M = 0 \given \Wpop = 0}$ is the share of control units that are not matched.
The region $\set{R} \in \Q$ is such that $\Pr{M = 0, \Wpop = 0 \given \Svpop = s} > 0$ for all $s \in \set{R}$.

Condition~\ref{cond:assign-lipschitz} ensures that there exists a $\set{R}$ that is connected.
Note that all treated units in $\set{R}$ are matched with control units infinitesimally close to them.
If they were not, we could improve the matching by changing their matches to infinitesimally close unmatched control units, which we know exists by construction of $\set{R}$.
Hence, if any control units in $\set{R}$ are matched to treated units that are not infinitesimally close, they must be matched to units outside of $\set{R}$.
Consider the share of control units in $\set{R}$ that are matched with treated units outside of $\set{R}$:
\begin{equation}
\Pr{M = 1, \Svpop \in \set{R}} - \Pr{\Wpop = 1, \Svpop \in \set{R}}.
\end{equation}
Note that this difference cannot be negative, because all treated units in $\set{R}$ are matched with control units in $\set{R}$.
We will now show that the difference cannot be positive.

If the difference is positive, consider a subset $\set{K} \subset \set{R}$ such that
\begin{equation}
\Pr{M = 0, \Wpop = 0, \Svpop \in \set{R}}
= \Pr{\Wpop = 0, \Svpop \in \set{K}} - \Pr{\Wpop = 1, \Svpop \in \set{K}}.
\end{equation}
The left hand side is the share of unmatched control units in $\set{R}$.
The right hand side is the share of excess control units in $\set{K}$; that is, the leftover control units after all treated units in $\set{K}$ have been matched with control units in $\set{K}$.

Because $\set{R}$ is connected and all matched control units in $\set{R}$ are matched with treated units outside of $\set{R}$ if they are not matched with treated units that are not infinitesimally close, we can pick $\set{K}$ to be the region farthest from the excess treated units outside of $\set{R}$.
For example, if $\Svpop$ is uniformly distributed, $\set{R} = \braces{s' : \distS{s'}{s} \leq r}$ is a ball centered at $s$ with $r$ as radius, $\Pr{\Wpop = 1 \given \Svpop = s}$ is constant in $\set{R}$, and the treated units outside of $\set{R}$ that are matched with control units in $\set{R}$ are evenly distributed around $\set{R}$, then $\set{K} = \braces{s' : \distS{s'}{s} \leq r'}$ will be a ball centered at $s$ with some radius $r' < r$.
Because $\set{K}$ is farthest from the treated units outside of $\set{R}$, no control unit in $\set{K}$ will be matched to treated units outside of $\set{K}$, so
\begin{equation}
\Pr{M = 1, \Svpop \in \set{K}} = \Pr{\Wpop = 1, \Svpop \in \set{K}}.
\end{equation}
But this would imply that all control units that are in $\set{R} \setminus \set{K}$ are matched:
\begin{equation}
\Pr{\Wpop = 0, \Svpop \in \set{R} \setminus \set{K}} = \Pr{M = 1, \Svpop \in \set{R} \setminus \set{K}}.
\end{equation}
To see this, write
\begin{align}
	\Pr{\Wpop = 0, \Svpop \in \set{R} \setminus \set{K}} &= \Pr{\Wpop = 0, \Svpop \in \set{R}} - \Pr{\Wpop = 0, \Svpop \in \set{K}},
	\\
	\Pr{M = 1, \Svpop \in \set{R} \setminus \set{K}} &= \Pr{M = 1, \Svpop \in \set{R}} - \Pr{M = 1, \Svpop \in \set{K}},
\end{align}
so that
\begin{multline}
\Pr{\Wpop = 0, \Svpop \in \set{R} \setminus \set{K}} - \Pr{M = 1, \Svpop \in \set{R} \setminus \set{K}}
\\
= \Pr{\Wpop = 0, \Svpop \in \set{R}} - \Pr{M = 1, \Svpop \in \set{R}} + \Pr{M = 1, \Svpop \in \set{K}} - \Pr{\Wpop = 0, \Svpop \in \set{K}}.
\end{multline}
Note that
\begin{equation}
\Pr{\Wpop = 0, \Svpop \in \set{R}} - \Pr{M = 1, \Svpop \in \set{R}} = \Pr{M = 0, \Wpop = 0, \Svpop \in \set{R}},
\end{equation}
which by construction of $\set{K}$ is equal to
\begin{equation}
\Pr{\Wpop = 0, \Svpop \in \set{K}} - \Pr{\Wpop = 1, \Svpop \in \set{K}}.
\end{equation}
Furthermore, $\set{K}$ was constructed so that
\begin{equation}
\Pr{M = 1, \Svpop \in \set{K}} = \Pr{\Wpop = 1, \Svpop \in \set{K}}.
\end{equation}
It then follows that
\begin{multline}
\Pr{\Wpop = 0, \Svpop \in \set{R} \setminus \set{K}} - \Pr{M = 1, \Svpop \in \set{R} \setminus \set{K}}
\\
= \Pr{\Wpop = 0, \Svpop \in \set{K}} - \Pr{\Wpop = 1, \Svpop \in \set{K}} + \Pr{\Wpop = 1, \Svpop \in \set{K}} - \Pr{\Wpop = 0, \Svpop \in \set{K}},
\end{multline}
which is zero.

We have concluded that if
\begin{equation}
\Pr{M = 1, \Svpop \in \set{R}} - \Pr{\Wpop = 1, \Svpop \in \set{R}} > 0,
\end{equation}
then
\begin{equation}
\Pr{\Wpop = 0, \Svpop \in \set{R} \setminus \set{K}} = \Pr{M = 1, \Svpop \in \set{R} \setminus \set{K}}.
\end{equation}
However, this contradicts the definition of $\set{R}$, which stipulated that there were unmatched control units in an infinitesimally small neighborhood around every point $s \in \set{R}$.
The only possibility is therefore that $\set{K} = \set{R}$, but $\set{K}$ was such that
\begin{equation}
\Pr{M = 1, \Svpop \in \set{K}} = \Pr{\Wpop = 1, \Svpop \in \set{K}},
\end{equation}
which again contradicts
\begin{equation}
\Pr{M = 1, \Svpop \in \set{R}} - \Pr{\Wpop = 1, \Svpop \in \set{R}} > 0,
\end{equation}
because $\set{K} = \set{R}$.

We have now showed that for any $\Q \in \allSuperSalt$ such that $\Pr{\Wpop = 1 \given \Svpop \in \Q} < 1/2$, there exists a $\set{R}\in \Q$ with the properties stipulated above and
\begin{equation}
\Pr{M = 1, \Svpop \in \set{R}} = \Pr{\Wpop = 1, \Svpop \in \set{R}}.
\end{equation}
In other words, no unit in $\set{R}$ is matched with a unit in $\Q' = \Q \setminus \set{R}$.
We will now consider $\Q'$ in lieu of $\Q$.

Recall that $\set{R}$ contains an excess of control units.
Therefore, when we remove $\set{R}$ from $\Q$ to construct $\Q'$, we have
\begin{equation}
\Pr{\Wpop = 1 \given \Svpop \in \Q} < \Pr{\Wpop = 1 \given \Svpop \in \Q'},
\end{equation}
and because no treated unit in $\Q'$ is matched with a control unit outside of $\Q'$, we also have
\begin{equation}
\Pr{\Wpop = 1 \given \Svpop \in \Q'} \leq 1/2.
\end{equation}
It follows that $\Q' \in \allSuperSalt$.
Continuity of $\Svpop$ implies that we can apply this procedure recursively until we reach a $\Q'$ such that
\begin{equation}
\Pr{\Wpop = 1 \given \Svpop \in \Q'} = 1/2,
\end{equation}
which implies that $\Q' \in \allSuperSalt$ and $\Q' \in \allSuperS$, as desired.

We have now showed that we can reformulate the overall matching problem in the population into two parts.
The first part is to find a partition of $\suppS$ into sets $\Q$ and $\Qc$, such that $\Q \in \allSuperS$, and the second part is to find an optimal matching within each of the sets $\Q$ and $\Qc$ separately.
We will consider this problem in reverse.

Starting with the second step, we can use transport theory to characterize the optimal matching in $\Q$.
We use a bijection between the treated and control units in $\Q$ to describe the matching between the units.
For infinite populations, we can describe such a matching as a coupling between the conditional distributions of the score within $\Q$.
Let $\func{\nu_{\set{Q},w}}{s} = \func{f}{s \mid \Svpop \in \Q, \Wpop = w}$ be the conditional density of $\Svpop$ for units in $\Q$ that are assigned treatment $\Wpop = w$.
Consider a density $\gamma$ over $\Q \times \Q$ with the property that its marginals are equal to $\nu_{\set{Q},1}$ and $\nu_{\set{Q},0}$.
The matching objective for the optimal matching within $\Q$ is therefore given by
\begin{equation}
	\func{\mathcal{W}}{\Q} = \inf_{\gamma \in \func{\Gamma}{\Q}} \int_{\Q \times \Q} \distS{s}{s'} \mathrm{d} \gamma.
\end{equation}

It is straightforward to characterize the optimal matching in $\Qc = \suppS \setminus \Q$.
Recall that no control unit in this region is matched with any treated unit in $\Q$.
Furthermore, because $\allSp \subseteq \Q$, we have that $\Pr{\Wpop = 1 \given \Svpop = s} < 1/2$ for all $s \in \Qc$.
That is, at every point in $\Qc$, there is an abundance of control units in a small neighborhood around that point.
This means that all treated units in $\Qc$ can be matched to a control unit that is infinitesimally close with respect to the score.
For this reason, the optimal matching in $\Qc$ will attain a value of the matching objective that is zero.

Continuing with the first step, which was to find the partition into $\Q$ and $\Qc$, the overall matching objective is the weighted sum of the matching objective within $\Q$, with weight $\Pr{\Svpop \in \Q \given \Wpop = 1}$, and the matching objective within $\Qc$, with weight $\Pr{\Svpop \not\in \Q \given \Wpop = 1}$.
In other words, the overall matching objective is
\begin{equation}
	\Pr{\Svpop \in \Q \given \Wpop = 1} \times \func{\mathcal{W}}{\Q} + \Pr{\Svpop \not\in \Q \given \Wpop = 1} \times 0,
\end{equation}
which is lower bounded in $\allSuperS$ by
\begin{equation}
	\mathcal{W}^* = \inf_{\Q \in \allSuperS} \Pr{\Svpop \in \Q \given \Wpop = 1} \times \func{\mathcal{W}}{\Q}.
\end{equation}
Condition~\ref{cond:arb-left-closed} states that this lower bound is attainable by some $\optS \in \allSuperS$:
\begin{equation}
	\Pr{\Svpop \in \optS \given \Wpop = 1} \times \func{\mathcal{W}}{\optS} = \mathcal{W}^*.
\end{equation}
This means that $\optS$ is the partition of the optimal matching.

Note that the partition of $\suppS$ into $\optS$ and $\optSc = \suppS \setminus \optS$ resembles the partition in the main paper.
In particular, all control units in $\optS$ are matched with treated units, while there is an abundance of control units at every point in $\optSc$.
The behavior of the matching is therefore radically different in $\optS$ and $\optSc$.
We can use this fact to characterize the bias.

As above, let $M$ denote whether a control unit is matched.
The population version of the matching estimator considered in the main paper is
\begin{equation}
	\E{\Ypop \given \Wpop = 1} - \E{\Ypop \given M = 1} = \Es{\POtpop \given \Wpop = 1} - \Es{\POcpop \given M = 1},
\end{equation}
which gives a population bias with respect to $\att = \Es{\POtpop - \POcpop\given \Wpop = 1}$ equal to
\begin{equation}
	\Es{\POcpop\given \Wpop = 1} - \Es{\POcpop \given M = 1}.
\end{equation}
Use the law of total expectation and the partition of $\suppS$ into $\optS$ and $\optSc$ to decompose the terms of the bias:
\begin{align}
	\Es{\POcpop\given \Wpop = 1} &= \Pr{\Svpop \in \optS \given \Wpop = 1} \Es[\big]{\POcpop\given \Wpop = 1, \Svpop \in \optS}
	\\
										  &\qquad\qquad+ \Pr{\Svpop \in \optSc \given \Wpop = 1} \Es[\big]{\POcpop\given \Wpop = 1, \Svpop \in \optSc},
										  \\[0.5em]
	\Es{\POcpop \given M = 1} &= \Pr{\Svpop \in \optS \given M = 1} \Es[\big]{\POcpop\given M = 1, \Svpop \in \optS}
	\\
										  &\qquad\qquad+ \Pr{\Svpop \in \optSc \given M = 1} \Es[\big]{\POcpop\given M = 1, \Svpop \in \optSc}.
\end{align}
The fact that no unit in $\optS$ is matched with a unit in $\optSc$ implies that
\begin{equation}
	\Pr{\Svpop \in \optS \given \Wpop = 1} = \Pr{\Svpop \in \optS \given M = 1}
\end{equation}
and
\begin{equation}
	\Pr{\Svpop \in \optSc \given \Wpop = 1} = \Pr{\Svpop \in \optSc \given M = 1}.
\end{equation}
We can therefore write the bias as
\begin{multline}
	\Pr{\Svpop \in \optS \given \Wpop = 1} \bracket[\Big]{ \Es[\big]{\POcpop\given \Wpop = 1, \Svpop \in \optS} - \Es[\big]{\POcpop\given M = 1, \Svpop \in \optS} },
\\
+
	\Pr{\Svpop \in \optSc \given \Wpop = 1} \bracket[\Big]{ \Es[\big]{\POcpop\given \Wpop = 1, \Svpop \in \optSc} - \Es[\big]{\POcpop\given M = 1, \Svpop \in \optSc} }.
\end{multline}

Recall that each treated unit in $\optSc$ will be matched to a control unit that is infinitesimally close.
This fact together with Condition~\ref{cond:arb-lipschitz} implies that
\begin{equation}
	\Es[\big]{\POcpop\given \Wpop = 1, \Svpop \in \optSc} = \Es[\big]{\POcpop\given M = 1, \Svpop \in \optSc},
\end{equation}
so the second term of the bias is zero.
Also recall that all control units in $\optS$ are matched, which implies that
\begin{equation}
	\Es[\big]{\POcpop\given M = 1, \Svpop \in \optS} = \Es[\big]{\POcpop\given \Wpop = 0, \Svpop \in \optS}.
\end{equation}
Finally, use Bayes' rule to write
\begin{equation}
	\Pr{\Svpop \in \optS \given \Wpop = 1}
	= \frac{\Pr{\Svpop \in \optS} \Pr{\Wpop = 1 \given \Svpop \in \optS }}{\Pr{\Wpop = 1}}
	= \frac{\Pr{\Svpop \in \optS}}{2 \avgpscore},
\end{equation}
which follows from $\Pr{\Wpop = 1 \given \Svpop \in \optS } = 1/2$ and $\Pr{\Wpop = 1} = \avgpscore$.

Taken together, we have showed that the bias in the population is
\begin{equation}
	\frac{\Pr{\Svpop \in \optS}}{2 \avgpscore} \bracket[\Big]{ \Es[\big]{\POcpop\given \Wpop = 1, \Svpop \in \optS} - \Es[\big]{\POcpop\given \Wpop = 0, \Svpop \in \optS} },
\end{equation}
which is the statement in the proposition.
\end{proof}

As mentioned in the beginning of this section, the proof of Proposition~\ref{prop:extended-main-res} is informal compared to the proof of the result in the main paper presented below.
The most notable omission is that the proof here solely considers matching in the population.
To make the proof rigorous and complete, one would need to show that the sequence of matchings in the samples convergences in an appropriate sense to the population matching.
The bulk of the proof of Proposition~\refmain{prop:pscore-bias} consists of showing this type of convergence, and it has been omitted in the proof of the extension in the interest of space.

\subsection{Example: Prognostic score}

This section provides an example that illustrates the result in the previous section.
The focus is when the score used for to construct the matching is a prognostic score as described by \citet{Hansen2008prognostic}.

Consider a three-dimensional covariate $\Xvpop = \paren{\Xvpop_1, \Xvpop_2, \Xvpop_3}$, for which each coordinate is uniformly distributed on $[0, 1]$ and independent of the other two.
Let the propensity score be the product of the first and second covariates when the second covariate is taken to the power of $a \geq 1/3$:
\begin{equation}
	\pscore{x} = \Pr{\Wpop = 1 \given \Xvpop = x} = x_1x_2^a.
\end{equation}
Consider a score function $s \colon \allX \to \allS$ given by $\func{s}{x} = x_1 + x_3$, so that $\Svpop = \Xvpop_1 + \Xvpop_3$ with probability one.
The density of $\Svpop$ is $\func{f}{s} = s$ when $s \leq 1$, and $\func{f}{s} = 2 - s$ when $s > 1$.
Note that $\suppS = [0, 2]$.
The metric for the score is the absolute difference: $\distS{s}{s'} = \abs{s - s'}$.

Let the potential outcomes be distributed as
\begin{equation}
	\POcpop = \Svpop^2 + \varepsilon_0
	\qquad\text{and}\qquad
	\POtpop = 5/2 + \varepsilon_1,
\end{equation}
where $\varepsilon_0$ and $\varepsilon_1$ are independent standard normal variates.
This means that the average treatment effect of the treated is one, $\att = 1$, irrespectively of $a$.
Note that $\Svpop$ is a prognostic score: $\POcpop \indep \Xvpop \mid \Svpop$.

We will now verify that this data generating process satisfies Conditions~\ref{cond:reg-cond-arb-score},~\ref{cond:arb-lipschitz} and \ref{cond:assign-lipschitz}.
Unconfoundedness follows directly from the fact that $\Svpop$ is a prognostic score.
To verify overlap, consider the probability of being treated given the score:
\begin{equation}
	\Pr{\Wpop = 1 \given \Svpop = s}
	= \E{\Wpop \given \Svpop = s}
	= \E{\Xvpop_1\Xvpop_2^a \given \Svpop = s}
	= \E{\Xvpop_2^a} \E{\Xvpop_1 \given \Svpop = s},
\end{equation}
where the first equality follows from binary $\Wpop$, the second equality follows from the definition of the propensity score, and the third equality follows from that $\Xvpop_1$, $\Xvpop_2$ and $\Xvpop_3$ are mutually independent.
Note that
\begin{equation}
	\E{\Xvpop_2^a}
	= \int_0^1 u^a \diff u = \frac{1}{a + 1}.
\end{equation}
To derive $\E{\Xvpop_1 \given \Svpop = s}$, note that $\Xvpop_1 = \Svpop - \Xvpop_3$ with probability one, so we can write
\begin{equation}
	\E{\Xvpop_1 \given \Svpop = s}
	= \E{\Svpop - \Xvpop_3 \given \Svpop = s}
	= s - \E{\Xvpop_3 \given \Svpop = s}.
\end{equation}
We have $\E{\Xvpop_1 \given \Svpop = s} = \E{\Xvpop_3 \given \Svpop = s}$ because of symmetry, so $\E{\Xvpop_1 \given \Svpop = s} = s / 2$.
Taken together, we have
\begin{equation}
	\Pr{\Wpop = 1 \given \Svpop = s}
	= \E{\Xvpop_2^a} \E{\Xvpop_1 \given \Svpop = s}
	= \frac{s}{2a + 2}.
\end{equation}
Note that $\Pr{\Wpop = 1 \given \Svpop = s}$ is increasing in $s$, and that $\Svpop$ takes its highest value when $\Xvpop_1 = \Xvpop_3 = 1$, in which case $\Svpop = 2$.
The conditional probability $\Pr{\Wpop = 1 \given \Svpop = s}$ is therefore at most $1 / \paren{ a + 1 }$, which is bounded away from one as long as $a > 0$.
We have thus verified the overlap part of Condition~\ref{cond:reg-cond-arb-score}.

To verify the existence of treated units and abundance of control units, note that
\begin{equation}
	\Pr{\Wpop = 1}
	= \Es[\big]{\E{\Wpop \given \Svpop}}
	= \E[\bigg]{\frac{\Svpop}{2a + 2}}
	= \frac{1}{2a + 2},
\end{equation}
where the final equality follows from that $\E{\Svpop} = \E{\Xvpop_1} + \E{\Xvpop_3} = 2 \times 0.5$.
Hence, $\Pr{\Wpop = 1} > 0$ as long as $a < \infty$, and $\Pr{\Wpop = 1} < 1/2$ as long as $a > 0$.
The fifth part of Condition~\ref{cond:reg-cond-arb-score}, which is that $\E{\Ypop^2}$ exists, is satisfied by the fact that $\abs{\Svpop}$ is bounded by two.
Condition~\ref{cond:arb-lipschitz} is satisfied by independence of $\varepsilon_0$, which implies that $\Es{\POcpop \given \Svpop = s} = s^2$.
Condition~\ref{cond:assign-lipschitz} is satisfied by the fact that $\Pr{\Wpop = 1 \given \Svpop = s}$ here is a linear function.

We can conclude that this data generating process satisfies Conditions~\ref{cond:reg-cond-arb-score},~\ref{cond:arb-lipschitz} and \ref{cond:assign-lipschitz}.
This means that Proposition~\ref{prop:extended-main-res} applies, and we have
\begin{equation}
	\lim_{n \to \infty} \E{\attest} - \att = \frac{\Pr{\Svpop \in \optS}}{2\avgpscore} \bracket[\Big]{\Es[\big]{\POcpop \given \Wpop = 1, \Svpop \in \optS} - \Es[\big]{\POcpop \given \Wpop = 0, \Svpop \in \optS}}.
\end{equation}

We proceed by deriving this bias term.
The first step is to find the sets $\allSp$ and $\optS$.
Recall that
\begin{equation}
	\allSp = \braces[\Big]{ s \in \suppS : \Pr{\Wpop = 1 \given \Svpop = s} \geq 1/2},
\end{equation}
which for the current data generating process is $\allSp = [a + 1, 2]$.
If $a \geq 1$, then $\allSp$ has no mass.

Following the argument in the proof of Proposition~\ref{prop:extended-main-res}, we have $\optS = \bracket{b, 2}$, where $b$ is such that $\Pr{\Wpop = 1 \given \Svpop \in \optS} = 1/2$.
Note that
\begin{equation}
	\Pr{\Wpop = 1 \given \Svpop \in \bracket{b, 2}}
	= \Es[\big]{\E{\Wpop \given \Svpop} \given \Svpop \in \bracket{b, 2}}
	= \frac{1}{\Pr{\Svpop \in \bracket{b, 2}}} \int_b^2 \frac{s \func{f}{s}}{2a + 2} \diff s,
\end{equation}
where $\func{f}{s}$ is the density of $\Svpop$.
Calculate the integrals:
\begin{equation}
	\int_b^2 s \func{f}{s} \diff s
	= \begin{cases}
		\paren{b - 2}^2 \paren{b + 1} / 3 & \text{if } b \geq 1,
		\\
		\paren{3 - b^3} / 3 & \text{if } b < 1,
	\end{cases}
\end{equation}
and
\begin{equation}
	\Pr{\Svpop \in \bracket{b, 2}}
	= \int_b^2 \func{f}{s} \diff s
	= \begin{cases}
		\paren{b - 2}^2 / 2 & \text{if } b \geq 1,
		\\
		\paren{2 - b^2} / 2 & \text{if } b < 1.
	\end{cases}
\end{equation}
Hence, when $b \geq 1$,
\begin{equation}
	\Pr{\Wpop = 1 \given \Svpop \in \bracket{b, 2}}
	= \frac{b + 1}{3a + 3},
\end{equation}
and we attain $\Pr{\Wpop = 1 \given \Svpop \in \bracket{b, 2}} = 1/2$ with $b = \paren{3a + 1} / 2$, provided that $a \geq 1/3$.
Therefore, $\optS = \bracket{ \paren{3a + 1} / 2, 2 }$.

Finally, we need to calculate the difference between the conditional expectations of the potential outcomes in $\optS$:
\begin{equation}
\Es[\big]{\POcpop \given \Wpop = 1, \Svpop \in \optS} - \Es[\big]{\POcpop \given \Wpop = 0, \Svpop \in \optS}.
\end{equation}
Start by noting that
\begin{equation}
	2 \Es[\big]{\POcpop \given \Svpop \in \optS}
	= \Es[\big]{\POcpop \given \Wpop = 1, \Svpop \in \optS}
	+ \Es[\big]{\POcpop \given \Wpop = 0, \Svpop \in \optS}
\end{equation}
because $\Pr{\Wpop = 1 \given \Svpop \in \optS} = \Pr{\Wpop = 0 \given \Svpop \in \optS} = 1/2$.
This means that we can write
\begin{multline}
\Es[\big]{\POcpop \given \Wpop = 1, \Svpop \in \optS} - \Es[\big]{\POcpop \given \Wpop = 0, \Svpop \in \optS}
\\
= 2\Es[\big]{\POcpop \given \Wpop = 1, \Svpop \in \optS} - 2 \Es[\big]{\POcpop \given \Svpop \in \optS}.
\end{multline}
Starting with the expectation for the treated units, note
\begin{multline}
	\Es[\big]{\POcpop \given \Wpop = 1, \Svpop \in \optS}
	= \Et[\Big]{\Es[\big]{\POcpop \given \Svpop} \given \Wpop = 1, \Svpop \in \optS}
	\\
	= \E[\big]{\Svpop^2 \given \Wpop = 1, \Svpop \in \optS}
	= \int_b^2 s^2 \func{f}{s \mid \Wpop = 1, \Svpop \in \optS} \diff s,
\end{multline}
where $\func{f}{s \mid \Wpop = 1, \Svpop \in \optS}$ is the conditional density of $\Svpop$ given $\Wpop = 1$ and $\Svpop \in \optS$.
Using Bayes' rule, we can write
\begin{equation}
	\func{f}{s \mid \Wpop = 1, \Svpop \in \optS}
	= \frac{\func{f}{s} \Pr{\Wpop = 1, \Svpop \in \optS \given \Svpop = s}}{\Pr{\Wpop = 1, \Svpop \in \optS}}.
\end{equation}
As noted above, for $s \geq 1$, we have $\func{f}{s} = 2 - s$.
Furthermore,
\begin{equation}
\Pr{\Wpop = 1, \Svpop \in \optS \given \Svpop = s} = \Pr{\Wpop = 1 \given \Svpop = s}
\end{equation}
whenever $s \geq \paren{3a + 1} / 2$.
Recall that $\Pr{\Wpop = 1 \given \Svpop = s} = s / \paren{2a + 2}$.
When $a \geq 1/3$, so that $\optS = \bracket{ \paren{3a + 1} / 2, 2 }$, then
\begin{equation}
	\Pr{\Wpop = 1, \Svpop \in \optS}
	= \frac{9\paren{a - 1}^2}{16}.
\end{equation}
Hence, the conditional density is
\begin{equation}
	\func{f}{s \mid \Wpop = 1, \Svpop \in \optS}
	= \frac{8 s \paren{2 - s}}{9\paren{a + 1}\paren{a - 1}^2},
\end{equation}
and the condition expectation is
\begin{equation}
	\Es[\big]{\POcpop \given \Wpop = 1, \Svpop \in \optS}
	= \int_{\paren{3a + 1} / 2}^2 \frac{8 s^3 \paren{2 - s}}{9\paren{a + 1}\paren{a - 1}^2} \diff s
	= \frac{27 a^3 + 54 a^2 + 51 a + 28}{20 a + 20}.
\end{equation}

Next consider the conditional expectation for all units in $\optS$.
By a similar argument as above, when $a \geq 1/3$, so that $\optS = \bracket{ \paren{3a + 1} / 2, 2 }$, then
\begin{equation}
	\Es[\big]{\POcpop \given \Svpop \in \optS}
	= \frac{1}{\Pr{\Svpop \in \optS}} \int_{\paren{3a + 1} / 2}^2 s^2 \paren{2 - s} \diff s
	= \frac{9 a^2 + 14 a + 9}{8},
\end{equation}
which implies that
\begin{equation}
\Es[\big]{\POcpop \given \Wpop = 1, \Svpop \in \optS} - \Es[\big]{\POcpop \given \Wpop = 0, \Svpop \in \optS}
=
\frac{\paren{a - 1}^2 \paren{9 a + 11}}{20 a + 20}.
\end{equation}

Finally, when $1/3 \leq a \leq 1$, note that
\begin{equation}
	\frac{\Pr{\Svpop \in \optS}}{2\avgpscore} = \frac{9 \paren{a - 1}^2 \paren{a + 1} }{8}.
\end{equation}
Taken together, this gives
\begin{equation}
	\lim_{n \to \infty} \E{\attest} - \att
	= \braces[\bigg]{\frac{9 \paren{a - 1}^2 \paren{a + 1} }{8}} \times \braces[\bigg]{ \frac{\paren{a - 1}^2 \paren{9 a + 11}}{20 a + 20} }
	= \frac{9 \paren{a - 1}^4 \paren{9 a + 11}}{160}.
\end{equation}

Hence, we have $\lim_{n \to \infty} \E{\attest} - \att = 0.1556$ when $a = 1/3$, which is a substantial bias relative to the treatment effect $\att = 1$.
The asymptotic bias becomes smaller as $a$ grows.
For example, we have $\lim_{n \to \infty} \E{\attest} - \att = 0.0804$ when $a = 4/9$.
When $a = 1$, we have $\Pr{\Svpop \in \optS} = 0$, so $\lim_{n \to \infty} \E{\attest} - \att = 0$.

To confirm these theoretical results, I ran a small Monte Carlo study with the current data generating process and the three values of $a$ mentioned in the previous paragraph.
I used sample sizes ranging from one hundred to one million units, and ten thousand simulation rounds were drawn for each setting.
To facilitate the large sample sizes, I used an approximate version of optimal matching.
The error introduced by this approximation algorithm, compared to true optimal matching, is small relative to the bias introduced by the fact that the matching was done without replacement.
The error stemming from the approximation algorithm also converges to zero as the sample grows, unlike the bias from without replacement matching.

Table~\ref{tab:mc-results} reports the results.
The first two columns specify the simulation setting.
The third column ``Asymp.\ bias'' gives the asymptotic bias as predicted by the theory.
The fourth column ``Emp.\ bias'' presents the empirical bias in the Monte Carlo simulation.
The empirical bias is considerably larger than the asymptotic bias for smaller sample sizes, but the empirical bias approaches the asymptotic bias as the sample grows.
For samples with one million units, the empirical and asymptotic biases are the same up to the fourth decimal.
The fifth column ``Emp.\ SE'' presents the empirical standard error of the estimator in the Monte Carlo simulation.
We see that the standard error approaches zero as the sample grows no matter the value of $a$.
Hence, the estimator is convergent in this setting, but it converges to the wrong limit.

\begin{table}
\centering
\caption{Monte Carlo results for prognostic score example}\label{tab:mc-results}
\begin{tabular}{rrrrr}
$a$   & $n$ & Asymp.\ bias & Emp.\ bias & Emp.\ SE \\ \toprule
$1/3$ & $    100$ & $0.1556$ & $0.2666$ & $0.2708$ \\
$1/3$ & $   1000$ & $0.1556$ & $0.1660$ & $0.0866$ \\
$1/3$ & $  10000$ & $0.1556$ & $0.1567$ & $0.0274$ \\
$1/3$ & $ 100000$ & $0.1556$ & $0.1558$ & $0.0088$ \\
$1/3$ & $1000000$ & $0.1556$ & $0.1556$ & $0.0027$ \\[0.3em]

$4/9$ & $    100$ & $0.0804$ & $0.2132$ & $0.2752$ \\
$4/9$ & $   1000$ & $0.0804$ & $0.0981$ & $0.0868$ \\
$4/9$ & $  10000$ & $0.0804$ & $0.0819$ & $0.0277$ \\
$4/9$ & $ 100000$ & $0.0804$ & $0.0804$ & $0.0088$ \\
$4/9$ & $1000000$ & $0.0804$ & $0.0804$ & $0.0028$ \\[0.3em]

$1  $ & $    100$ & $0.0000$ & $0.1094$ & $0.3315$ \\
$1  $ & $   1000$ & $0.0000$ & $0.0184$ & $0.1036$ \\
$1  $ & $  10000$ & $0.0000$ & $0.0026$ & $0.0330$ \\
$1  $ & $ 100000$ & $0.0000$ & $0.0003$ & $0.0104$ \\
$1  $ & $1000000$ & $0.0000$ & $0.0000$ & $0.0033$ \\
\bottomrule
\end{tabular}
\end{table}

\section[Additional remarks]{Additional remarks}

\subsection*{Estimation of average treatment effects}

The main paper considered estimation of the average treatment effect of the treated units (ATT).
Practitioners are in some cases interested in the overall average treatment effect (ATE).
However, matching without replacement is generally not used to estimate this effect.
The reason is that to estimate the overall average treatment effect, one would need to match all treated units with control units, and then match all control units with treated units.
But if these matchings are to be constructed without replacement, there must be an equal number of treated units and control units in the sample.
This is rarely the case, and it will then not be possible to construct a matching without replacement.
Even if it is mechanically possible to construct such a matching, the resulting estimator would coincide with the unadjusted difference in average outcomes between the two treatment groups.
Hence, the approach implicitly presumes that the assignment mechanism is unconfounded without adjustment.
The conclusion is that matching without replacement cannot be used to estimate the overall average treatment effect (ATE).

\subsection*{Relation to criticisms of the use of propensity score}

Propensity score matching has recently been criticized by \citet{King2019Why}.
Using an analogy with experimental design, the authors argue that propensity score matching fail to emulate blocked randomized experiments, unlike some other matching methods.
This argument is unrelated to the discussion in the main paper.
Indeed, because the propensity score is the coarsest balancing score, matching on any other balancing score (including the raw covariates) would only aggravate the concerns highlighted here, so the recommendations by \citet{King2019Why} would not be a solution.
A possible solution would be to match on a score that is coarser than the propensity score, but such a score cannot be a balancing score, so this approach would run the risk of not fully adjust for the confounding, and may fail to achieve consistency for that reason.

\subsection*{Variance estimation}

The main paper solely considers point estimation.
However, the investigation suggests a possible approach to variance estimation.
Recall that $\PSpoint$ partitions the sample into two groups based on the propensity scores.
Because the matching behaves vastly differently in these two groups, we can use the partition to estimate the variance.
In particular, asymptotically, all control units with $\rPSpop \geq \PSpoint$ will be matched, so we can simply estimate the variance for these units, disregarding the matching.
Furthermore, for units with $\rPSpop < \PSpoint$, the matching will asymptotically behave as if it was with replacement, so variance estimation techniques used for matching with replacement can be used here \citep[see, e.g.,][]{Abadie2006Large}.
Combining the variance estimators in the two parts of the partition would produce a variance estimator for the overall sample.
This may therefore be an alternative to bootstrap estimators, which have been shown to not perform well for matching \citep{Abadie2008Failure}.

However, it is questionable how useful such a variance estimator would be given the bias exhibited by the point estimator.
This bias may prompt practitioners not to use matching without replacement in the first place, in which case they are in no need of a variance estimator.
Even if they decided to use matching without replacement, the variance estimator would not capture the bias, so confidence intervals and hypothesis tests based on the variance estimator would be dangerously misleading, and should therefore not be used in practice.

\subsection*{Implications for weighting estimators}

The matching estimator discussed in the main paper can be reinterpreted as a weighting estimator.
In particular, for a matching $\mgensym : \Treated \to \Controls$, we can define weights for the control units as
\begin{equation}
	\nu_i = \sum_{j \in \Treated} \indicator{\mgenj = i},
\end{equation}
which allows us to write the matching estimator as
\begin{equation}
	\attest = \frac{1}{\nTreated} \sum_{i \in \Treated} \Yi - \frac{1}{\nTreated} \sum_{i \in \Controls} \nu_i \Yi.
\end{equation}
Matching without replacement imposes restrictions on these weights.
The first restriction is that $\nu_i \in \braces{0, 1}$, which ensures that each control is matched with at most one treated unit.
The second restriction is that $\sum_i^n \nu_i = \nTreated$, which ensures that all treated units are matched.

Using this representation of the estimator, it is possible to generalize the insights of this paper to other weighting estimators, such as kernel estimators.
The feature of matching without replacement that introduces bias is that the weights $\nu_i$ are restricted to $\braces{0, 1}$.
This means that we are prevented to upweight units in sparse regions of the covariate space to the degree we would like.
Compare this with matching with replacement, for which the weights are restricted to $\braces{0, 1, \dotsc, \nTreated}$.
While this also restricts us to integer weights, we are free to put as much weight as we would like on any one unit.

This suggests that similar issues would arise for other weighting estimators if they impose strict upper limits on the weights.
However, commonly used weighting estimators, such as the kernel estimator by \citet*{Heckman1997Matching,Heckman1998Matching}, do not impose such restrictions, so they should not exhibit the type of bias demonstrated in the main paper.

This representation suggests a way to mitigate the bias without necessary fully adopt matching with replacement.
If we relax the restriction on the weights so they can take values in $\braces{0, 1, \dotsc, k}$ for some $k < \nTreated$, we effectively allow more than one treated unit to be matched to the same control unit, although there cannot be more than $k$ units matched to the same unit.
This will facilitate consistency under weaker conditions than those in the main paper.
For example, if we impose $\nu_i \in \braces{0, 1, 2}$, it will be enough to assume that $\pscore{x} \leq 2/3$ on the support of $\Xvpop$, rather than $\pscore{x} \leq 1/2$.

\section{Miscellaneous definitions and propositions}

\begin{definition}\label{def:rPS-support}
	Let $\suppPS$ be the support of $\rPSpop$.
\end{definition}

\begin{definition}\label{def:rPS-lower-upper-support}
	Let $\suppPSm = \inf \suppPS$ and $\suppPSp = \sup \suppPS$.
\end{definition}

\begin{definition}\label{def:cond-exp-PO-given-PS}
	Let $\EPOPS{z}{p} = \E{\POpop{z} \given \rPSpop = p}$ be the conditional expectation of the potential outcome for treatment $z$ given propensity score $p$.
\end{definition}

\newcommand{\hoeffdingrv}{X}

\begin{theorem}\label{thm:hoeffdings-inequality}
	Let $\hoeffdingrv_1, \hoeffdingrv_2, \dotsc, \hoeffdingrv_n$ be $n$ independent random variables such that $0 \leq \hoeffdingrv_i \leq 1$ with probability one.
	Let $\bar{\hoeffdingrv} = \cavgin \hoeffdingrv_i$ and $\mu = \E{\bar{\hoeffdingrv}}$.
	For $0 < t < 1 - \mu$ and $0 < s < \mu$:
	\begin{equation}
		\Pr{\bar{\hoeffdingrv} - \mu \geq t} \leq \expf{-2nt^2} \qquad\text{and}\qquad \Pr{\bar{\hoeffdingrv} - \mu \leq -s} \leq \expf{-2ns^2}.
	\end{equation}
\end{theorem}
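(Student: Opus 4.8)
The plan is to prove the upper-tail bound by the standard exponential (Chernoff) method and then deduce the lower-tail bound by symmetry. Write $Z_i = \hoeffdingrv_i - \E{\hoeffdingrv_i}$, so each $Z_i$ has mean zero and takes values in an interval of length at most one. For any $\lambda > 0$, Markov's inequality applied to the exponential of the centered sum gives
\begin{equation}
	\Pr{\bar{\hoeffdingrv} - \mu \geq t} = \Pr[\Big]{\sum_{i=1}^{n} Z_i \geq nt} \leq \expf{-\lambda n t} \, \E[\Big]{\expf[\Big]{\lambda \sum_{i=1}^{n} Z_i}}.
\end{equation}
By independence the moment generating function of the sum factors as $\prod_{i=1}^{n} \E{\expf{\lambda Z_i}}$, so it suffices to bound each factor uniformly.

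The crux is \emph{Hoeffding's lemma}: for a mean-zero random variable $Z$ supported on $[a,b]$ and any $\lambda$, one has $\E{\expf{\lambda Z}} \leq \expf{\lambda^2 (b-a)^2 / 8}$. I would prove this by bounding $\expf{\lambda Z}$ above by the chord of the convex map $z \mapsto \expf{\lambda z}$ through the endpoints $a$ and $b$, taking expectations (the mean-zero assumption kills the resulting linear term), and rewriting the bound as $\expf{\func{\psi}{u}}$ with $u = \lambda(b-a)$ and $\func{\psi}{u} = -p u + \logf{1 - p + p\,\expf{u}}$ for $p = -a/(b-a)$. A direct computation gives $\func{\psi}{0} = \func{\psi'}{0} = 0$, while $\func{\psi''}{u}$ equals $q(1-q)$ for some $q \in [0,1]$ and is therefore at most $1/4$; Taylor's theorem then yields $\func{\psi}{u} \leq u^2/8$. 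Since $0 \leq \hoeffdingrv_i \leq 1$ forces $b - a \leq 1$, this delivers $\E{\expf{\lambda Z_i}} \leq \expf{\lambda^2 / 8}$.

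Combining the factorization with Hoeffding's lemma gives $\Pr{\bar{\hoeffdingrv} - \mu \geq t} \leq \expf{-\lambda n t + n \lambda^2 / 8}$ for every $\lambda > 0$. Minimizing the exponent over $\lambda$ yields the minimizer $\lambda = 4t$ and the optimized exponent $-2 n t^2$, which is the first claimed inequality. For the lower tail I would apply the upper-tail result to the variables $1 - \hoeffdingrv_i$, which again lie in $[0,1]$ and whose sample mean has expectation $1 - \mu$; under this substitution the event $\braces{\bar{\hoeffdingrv} - \mu \leq -s}$ becomes an upper deviation of magnitude $s$, so the same bound $\expf{-2 n s^2}$ follows.

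The main obstacle is the bound $\func{\psi''}{u} \leq 1/4$ inside Hoeffding's lemma; everything else is routine. The point is that $\func{\psi''}{u}$ is the variance of a two-point (Bernoulli) distribution, which is maximized at $1/4$, and it is exactly this constant that propagates through the optimization over $\lambda$ to produce the factor $2$ in the exponent.
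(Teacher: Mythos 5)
Your proposal is correct, and for the lower-tail bound it coincides exactly with the paper: both you and the author deduce $\Pr{\bar{\hoeffdingrv} - \mu \leq -s} \leq \expf{-2ns^2}$ by applying the upper-tail bound to $1 - \hoeffdingrv_i$, checking that the hypothesis $0 < s < \mu$ transforms into the required condition for the complemented variables. The difference is in the upper tail: the paper does not prove it at all but simply cites it as Theorem~1 of \citet{Hoeffding1963Probability}, whereas you supply the standard self-contained argument --- Markov's inequality applied to the exponential moment, factorization by independence, Hoeffding's lemma via the chord bound on the convex exponential and the second-derivative bound $\func{\psi''}{u} \leq 1/4$, and optimization over $\lambda$ giving $\lambda = 4t$ and exponent $-2nt^2$. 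All of these steps check out (in particular the range condition $b - a \leq 1$ follows from $0 \leq \hoeffdingrv_i \leq 1$, and the optimized exponent is indeed $-4nt^2 + 2nt^2 = -2nt^2$). What your route buys is independence from the external reference; what the paper's route buys is brevity, which is appropriate here since the result is classical and is used in the paper only as a tool.
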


\begin{proof}
The first inequality is Theorem~1 in \citet{Hoeffding1963Probability}.
For the second inequality:
\begin{equation}
	\Pr{\bar{\hoeffdingrv} - \mu \leq -s}
	=
	\Pr{-\bar{\hoeffdingrv} + \mu \geq s}
	=
	\Pr[\big]{\paren{1 - \bar{\hoeffdingrv}} - \paren{1 - \mu} \geq s},
\end{equation}
which is bounded by $\expf{-2ns^2}$ when $0 < s < 1 - \paren{1 - \mu}$ using the first inequality.
\end{proof}

\begin{theorem}\label{thm:pscore-balancing}
	Under Condition~\refmain{cond:reg-cond}, $\POcpop$ is conditionally independent of $\Wpop$ given $\rPSpop = p$ on the support of $\rPSpop$.
\end{theorem}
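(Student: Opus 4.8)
The plan is to reproduce the balancing property of the propensity score established by \citet{Rosenbaum1983Central}, specialized to the single potential outcome $\POcpop$. Since $\Wpop$ is binary, conditional independence $\POcpop \indep \Wpop \mid \rPSpop$ is equivalent to the statement that the conditional law of $\Wpop$ given $(\POcpop, \rPSpop)$ coincides with its conditional law given $\rPSpop$ alone, and for a binary variable this in turn reduces to equality of the conditional means $\E{\Wpop \given \POcpop, \rPSpop}$ and $\E{\Wpop \given \rPSpop}$. So the whole proof amounts to evaluating these two conditional expectations and checking that the first does not depend on $\POcpop$.

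First I would compute $\E{\Wpop \given \rPSpop}$. Because $\rPSpop = \pscore{\Xvpop}$ is a function of $\Xvpop$, the tower property gives $\E{\Wpop \given \rPSpop} = \Es{\E{\Wpop \given \Xvpop} \given \rPSpop} = \Es{\rPSpop \given \rPSpop} = \rPSpop$, where $\E{\Wpop \given \Xvpop} = \pscore{\Xvpop} = \rPSpop$ holds by the very definition of the propensity score. Hence $\Pr{\Wpop = 1 \given \rPSpop = p} = p$ on the support of $\rPSpop$.

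Next I would compute $\E{\Wpop \given \POcpop, \rPSpop}$. Since $\rPSpop$ is $\Xvpop$-measurable, the sigma-algebra generated by $(\POcpop, \rPSpop)$ is coarser than that generated by $(\POcpop, \Xvpop)$, so the tower property yields $\E{\Wpop \given \POcpop, \rPSpop} = \Es{\E{\Wpop \given \POcpop, \Xvpop} \given \POcpop, \rPSpop}$. The unconfoundedness part of Condition~\refmain{cond:reg-cond} states $\POcpop \indep \Wpop \mid \Xvpop$ on the support of $\Xvpop$, so the inner expectation collapses to $\E{\Wpop \given \Xvpop} = \rPSpop$. Therefore $\E{\Wpop \given \POcpop, \rPSpop} = \Es{\rPSpop \given \POcpop, \rPSpop} = \rPSpop$, which matches $\E{\Wpop \given \rPSpop}$ and is free of $\POcpop$.

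Combining the two evaluations, the conditional distribution of the binary $\Wpop$ given $(\POcpop, \rPSpop)$ equals its conditional distribution given $\rPSpop$, which is exactly $\Wpop \indep \POcpop \mid \rPSpop$; by symmetry of conditional independence this is the claimed $\POcpop \indep \Wpop \mid \rPSpop$ on the support of $\rPSpop$. I do not expect a serious obstacle here, as the argument is elementary; the only point requiring care is the measure-theoretic justification for inserting the conditioning on $\rPSpop$ between $\Xvpop$ and the outcome in the tower property, which rests on $\sigma(\rPSpop) \subseteq \sigma(\Xvpop)$ and on restricting to the support of $\rPSpop$ so that the conditional probabilities are well defined, together with the fact that $\Wpop$ being binary is what upgrades equality of conditional means to full conditional independence.
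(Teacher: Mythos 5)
Your proof is correct. The paper itself does not reproduce an argument here --- it simply cites Theorem~3 of \citet{Rosenbaum1983Central} --- and what you have written is precisely the standard proof of that cited theorem: reduce conditional independence of the binary $\Wpop$ to equality of conditional means, compute $\E{\Wpop \given \POcpop, \rPSpop} = \rPSpop = \E{\Wpop \given \rPSpop}$ via the tower property using $\sigma(\rPSpop) \subseteq \sigma(\Xvpop)$ and unconfoundedness given $\Xvpop$. So this is essentially the same approach, just spelled out rather than delegated to the reference.
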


\begin{proof}
The statement is Theorem~3 in \citet{Rosenbaum1983Central}.
\end{proof}

\section{Proofs of lemmas}

\subsection{Proof of Lemma~\refmain{lem:consistent-unbiased}}

\begin{reflemma}{\refmain{lem:consistent-unbiased}}
	\lemmaconsistentunbiased{}
\end{reflemma}

\newcommand{\tmppar}{\theta}
\newcommand{\tmpest}{\hat{\theta}}

\begin{proof}
Assume $\tmpest$ is consistent for $\tmppar$ but that the implication does not hold.
In other words, for some constant $c \geq 1$:
\begin{equation}
	\limsup_{n \to \infty} \abs[\big]{\E{\tmpest} - \tmppar} \geq 1 / c \qquad \text{and} \qquad \limsup_{n \to \infty}\Var{\tmpest} \leq c.
\end{equation}

Let $\paren{\varepsilon_n}$ be a sequence in $\Reals^+$ such that $\varepsilon_n \to 0$ and $\Pr{\abs{\tmpest - \tmppar} \geq \varepsilon_n} \to 0$.
Consistency ensures that such a sequence exists.
Let $A_n = \indicator[\big]{\abs{\tmpest - \tmppar} \geq \varepsilon_n}$ be a sequence of random variables and let $\delta_n = \E{A_n} = \Pr{A_n = 1} = \Pr{\abs{\tmpest - \tmppar} \geq \varepsilon_n}$.
By the law of total variance:
\begin{equation}
	\Var{\tmpest} = \E[\big]{\Var{\tmpest \given A_n}} + \Var[\big]{\E{\tmpest \given A_n}} \geq \Var[\big]{\E{\tmpest \given A_n}}.
\end{equation}
Note that $\E[\big]{\E{\tmpest \given A_n}} = \E{\tmpest}$, so:
\begin{equation}
	\Var[\big]{\E{\tmpest \given A_n}}
	= \E[\Big]{\paren[\big]{\E{\tmpest \given A_n} - \E{\tmpest}}^2}
	= \frac{1 - \delta_n}{\delta_n} \paren[\Big]{\E{\tmpest} - \E{\tmpest \given A_n = 0}}^2.
\end{equation}
The last equality may need some elaboration.
Note:
\begin{multline}
	\frac{1 - \delta_n}{\delta_n} \paren[\Big]{\E{\tmpest} - \E{\tmpest \given A_n = 0}}^2
	=
	\frac{\paren{1 - \delta_n} \paren{1 - \delta_n + \delta_n} }{\delta_n} \paren[\Big]{\E{\tmpest} - \E{\tmpest \given A_n = 0}}^2
	\\
	=
	\frac{\paren{1 - \delta_n}^2}{\delta_n} \paren[\Big]{\E{\tmpest} - \E{\tmpest \given A_n = 0}}^2
	+
	\paren{1 - \delta_n} \paren[\Big]{\E{\tmpest} - \E{\tmpest \given A_n = 0}}^2.
\end{multline}
By the law of total expectation:
\begin{equation}
	\E{\tmpest} = \delta_n \E{\tmpest \given A_n = 1} + \paren{1 - \delta_n} \E{\tmpest \given A_n = 0},
\end{equation}
so:
\begin{multline}
	\frac{\paren{1 - \delta_n}^2}{\delta_n} \paren[\Big]{\E{\tmpest} - \E{\tmpest \given A_n = 0}}^2
	= \frac{\paren{1 - \delta_n}^2}{\delta_n} \paren[\bigg]{\E{\tmpest} - \frac{\E{\tmpest} - \delta_n \E{\tmpest \given A_n = 1}}{1 - \delta_n}}^2
	\\
	= \frac{\paren{1 - \delta_n}^2}{\delta_n} \paren[\bigg]{\frac{\delta_n \E{\tmpest \given A_n = 1} - \delta_n\E{\tmpest}}{1 - \delta_n}}^2 = \delta_n \paren[\Big]{\E{\tmpest \given A_n = 1} - \E{\tmpest}}^2,
\end{multline}
and the equality follows from the law of total expectation and $\delta_n = \Pr{A_n = 1}$.

Focusing on the second factor, add and subtract $\tmppar$ to get:
\begin{equation}
	\paren[\big]{\E{\tmpest} - \E{\tmpest \given A_n = 0}}^2 \geq \paren[\big]{\E{\tmpest} - \tmppar}^2 + 2\paren[\big]{\E{\tmpest} - \tmppar}\paren[\big]{\tmppar - \E{\tmpest \given A_n = 0}}.
\end{equation}
Let $b_n = \abs{\E{\tmpest} - \tmppar}$ be the magnitude of the bias.
Recall that $\abs{\tmpest - \tmppar} < \varepsilon_n$ when $A_n = 0$, so $\abs[\big]{\tmppar - \E{\tmpest \given A_n = 0}} < \varepsilon_n$.
It follows that:
\begin{equation}
	2\paren[\big]{\E{\tmpest} - \tmppar}\paren[\big]{\tmppar - \E{\tmpest \given A_n = 0}} \geq - 2 b_n \varepsilon_n,
\end{equation}
and:
\begin{equation}
	\paren[\big]{\E{\tmpest} - \tmppar}^2 + 2\paren[\big]{\E{\tmpest} - \tmppar}\paren[\big]{\tmppar - \E{\tmpest \given A_n = 0}} \geq b_n\paren{b_n - 2 \varepsilon_n}.
\end{equation}
Taken together:
\begin{equation}
	\Var{\tmpest} \geq \frac{b_n \paren{b_n - 2 \varepsilon_n} \paren{1 - \delta_n}}{\delta_n}.
\end{equation}

Recall that the proof started by assuming:
\begin{equation}
	\limsup_{n \to \infty} \abs[\big]{\E{\tmpest} - \tmppar} \geq 1 / c \qquad \text{and} \qquad \limsup_{n \to \infty}\Var{\tmpest} \leq c,
\end{equation}
for some constant $c \geq 1$.
Let $n'$ be such that $\varepsilon_n \leq 1 / 8 c$ and $\delta_n \leq 1 / 32c^3 \leq 1 / 2$ for all $n \geq n'$.
Consistency ensures that such an integer exists.
Asymptotic biasedness implies that $b_n \geq 1 / 2c$ an infinite number of times for $n \geq n'$, and in these cases:
\begin{equation}
	\Var{\tmpest}
	\geq \frac{b_n \paren{b_n - 2 \varepsilon_n}\paren{1 - \delta_n}}{\delta_n}
	\geq \frac{\paren{1 / 2c} \paren{1 / 2c - 2 / 8 c}\paren{1 - 1 / 2}}{1 / 32c^3}
	= 2c,
\end{equation}
which contradicts $\limsup_{n \to \infty}\Var{\tmpest} \leq c$.
\end{proof}

\subsection{Proof of Lemma~\refmain{lem:bounded-var}}

\begin{reflemma}{\refmain{lem:bounded-var}}
	\lemmaboundedvar{}
\end{reflemma}

\begin{proof}
Consider the expectation of the squared estimator:
\begin{equation}
	\Var{\attest} = \E{\attest^2} - \paren[\big]{\E{\attest}}^2 \leq \E{\attest^2}.
\end{equation}
As noted in the proof of Lemma~\ref{lem:mdiscrep-bias-step3}, the estimator can be written using $\Matched$ when $1 \leq \nTreated \leq \nControls$.
Thus, in that case:
\begin{equation}
	\attest^2
	= \paren[\Bigg]{\frac{1}{\nTreated} \sum_{i \in \Treated} \Yi - \frac{1}{\nTreated} \sum_{i \in \Matched} \Yi}^2
	\leq \frac{1}{\nTreated^2} \paren[\Bigg]{\sum_{i \in \mathbf{A}} \abs{\Yi}}^2,
\end{equation}
where $\mathbf{A} = \Treated \cup \Matched$.
Recall that $\attest = 0$ when $\nTreated = 0$ or $\nTreated > \nControls$, so:
\begin{equation}
	\attest^2 \leq \frac{1}{\maxf{1, \nTreated^2}} \paren[\Bigg]{\sum_{i \in \mathbf{A}} \abs{\Yi}}^2,
\end{equation}
holds no matter how many treated units there are in the sample.
Noting that $2ab \leq a^2 + b^2$ for any $a, b \in \Reals$:
\begin{equation}
	\paren[\Bigg]{\sum_{i \in \mathbf{A}} \abs{\Yi}}^2 = \frac{1}{2}\sum_{i \in \mathbf{A}}\sum_{j \in \mathbf{A}} 2 \abs{\Yi\Yj} \leq \frac{1}{2}\sum_{i \in \mathbf{A}}\sum_{j \in \mathbf{A}}\paren{\Yi^2 + \Yj^2} \leq 2 \nTreated \sum_{i \in \mathbf{A}}\Yi^2,
\end{equation}
because $\card{\mathbf{A}} = 2\nTreated$ when $\nTreated \leq \nControls$ and $\card{\mathbf{A}} < 2\nTreated$ when $\nTreated > \nControls$.
Separating the sum again gives:
\begin{equation}
	\E{\attest^2} \leq \E[\Bigg]{\frac{2}{\maxf{1, \nTreated}} \sum_{i \in \Treated}\Yi^2} + \E[\Bigg]{\frac{2}{\maxf{1, \nTreated}} \sum_{i \in \Matched}\Yi^2}.
\end{equation}

As noted in the previous proofs, $\Treated$ contains no more information about $\Yi$ than $\Wi$, so:
\begin{equation}
	\E[\Bigg]{\frac{2}{\maxf{1, \nTreated}} \sum_{i \in \Treated}\Yi^2}
	= 2 \E[\Bigg]{\frac{\nTreated}{\maxf{1, \nTreated}}} \E{\Ypop^2 \given \Wpop = 1},
\end{equation}
and:
\begin{equation}
	\E[\Bigg]{\frac{2}{\maxf{1, \nTreated}} \sum_{i \in \Matched}\Yi^2}
	\leq \E[\Bigg]{\frac{2}{\maxf{1, \nTreated}} \sum_{i \in \Controls}\Yi^2}
	= 2 \E[\Bigg]{\frac{\nControls}{\maxf{1, \nTreated}}} \E{\Ypop^2 \given \Wpop = 0}.
\end{equation}
Using the same argument as in the proof of Lemma~\ref{lem:mdiscrep-bias-step1}:
\begin{equation}
	\E{\Ypop^2 \given \Wpop = 1} \leq \frac{\E{\Ypop^2}}{\avgpscore}
	\qquad\text{and}\qquad
	\E{\Ypop^2 \given \Wpop = 0} \leq \frac{\E{\Ypop^2}}{\avgpscore},
\end{equation}
so:
\begin{equation}
	\E{\attest^2}
	\leq \E[\Bigg]{\frac{\nTreated + \nControls}{\maxf{1, \nTreated}}} \frac{2\E{\Ypop^2}}{\avgpscore}
	= \E[\Bigg]{\frac{n}{\maxf{1, \nTreated}}} \frac{2\E{\Ypop^2}}{\avgpscore}.
\end{equation}
Finally:
\begin{equation}
	\E[\Bigg]{\frac{n}{\maxf{1, \nTreated}}}
	\leq
	n \Pr[\big]{\nTreated = 0}
	+ \Pr[\big]{\nTreated > \nControls}
	+ \E[\Bigg]{\frac{n}{\nTreated} \given 1 \leq \nTreated \leq \nControls}.
\end{equation}
The first term is $n \paren{1 - \avgpscore}^n$ and converges to zero.
The second term was shown to converge to zero in the proof of Lemma~\ref{lem:mdiscrep-bias-step2}.
That proof also showed:
\begin{equation}
	\limsup_{n \to \infty} \E[\bigg]{\frac{n}{\nTreated} \given 1 \leq \nTreated \leq \nControls} \leq \frac{2}{\avgpscore} \tag*{\qedhere}.
\end{equation}
\end{proof}

\subsection{Proof of Lemma~\refmain{lem:no-crossing-matches}}

\begin{reflemma}{\refmain{lem:no-crossing-matches}}
	\lemmanocrossingmatches{}
\end{reflemma}

\newcommand{\maltsym}{\matchsym'}
\DeclarePairedDelimiterXPP\malt[1]{\maltsym}{\lparen}{\rparen}{}{#1}

\begin{proof}
The lemma is proven by demonstrating the contrapositive.
Consider a matching $\mgensym$ containing at least one pair of matches that are crossing according to Definition~\refmain{def:crossing-matches}.
That is, for some $k, \ell \in \Treated$:
\begin{equation}
	\maxf{\rPS{k}, \rPS{\mgen{\ell}}} < \minf{\rPS{\ell}, \rPS{\mgen{k}}}.
\end{equation}
Fix these indices throughout the proof, so $k$ and $\ell$ refer to two specific treated units.

Consider an alternative matching $\maltsym$ that swaps the matched controls for $k$ and $\ell$:
\begin{equation}
	\malt{i}
	= \begin{cases}
		\mgen{k} & \text{if } i = \ell,
		\\
		\mgen{\ell} & \text{if } i = k,
		\\
		\mgen{i} & \text{otherwise}.
	\end{cases}
\end{equation}
The sum of within-match differences in the propensity scores for the two matchings are:
\begin{equation}
	\sum_{i \in \Treated} \abs{\rPSi - \rPS{\mgeni}}
	\qquad\text{and}\qquad
	\sum_{i \in \Treated} \abs{\rPSi - \rPS{\malt{i}}},
\end{equation}
and their difference is:
\begin{equation}
	\sum_{i \in \Treated} \paren[\big]{\abs{\rPSi - \rPS{\mgeni}} - \abs{\rPSi - \rPS{\malt{i}}}}
	= \abs{\rPS{k} - \rPS{\mgen{k}}} - \abs{\rPS{k} - \rPS{\mgen{\ell}}} + \abs{\rPS{\ell} - \rPS{\mgen{\ell}}} - \abs{\rPS{\ell} - \rPS{\mgen{k}}},
\end{equation}
because they are identical apart from the matches for $k$ and $\ell$.
It remains to show that this difference is positive.

Because the matches are crossing, $\rPS{\mgen{k}} > \rPS{k}$ and $\rPS{\ell} > \rPS{\mgen{\ell}}$, so:
\begin{equation}
	\abs{\rPS{k} - \rPS{\mgen{k}}} = \rPS{\mgen{k}} - \rPS{k} \qquad\qquad \text{and} \qquad\qquad \abs{\rPS{\ell} - \rPS{\mgen{\ell}}} = \rPS{\ell} - \rPS{\mgen{\ell}}.
\end{equation}
Define $B_k, B_\ell \in \setb{-1, 1}$ so that $B_k\rPS{k} \geq B_k\rPS{\mgen{\ell}}$ and $B_\ell\rPS{\mgen{k}} \geq B_\ell\rPS{\ell}$, and write:
\begin{equation}
	\abs{\rPS{k} - \rPS{\mgen{\ell}}} = B_k\rPS{k} - B_k\rPS{\mgen{\ell}} \qquad\qquad \text{and} \qquad\qquad \abs{\rPS{\ell} - \rPS{\mgen{k}}} = B_\ell\rPS{\mgen{k}} - B_\ell\rPS{\ell},
\end{equation}
so the difference can be written as:
\begin{multline}
	\abs{\rPS{k} - \rPS{\mgen{k}}} - \abs{\rPS{k} - \rPS{\mgen{\ell}}} + \abs{\rPS{\ell} - \rPS{\mgen{\ell}}} - \abs{\rPS{\ell} - \rPS{\mgen{k}}}
	\\
	=
	\paren[\big]{1 + B_\ell} \rPS{\ell} + \paren[\big]{1 - B_\ell} \rPS{\mgen{k}} - \paren[\big]{1 + B_k} \rPS{k} - \paren[\big]{1 - B_k} \rPS{\mgen{\ell}}.
\end{multline}

Note that $\setb{1 + B_\ell, 1 - B_\ell} = \setb{0, 2}$ because $B_\ell \in \setb{-1, 1}$.
It follows that:
\begin{equation}
	\paren[\big]{1 + B_\ell}\rPS{\ell} + \paren[\big]{1 - B_\ell}\rPS{\mgen{k}} \geq 2 \minf{\rPS{\ell}, \rPS{\mgen{k}}} > 2 \maxf{\rPS{k}, \rPS{\mgen{\ell}}},
\end{equation}
where the last inequality follows from $k$ and $\ell$ having crossing matches.
By a similar argument:
\begin{equation}
	- \paren[\big]{1 + B_k}\rPS{k} - \paren[\big]{1 - B_k}\rPS{\mgen{\ell}} \geq -2 \maxf{\rPS{k}, \rPS{\mgen{\ell}}},
\end{equation}
which implies:
\begin{multline}
	\paren[\big]{1 + B_\ell} \rPS{\ell} + \paren[\big]{1 - B_\ell} \rPS{\mgen{k}} - \paren[\big]{1 + B_k} \rPS{k} - \paren[\big]{1 - B_k} \rPS{\mgen{\ell}}
	\\
	> 2 \maxf{\rPS{k}, \rPS{\mgen{\ell}}} - 2 \maxf{\rPS{k}, \rPS{\mgen{\ell}}} = 0. \tag*{\qedhere}
\end{multline}
\end{proof}

\section{Proof of Proposition~\refmain{prop:pscore-bias}}

\newcommand{\SamplePSm}{\Sample_-}
\newcommand{\SamplePSp}{\Sample_+}
\newcommand{\nSamplePSm}{\card{\SamplePSm}}
\newcommand{\nSamplePSp}{\card{\SamplePSp}}

\newcommand{\TreatedPSm}{\Treated_-}
\newcommand{\TreatedPSp}{\Treated_+}
\newcommand{\nTreatedPSm}{\card{\TreatedPSm}}
\newcommand{\nTreatedPSp}{\card{\TreatedPSp}}

\newcommand{\ControlsPSm}{\Controls_-}
\newcommand{\ControlsPSp}{\Controls_+}
\newcommand{\nControlsPSm}{\card{\ControlsPSm}}
\newcommand{\nControlsPSp}{\card{\ControlsPSp}}

\newcommand{\MatchedPSm}{\Matched_-}
\newcommand{\MatchedPSp}{\Matched_+}
\newcommand{\nMatchedPSm}{\card{\MatchedPSm}}
\newcommand{\nMatchedPSp}{\card{\MatchedPSp}}

\subsection{Overview of proof}

The proof of Proposition~\refmain{prop:pscore-bias} consists of four parts.
The first part is to show that the bias of the estimator is asymptotically equal the expectation of a random variable $\mdiscrep$, which is the normalized difference between the sum of $\POci$ among treated units and the sum of $\POci$ among matched controls (see Definition~\ref{def:mdiscrep} in Section~\ref{sec:app-definitions}).
This random variable has a non-random denominator, making it easier to analyze than the matching estimator itself.
The asymptotic equivalence is proven in Lemmas~\ref{lem:mdiscrep-bias},~\ref{lem:mdiscrep-bias-step1},~\ref{lem:bounded-POs},~\ref{lem:mdiscrep-bias-step2},~\ref{lem:mdiscrep-bias-step3}~and~\ref{lem:mdiscrep-bias-step4}.
While necessary to rigorously prove Proposition~\refmain{prop:pscore-bias}, the proofs of these lemmas are somewhat tedious and do not bring many interesting insights.

Next, $\mdiscrep$ is decomposed into three terms.
The remaining three parts of the proof consider these three terms in turn.
The proofs of these lemmas are also somewhat tedious, but they provide several insights, and readers may find interesting to study these proof somewhat more carefully.

The first term is the normalized difference between the sums of the potential outcomes of all control units with $\rPSi \geq \PSpoint$ and matched control with $\rPSi \geq \PSpoint$.
Lemmas~\ref{lem:pscore-mdiscrep-term1}, \ref{lem:pscore-mdiscrep-term1-no-upper-unmatched}, \ref{lem:pscore-mdiscrep-term1-no-upper-unmatched-middle-part}, \ref{lem:pscore-mdiscrep-term1-no-upper-unmatched-atom-point} and \ref{lem:pscore-mdiscrep-term1-upper-balance} show that this term converges to zero.
The intuition behind this result is that, asymptotically, all control units with $\rPSi \geq \PSpoint$ will be matched, so the two sums in the difference are over the same units.

The second term is the normalized difference between the sums of the potential outcomes of treated units with $\rPSi < \PSpoint$ and matched control units with $\rPSi < \PSpoint$.
Lemmas~\ref{lem:pscore-mdiscrep-term1-no-upper-unmatched}, \ref{lem:pscore-mdiscrep-term1-no-upper-unmatched-middle-part}, \ref{lem:pscore-mdiscrep-term1-no-upper-unmatched-atom-point}, \ref{lem:pscore-mdiscrep-term1-upper-balance}, \ref{lem:pscore-mdiscrep-term2} and \ref{lem:pscore-mdiscrep-term2-bin-overflow} show that this term converges to zero.
The intuition behind this result is that, asymptotically, each treated units with $\rPSi < \PSpoint$ will be matched to a control unit that has a propensity score that is infinitesimally close to the score of the treated unit.
Condition~\refmain{cond:pscore-lipschitz} thereby ensures that the average outcome of these matched control units is the same as the average potential outcome under the control condition for the treated units they are matched with.

The third term is the normalized difference between the sums of the potential outcomes of treated units with $\rPSi \geq \PSpoint$ and all control units with $\rPSi \geq \PSpoint$.
Lemma~\ref{lem:pscore-mdiscrep-term3} shows that this term converges to the quantity stipulated in the proposition.
This term does not depend on the matching, so the proof of this lemma is straightforward.

Figure~\ref{fig:dependency-diagram} is a diagram of the relationships between the proposition and its lemmas.

\begin{figure}
\centering
\begin{tikzpicture}
	\tikzstyle{lemmanode}=[draw,thick,circle,minimum size=1.2cm]
	\tikzstyle{arc}=[thick,-{Latex[scale=1.1]}]

	\node[draw,thick] (pscorebias) {\LARGE Proposition~\refmain{prop:pscore-bias}};
	\node (pscorebiaschild) [below=40pt of pscorebias] {};

	\node[lemmanode] (mdiscrepbias) [left=90pt of pscorebiaschild] {\ref{lem:mdiscrep-bias}};
	\node (mdiscrepbiaschild) [below=35pt of mdiscrepbias] {};
	\node[lemmanode] (mdiscrepbiasstep2) [left=5pt of mdiscrepbiaschild] {\ref{lem:mdiscrep-bias-step2}};
	\node[lemmanode] (mdiscrepbiasstep1) [left=10pt of mdiscrepbiasstep2] {\ref{lem:mdiscrep-bias-step1}};
	\node[lemmanode] (mdiscrepbiasstep3) [right=5pt of mdiscrepbiaschild] {\ref{lem:mdiscrep-bias-step3}};
	\node[lemmanode] (mdiscrepbiasstep4) [right=10pt of mdiscrepbiasstep3] {\ref{lem:mdiscrep-bias-step4}};
	\node[lemmanode] (boundedPOs) [below=20pt of mdiscrepbiasstep1] {\ref{lem:bounded-POs}};

	\draw[arc] (mdiscrepbias) -- (pscorebias);
	\draw[arc] (mdiscrepbiasstep1) -- (mdiscrepbias);
	\draw[arc] (mdiscrepbiasstep2) -- (mdiscrepbias);
	\draw[arc] (mdiscrepbiasstep3) -- (mdiscrepbias);
	\draw[arc] (mdiscrepbiasstep4) -- (mdiscrepbias);
	\draw[arc] (boundedPOs) -- (mdiscrepbiasstep1);

	\node[lemmanode] (pscoremdiscrepterm1) [right=15pt of pscorebiaschild] {\ref{lem:pscore-mdiscrep-term1}};
	\node (pscoremdiscrepterm1child) [below=35pt of pscoremdiscrepterm1] {};
	\node[lemmanode] (pscoremdiscrepterm1noupperunmatched) [left=5pt of pscoremdiscrepterm1child] {\ref{lem:pscore-mdiscrep-term1-no-upper-unmatched}};
	\node[lemmanode] (pscoremdiscrepterm1upperbalance) [right=10pt of pscoremdiscrepterm1child] {\ref{lem:pscore-mdiscrep-term1-upper-balance}};
	\node (pscoremdiscrepterm1noupperunmatchedchild) [below=33pt of pscoremdiscrepterm1noupperunmatched] {};
	\node[lemmanode] (pscoremdiscrepterm1noupperunmatchedmiddlepart) [left=5pt of pscoremdiscrepterm1noupperunmatchedchild] {\ref{lem:pscore-mdiscrep-term1-no-upper-unmatched-middle-part}};
	\node[lemmanode] (pscoremdiscrepterm1noupperunmatchedatompoint) [right=5pt of pscoremdiscrepterm1noupperunmatchedchild] {\ref{lem:pscore-mdiscrep-term1-no-upper-unmatched-atom-point}};

	\draw[arc] (pscoremdiscrepterm1) -- (pscorebias);
	\draw[arc] (pscoremdiscrepterm1noupperunmatched) -- (pscoremdiscrepterm1);
	\draw[arc] (pscoremdiscrepterm1upperbalance) -- (pscoremdiscrepterm1);
	\draw[arc] (pscoremdiscrepterm1noupperunmatchedmiddlepart) -- (pscoremdiscrepterm1noupperunmatched);
	\draw[arc] (pscoremdiscrepterm1noupperunmatchedatompoint) -- (pscoremdiscrepterm1noupperunmatched);

	\node[lemmanode] (pscoremdiscrepterm2) [right=80pt of pscorebiaschild] {\ref{lem:pscore-mdiscrep-term2}};
	\node[lemmanode] (pscoremdiscrepterm2binoverflow) [right=70pt of pscoremdiscrepterm1child] {\ref{lem:pscore-mdiscrep-term2-bin-overflow}};

	\draw[arc] (pscoremdiscrepterm2) -- (pscorebias);
	\draw[arc] (pscoremdiscrepterm2binoverflow) -- (pscoremdiscrepterm2);
	\draw[arc] (pscoremdiscrepterm1noupperunmatched) -- (pscoremdiscrepterm2);
	\draw[arc] (pscoremdiscrepterm1upperbalance) -- (pscoremdiscrepterm2);

	\node[lemmanode] (pscoremdiscrepterm3) [right=140pt of pscorebiaschild] {\ref{lem:pscore-mdiscrep-term3}};

	\draw[arc] (pscoremdiscrepterm3) -- (pscorebias);
\end{tikzpicture}
\caption{Dependency diagram over proof of Proposition~\refmain{prop:pscore-bias}}\label{fig:dependency-diagram}
\end{figure}

\subsection{Definitions}\label{sec:app-definitions}

\begin{definition}
	Let $\Matched$ collect all matched control units when the matching exists and all controls when it does not exist:
	\begin{equation}
		\Matched =
		\begin{cases}
			\setb{\mopt{i} \suchthat i \in \Treated} & \text{if } \nTreated \leq \nControls,
			\\
			\Controls & \text{if } \nTreated > \nControls,
		\end{cases}
	\end{equation}
\end{definition}

\begin{definition}\label{def:mdiscrep}
	Let $\mdiscrep$ be the sum of difference in the potential outcome under control between each treated unit and its matched control unit normalized by the expected number of treated units:
	\begin{equation}
		\mdiscrep = \frac{1}{\avgpscore n} \sum_{i \in \Treated} \POci - \frac{1}{\avgpscore n} \sum_{i \in \Matched} \POci.
	\end{equation}
\end{definition}

\begin{definition}\label{def:PSpoint-partition}
	Partition $\Sample$, $\Treated$ and $\Controls$ as:
	\begin{align}
		\SamplePSp &= \setb{i \in \Sample \suchthat \rPSi \geq \PSpoint}, \qquad &\TreatedPSp &= \setb{i \in \Treated \suchthat \rPSi \geq \PSpoint}, \qquad &\ControlsPSp &= \setb{i \in \Controls \suchthat \rPSi \geq \PSpoint},
		\\
		\SamplePSm &= \setb{i \in \Sample \suchthat \rPSi < \PSpoint}, \qquad &\TreatedPSm &= \setb{i \in \Treated \suchthat \rPSi < \PSpoint}, \qquad &\ControlsPSm &= \setb{i \in \Controls \suchthat \rPSi < \PSpoint},
	\end{align}
	and partition $\Matched$ as:
	\begin{equation}
		\MatchedPSp =
		\begin{cases}
			\setb{\mopt{i} \suchthat i \in \TreatedPSp} & \text{if } \nTreated \leq \nControls,
			\\
			\ControlsPSp & \text{if } \nTreated > \nControls,
		\end{cases}
		\qquad\quad
		\MatchedPSm =
		\begin{cases}
			\setb{\mopt{i} \suchthat i \in \TreatedPSm} & \text{if } \nTreated \leq \nControls,
			\\
			\ControlsPSm & \text{if } \nTreated > \nControls.
		\end{cases}
	\end{equation}
\end{definition}

\newcommand{\IMatchedPSp}[1]{M^+_{#1}}
\newcommand{\IMatchedPSpi}{\IMatchedPSp{i}}
\newcommand{\IMatchedPSpj}{\IMatchedPSp{j}}

\begin{definition}\label{def:PSpoint-match-indicator}
	Let $\IMatchedPSpi = \indicator{i \in \MatchedPSp}$.
\end{definition}

\subsection{Main proof}

\begin{refproposition}{\refmain{prop:pscore-bias}}
	\proppscorebias{}
\end{refproposition}

\begin{proof}
Recall the definition of $\mdiscrep$:
\begin{equation}
	\mdiscrep = \frac{1}{\avgpscore n} \sum_{i \in \Treated} \POci - \frac{1}{\avgpscore n} \sum_{i \in \Matched} \POci.
\end{equation}
Lemma~\ref{lem:mdiscrep-bias} shows that:
\begin{equation}
	\liminf_{n \to \infty} \E{\attest} = \att + \liminf_{n \to \infty} \E{\mdiscrep}
	\qquad\text{and}\qquad
	\limsup_{n \to \infty} \E{\attest} = \att + \limsup_{n \to \infty} \E{\mdiscrep},
\end{equation}
and the rest of the proof considers $\mdiscrep$.

Use the partitions of $\Treated$, $\Controls$ and $\Matched$ in Definition~\ref{def:PSpoint-partition} to write:
\begin{multline}
	\mdiscrep = \frac{1}{\avgpscore n} \sum_{i \in \TreatedPSp} \POci + \frac{1}{\avgpscore n} \sum_{i \in \TreatedPSm} \POci - \frac{1}{\avgpscore n} \sum_{i \in \MatchedPSp} \POci - \frac{1}{\avgpscore n} \sum_{i \in \MatchedPSm} \POci
	\\
	+ \frac{1}{\avgpscore n} \sum_{i \in \ControlsPSp} \POci - \frac{1}{\avgpscore n} \sum_{i \in \ControlsPSp} \POci.
\end{multline}
After rearranging terms, taking expectations and limits, we get:
\begin{align}
	\lim_{n \to \infty} \E{\mdiscrep}
	&= \lim_{n \to \infty} \E[\bigg]{\frac{1}{\avgpscore n} \sum_{i \in \ControlsPSp} \POci - \frac{1}{\avgpscore n} \sum_{i \in \MatchedPSp} \POci}
	\\
	&\qquad\quad + \lim_{n \to \infty} \E[\bigg]{\frac{1}{\avgpscore n} \sum_{i \in \TreatedPSm} \POci - \frac{1}{\avgpscore n} \sum_{i \in \MatchedPSm} \POci}
	\\
	&\qquad\quad + \lim_{n \to \infty} \E[\bigg]{\frac{1}{\avgpscore n} \sum_{i \in \TreatedPSp} \POci - \frac{1}{\avgpscore n} \sum_{i \in \ControlsPSp} \POci},
\end{align}
assuming the limit exists.
The first two terms are shown to be zero by Lemmas~\ref{lem:pscore-mdiscrep-term1} and~\ref{lem:pscore-mdiscrep-term2}.
Lemma~\ref{lem:pscore-mdiscrep-term3} completes the proof.
\end{proof}

\newcommand{\lemmamdiscrepbias}{%
	Given Condition~\refmain{cond:reg-cond}:
	\begin{equation}
		\liminf_{n \to \infty} \E{\mdiscrep} = \liminf_{n \to \infty} \paren[\big]{\E{\attest} - \att}
		\qquad\text{and}\qquad
		\limsup_{n \to \infty} \E{\mdiscrep} = \limsup_{n \to \infty} \paren[\big]{\E{\attest} - \att}.
	\end{equation}%
}

\subsection{Proofs of Lemmas~\ref*{lem:mdiscrep-bias},~\ref*{lem:mdiscrep-bias-step1},~\ref*{lem:bounded-POs},~\ref*{lem:mdiscrep-bias-step2},~\ref*{lem:mdiscrep-bias-step3}~and~\ref*{lem:mdiscrep-bias-step4}}

\begin{lemma}\label{lem:mdiscrep-bias}
	\lemmamdiscrepbias{}
\end{lemma}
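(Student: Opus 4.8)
The plan is to show that $\E{\attest} - \att - \E{\mdiscrep} \to 0$; since adding a null sequence changes neither $\liminf$ nor $\limsup$, this yields both stated equalities at once. Throughout, write $E_n$ for the event $\braces{1 \leq \nTreated \leq \nControls}$ on which the matching exists. Off $E_n$ the estimator is zero by definition, while on $E_n$ the matched controls satisfy $\Yi = \POci$ and the treated units satisfy $\Yi = \POti$, so that
\begin{equation}
	\attest = \frac{1}{\nTreated} \sum_{i \in \Treated} \POti - \frac{1}{\nTreated} \sum_{i \in \Matched} \POci.
\end{equation}
Adding and subtracting $\nTreated^{-1} \sum_{i \in \Treated} \POci$ decomposes $\attest - \att = A_n + B_n$ on $E_n$, where $A_n = \nTreated^{-1} \sum_{i \in \Treated} \paren{\POti - \POci} - \att$ is a feasible-effect term and $B_n = \nTreated^{-1} \sum_{i \in \Treated} \POci - \nTreated^{-1} \sum_{i \in \Matched} \POci$ is the matching discrepancy, differing from $\mdiscrep$ only through its normalization.

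First I would dispatch the feasible-effect term by showing $\E{A_n \indicator{E_n}} = 0$ exactly. The event $E_n$ is measurable with respect to the set $\Treated$, and conditional on $\Treated$ the pairs $\paren{\POti, \POci}$ remain independent across $i$ with $\Es{\POti - \POci \given \Wi = 1} = \att$ by the definition of $\att$. Hence $\Es{A_n \given \Treated} = 0$ whenever $\nTreated \geq 1$, and integrating over the outcomes comprising $E_n$ gives the claim. This is the step in which the meaning of $\att$ does its work, and it is the cleanest part of the argument.

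Next I would control the normalization mismatch, which on $E_n$ equals $B_n - \mdiscrep = \paren{\nTreated^{-1} - \paren{\avgpscore n}^{-1}} S_n$ with $S_n = \sum_{i \in \Treated} \POci - \sum_{i \in \Matched} \POci$. Restricting to $F_n = \braces{\nTreated \geq \avgpscore n / 2}$, whose complement has exponentially small probability by Theorem~\ref{thm:hoeffdings-inequality}, the leading factor is bounded by $2 \abs{\avgpscore n - \nTreated}/\paren{\avgpscore^2 n^2}$, and Cauchy--Schwarz with $\Var{\nTreated} = \bigO{n}$ and $\E{S_n^2 \indicator{E_n}} = \bigO{n^2}$ produces a bound of order $n^{-1/2}$. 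On $F_n^{\comp}$ these same moments grow only polynomially while $\Pr{F_n^{\comp}}$ decays exponentially, so that piece vanishes too. The remainder on $E_n^{\comp}$, namely $-\att - \mdiscrep$, is handled identically: $\att \Pr{E_n^{\comp}} \to 0$, and $\E{\abs{\mdiscrep} \indicator{E_n^{\comp}}} \leq \sqrt{\E{\mdiscrep^2}}\,\sqrt{\Pr{E_n^{\comp}}} \to 0$ once $\E{\mdiscrep^2} = \bigO{1}$ is known.

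The main obstacle is the $L^2$ control underpinning all of these bounds. Expanding $S_n$ by Cauchy--Schwarz reduces matters to bounding $\E{\sum_{i \in \Matched} \POci^2}$ and $\E{\sum_{i \in \Treated} \POci^2}$. The first is immediate, since $\POci = \Yi$ on $\Controls$ and $\E{\Ypop^2}$ exists. The second is the difficulty: for a treated unit $\POci$ is a counterfactual, so the naive bound $\abs{\POci} \leq \abs{\Yi}$ is unavailable. The resolution is that $\E{\sum_{i \in \Treated} \POci^2} = n\,\E{\Wpop \POcpop^2} \leq n\,\E{\POcpop^2}$, and that overlap in Condition~\refmain{cond:reg-cond} forces $1 - \pscore{x} \geq \varepsilon$ for some $\varepsilon > 0$, so that, by unconfoundedness, $\varepsilon\,\E{\POcpop^2} \leq \E{\paren{1 - \pscore{\Xvpop}} \Es{\POcpop^2 \given \Xvpop}} = \E{\paren{1 - \Wpop}\POcpop^2} \leq \E{\Ypop^2} < \infty$. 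Securing this finite counterfactual second moment from overlap, rather than from anything directly observed, is the crux on which the entire comparison between $\attest - \att$ and $\mdiscrep$ rests.
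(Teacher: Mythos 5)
Your proof is correct, and it follows the same skeleton as the paper's: add and subtract $\nTreated^{-1}\sum_{i \in \Treated}\POci$, observe that the feasible-effect term has conditional expectation exactly $\att$ given the treatment assignments, control the mismatch between the normalizations $\nTreated$ and $\avgpscore n$ via Hoeffding, and show the events $\nTreated = 0$ and $\nTreated > \nControls$ contribute nothing in the limit. The technical execution differs in two respects worth noting. First, where you use indicator functions and Cauchy--Schwarz with second moments throughout, the paper routes everything through an intermediate quantity normalized by $\maxf{1, \nTreated}$ and a law-of-total-expectation bookkeeping over the three events, with all moment control done at the level of first moments by conditioning on $\Treated$; your version needs $\E{\POcpop^2} < \infty$ rather than just $\E[\big]{\abs{\POcpop}} < \infty$, a slightly stronger input that is nonetheless available under Condition~\refmain{cond:reg-cond}. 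Second, your derivation of the counterfactual moment bound is genuinely different from the paper's Lemma on existence of $\E[\big]{\abs{\POcpop}}$: you apply unconfoundedness and overlap directly on $\Xvpop$ to get $\varepsilon \E{\POcpop^2} \leq \E[\big]{\paren{1 - \Wpop}\POcpop^2} \leq \E{\Ypop^2}$, whereas the paper changes measure through the propensity score and bounds the density ratio $\rPSpop / \paren{1 - \rPSpop}$. Your route is shorter and arguably cleaner; the paper's has the advantage of requiring only a first moment of the counterfactual and of exercising the balancing-score machinery it reuses elsewhere. Either way the conclusion follows, since a vanishing difference of real sequences preserves both the limit inferior and the limit superior.
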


\newcommand{\altmdiscrep}{\mdiscrep^\dagger}

\begin{proof}
Recall that:
\begin{equation}
	\mdiscrep = \frac{1}{\avgpscore n} \sum_{i \in \Treated} \POci - \frac{1}{\avgpscore n} \sum_{i \in \Matched} \POci.
\end{equation}
Let:
\begin{equation}
	\altmdiscrep = \frac{1}{\maxf{1, \nTreated}} \sum_{i \in \Treated} \POci - \frac{1}{\maxf{1, \nTreated}} \sum_{i \in \Matched} \POci,
\end{equation}
so that:
\begin{equation}
	\E{\mdiscrep} = \E{\altmdiscrep} + \E{\mdiscrep - \altmdiscrep}.
\end{equation}
Lemma~\ref{lem:mdiscrep-bias-step1} shows that $\lim_{n \to \infty} \E{\mdiscrep - \altmdiscrep} = 0$, which implies:
\begin{equation}
	\liminf_{n \to \infty} \E{\mdiscrep} = \liminf_{n \to \infty} \E{\altmdiscrep}
	\qquad\text{and}\qquad
	\limsup_{n \to \infty} \E{\mdiscrep} = \limsup_{n \to \infty} \E{\altmdiscrep}.
\end{equation}

Using the law of total expectation, write:
\begin{multline}
	\E{\altmdiscrep}
	=
	\Pr{\nTreated = 0} \E[\big]{\altmdiscrep \given \nTreated = 0} + \Pr{1 \leq \nTreated \leq \nControls} \E[\big]{\altmdiscrep \given 1 \leq \nTreated \leq \nControls}
	\\
	+ \Pr{\nTreated > \nControls} \E[\big]{\altmdiscrep \given \nTreated > \nControls}.
\end{multline}
Note $\E[\big]{\altmdiscrep \given \nTreated = 0} = 0$ and:
\begin{equation}
	\Pr{1 \leq \nTreated \leq \nControls} = 1 - \Pr{\nTreated = 0} - \Pr{\nTreated > \nControls},
\end{equation}
so:
\begin{multline}
	\E{\altmdiscrep}
	=
	\E[\big]{\altmdiscrep \given 1 \leq \nTreated \leq \nControls}
	- \Pr{\nTreated = 0} \E[\big]{\altmdiscrep \given 1 \leq \nTreated \leq \nControls}
	\\
	+ \Pr{\nTreated > \nControls} \paren[\Big]{\E[\big]{\altmdiscrep \given \nTreated > \nControls} - \E[\big]{\altmdiscrep \given 1 \leq \nTreated \leq \nControls}}.
\end{multline}
Lemma~\ref{lem:mdiscrep-bias-step2} therefore implies:
\begin{multline}
	\liminf_{n \to \infty} \E{\altmdiscrep} = \liminf_{n \to \infty} \E[\big]{\altmdiscrep \given 1 \leq \nTreated \leq \nControls}
	\qquad\text{and}\qquad
	\\
	\limsup_{n \to \infty} \E{\altmdiscrep} = \limsup_{n \to \infty} \E[\big]{\altmdiscrep \given 1 \leq \nTreated \leq \nControls}.
\end{multline}

By a similar argument:
\begin{multline}
	\E{\attest}
	=
	\Pr{\nTreated = 0} \E[\big]{\attest \given \nTreated = 0} + \Pr{1 \leq \nTreated \leq \nControls} \E[\big]{\attest \given 1 \leq \nTreated \leq \nControls}
	\\
	+ \Pr{\nTreated > \nControls} \E[\big]{\attest \given \nTreated > \nControls}.
\end{multline}
Recall $\attest = 0$ when $\nTreated = 0$ or $\nTreated > \nControls$, so:
\begin{equation}
	\E{\attest}
	=
	\E[\big]{\attest \given 1 \leq \nTreated \leq \nControls}
	- \bracket[\big]{\Pr{\nTreated = 0} + \Pr{\nTreated > \nControls}} \E[\big]{\attest \given 1 \leq \nTreated \leq \nControls}.
\end{equation}
Lemma~\ref{lem:mdiscrep-bias-step3} then implies:
\begin{multline}
	\liminf_{n \to \infty} \E{\attest} = \liminf_{n \to \infty} \E[\big]{\attest \given 1 \leq \nTreated \leq \nControls}
	\qquad\text{and}\qquad
	\\
	\limsup_{n \to \infty} \E{\attest} = \limsup_{n \to \infty} \E[\big]{\attest \given 1 \leq \nTreated \leq \nControls}.
\end{multline}
Lemma~\ref{lem:mdiscrep-bias-step4} completes the proof by showing:
\begin{equation}
	\E[\big]{\attest \given 1 \leq \nTreated \leq \nControls} - \att = \E[\big]{\altmdiscrep \given 1 \leq \nTreated \leq \nControls}. \tag*{\qedhere}
\end{equation}
\end{proof}

\begin{lemma}\label{lem:mdiscrep-bias-step1}
	Under Condition~\refmain{cond:reg-cond}:
	\begin{equation}
		\lim_{n \to \infty} \E[\Bigg]{\paren[\bigg]{\frac{1}{n \avgpscore} - \frac{1}{\maxf{1, \nTreated}}} \paren[\bigg]{\sum_{i \in \Treated} \POci - \sum_{i \in \Matched} \POci}} = 0.
	\end{equation}
\end{lemma}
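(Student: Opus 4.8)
The quantity inside the limit is exactly the product of the two displayed factors, so the plan is to bound its expectation directly. Write $\Delta = \sum_{i \in \Treated} \POci - \sum_{i \in \Matched} \POci$ for the second factor, and recall that $\nTreated = \sum_{i=1}^n \Wi$ is a sum of i.i.d.\ Bernoulli$(\avgpscore)$ indicators, so $\E{\nTreated} = \avgpscore n$ and $\Var{\nTreated} = \avgpscore(1 - \avgpscore) n \le \avgpscore n$. Because $\Treated$ has exactly $\nTreated$ elements and $\Matched$ is a set of controls of size at most $\nTreated$, the index set $\Treated \cup \Matched$ has at most $2\nTreated$ elements, and in particular the uniform bound $\abs{\Delta} \le \sum_{i=1}^n \abs{\POci}$ holds. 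This bound deliberately discards the dependence of $\Matched$ on the matching, which is what makes the two factors tractable.

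First I would split the expectation over the typical event $\braces{\nTreated \ge \avgpscore n / 2}$ and its rare complement. On the typical event, and for $n$ large enough that $\avgpscore n / 2 \ge 1$, we have $\maxf{1, \nTreated} = \nTreated \ge \avgpscore n / 2$, so
\[
	\abs[\bigg]{\frac{1}{\avgpscore n} - \frac{1}{\maxf{1, \nTreated}}} = \frac{\abs{\nTreated - \avgpscore n}}{\avgpscore n \cdot \nTreated} \le \frac{2 \abs{\nTreated - \avgpscore n}}{\paren{\avgpscore n}^2}.
\]
Combining this with $\abs{\Delta} \le \sum_{i=1}^n \abs{\POci}$ and applying the Cauchy--Schwarz inequality gives
\[
	\E[\bigg]{\abs[\bigg]{\frac{1}{\avgpscore n} - \frac{1}{\maxf{1, \nTreated}}} \abs{\Delta} \indicator{\nTreated \ge \avgpscore n / 2}} \le \frac{2}{\paren{\avgpscore n}^2} \sqrt{\E{\paren{\nTreated - \avgpscore n}^2}} \sqrt{\E[\bigg]{\paren[\big]{\sum_{i=1}^n \abs{\POci}}^2}}.
\]
Expanding the last square and using independence across units gives $\E{\paren{\sum_i \abs{\POci}}^2} \le n^2 \E{\POcpop^2}$, while the first variance factor is at most $\avgpscore n$; the whole bound is therefore $\bigO{n^{-1/2}}$ and vanishes. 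This step uses $\E{\POcpop^2} < \infty$, which is established in Lemma~\ref{lem:bounded-POs} from Condition~\refmain{cond:reg-cond}.

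For the rare event $\braces{\nTreated < \avgpscore n / 2}$, crude bounds suffice: once $\avgpscore n \ge 1$ the first factor is at most $1 + \paren{\avgpscore n}^{-1} \le 2$, so another application of Cauchy--Schwarz yields a bound of $2 \sqrt{\E{\paren{\sum_i \abs{\POci}}^2}} \sqrt{\Pr{\nTreated < \avgpscore n / 2}} \le 2 n \sqrt{\E{\POcpop^2}} \, \Pr{\nTreated < \avgpscore n / 2}^{1/2}$. Theorem~\ref{thm:hoeffdings-inequality}, applied with $s = \avgpscore / 2$, bounds this probability by $\expf{-\avgpscore^2 n / 2}$, so the exponential decay overwhelms the factor $n$ and this contribution vanishes as well. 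Adding the two pieces proves the claim.

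The step I expect to be the main obstacle is the dependence between the treatment assignments---which jointly determine $\nTreated$ and the matched set $\Matched$---and the potential outcomes $\POci$, a dependence that confounding prevents us from ignoring. The device that sidesteps it is to pass to the assignment-free bound $\abs{\Delta} \le \sum_{i=1}^n \abs{\POci}$ before separating the factors by Cauchy--Schwarz; after that the only probabilistic inputs are the binomial variance and exponential tail of $\nTreated$ and the finiteness of $\E{\POcpop^2}$.
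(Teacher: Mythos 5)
Your argument is sound and takes a genuinely different route from the paper's. The paper rearranges the product, bounds the outcome sums by $\sum_{i \in \Treated}\abs{\POci} + \sum_{i \in \Controls}\abs{\POci}$, and then conditions on $\Treated$ so that the inner expectation collapses to conditional first moments $\E{\abs{\POcpop} \given \Wpop = w}$; the whole argument then reduces to bounding $\E{\abs{\maxf{1,\nTreated} - \avgpscore n}/\maxf{1,\nTreated}}$, which it does with a $\sqrt{n \log n}$ threshold and Hoeffding's inequality, yielding a $\bigO{\sqrt{\logf{n}/n}}$ rate using only $\E{\abs{\POcpop}} < \infty$. You instead discard the matching via $\abs{\Delta} \leq \sum_{i=1}^n \abs{\POci}$ and decouple the two factors by Cauchy--Schwarz, which buys a slightly cleaner $\bigO{n^{-1/2}}$ rate and avoids the conditioning step entirely, at the price of requiring a second moment of $\POcpop$ rather than a first moment.

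That price is where the one genuine gap sits: you cite Lemma~\ref{lem:bounded-POs} for $\E{\POcpop^2} < \infty$, but that lemma only establishes that $\E[\big]{\abs{\POcpop}}$ exists. The distinction matters because Condition~\refmain{cond:reg-cond} controls $\E{\Ypop^2}$ for the \emph{realized} outcome, which yields $\E{\POcpop^2 \given \Wpop = 0} < \infty$ directly but says nothing immediate about $\E{\POcpop^2 \given \Wpop = 1}$, the second moment of a counterfactual. The claim is nonetheless true: the change-of-measure argument in the proof of Lemma~\ref{lem:bounded-POs} --- unconfoundedness with respect to $\rPSpop$ plus the overlap bound $\rPSpop/(1-\rPSpop) \leq 1/(1 - \suppPSp)$ --- applies verbatim with $\abs{\POcpop}$ replaced by $\POcpop^2$, giving $\E{\POcpop^2 \given \Wpop = 1} \leq \E{\POcpop^2 \given \Wpop = 0}/\avgpscore(1 - \suppPSp)$. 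So you need to either state and prove that second-moment analogue or switch to the paper's conditioning device; as written, the citation does not support the step. Everything else --- the $\nTreated \geq \avgpscore n/2$ split, the bound $\E{(\sum_i \abs{\POci})^2} \leq n^2 \E{\POcpop^2}$, and the Hoeffding tail killing the $n$ factor on the rare event --- checks out.
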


\begin{proof}
Rearrange the factors as:
\begin{multline}
	\paren[\bigg]{\frac{1}{\avgpscore n} - \frac{1}{\maxf{1, \nTreated}}} \paren[\bigg]{\sum_{i \in \Treated} \POci - \sum_{i \in \Matched} \POci}
	\\
	= \frac{\maxf{1, \nTreated} - \avgpscore n}{\maxf{1, \nTreated}} \paren[\bigg]{\frac{1}{\avgpscore n} \sum_{i \in \Treated} \POci - \frac{1}{\avgpscore n} \sum_{i \in \Matched} \POci},
\end{multline}
and then bound the expectation as:
\begin{multline}
	\E[\Bigg]{\paren[\bigg]{\frac{\maxf{1, \nTreated} - \avgpscore n}{\maxf{1, \nTreated}}} \paren[\bigg]{\frac{1}{\avgpscore n} \sum_{i \in \Treated} \POci - \frac{1}{\avgpscore n} \sum_{i \in \Matched} \POci}}
	\\
	\leq \E[\Bigg]{\frac{\abs{\maxf{1, \nTreated} - \avgpscore n}}{\maxf{1, \nTreated}} \paren[\bigg]{\frac{1}{\avgpscore n} \sum_{i \in \Treated} \abs{\POci} + \frac{1}{\avgpscore n} \sum_{i \in \Matched} \abs{\POci}}}.
\end{multline}
Note that $\Matched \subseteq \Controls$, so:
\begin{equation}
	\sum_{i \in \Matched} \abs{\POci}
	\leq \sum_{i \in \Controls} \abs{\POci},
\end{equation}
and use the law of iterated expectations to get:
\begin{multline}
	\E[\Bigg]{\frac{\abs{\maxf{1, \nTreated} - \avgpscore n}}{\maxf{1, \nTreated}} \paren[\bigg]{\frac{1}{\avgpscore n} \sum_{i \in \Treated} \abs{\POci} + \frac{1}{\avgpscore n} \sum_{i \in \Controls} \abs{\POci}}}
	\\
	=
	\E[\Bigg]{\frac{\abs{\maxf{1, \nTreated} - \avgpscore n}}{\maxf{1, \nTreated}} \E[\bigg]{\frac{1}{\avgpscore n} \sum_{i \in \Treated} \abs{\POci} + \frac{1}{\avgpscore n} \sum_{i \in \Controls} \abs{\POci} \given \Treated}}.
\end{multline}
The set $\Treated$ contains no more information about $\POci$ than $\Wi$, so:
\begin{equation}
	\E[\bigg]{\frac{1}{\avgpscore n} \sum_{i \in \Treated} \abs{\POci} + \frac{1}{\avgpscore n} \sum_{i \in \Controls} \abs{\POci} \given \Treated}
	= \frac{\nTreated}{\avgpscore n} \E[\big]{\abs{\POcpop} \given \Wpop = 1} + \frac{\nControls}{\avgpscore n} \E[\big]{\abs{\POcpop} \given \Wpop = 0}.
\end{equation}
The law of total expectation gives:
\begin{equation}
	\E[\big]{\abs{\POcpop}} = \avgpscore \E[\big]{\abs{\POcpop} \given \Wpop = 1} + \paren{1 - \avgpscore}\E[\big]{\abs{\POcpop} \given \Wpop = 0},
\end{equation}
so:
\begin{equation}
	\E[\big]{\abs{\POcpop} \given \Wpop = 1} \leq \frac{\E[\big]{\abs{\POcpop}}}{\avgpscore}.
\end{equation}
Lemma~\ref{lem:bounded-POs} ensures that $\E[\big]{\abs{\POcpop}}$ exists.
By a similar argument:
\begin{equation}
	\E[\big]{\abs{\POcpop} \given \Wpop = 0} \leq \frac{\E[\big]{\abs{\POcpop}}}{1 - \avgpscore} \leq \frac{\E[\big]{\abs{\POcpop}}}{\avgpscore}.
\end{equation}
Because $\nTreated + \nControls = n$:
\begin{equation}
	\frac{\nTreated}{\avgpscore n} \E[\big]{\abs{\POcpop} \given \Wpop = 1} + \frac{\nControls}{\avgpscore n} \E[\big]{\abs{\POcpop} \given \Wpop = 0}
	\leq
	\frac{\E[\big]{\abs{\POcpop}}}{\avgpscore^2}.
\end{equation}
It follows that:
\begin{multline}
	\E[\Bigg]{\frac{\abs{\maxf{1, \nTreated} - \avgpscore n}}{\maxf{1, \nTreated}} \E[\bigg]{\frac{1}{\avgpscore n} \sum_{i \in \Treated} \abs{\POci} + \frac{1}{\avgpscore n} \sum_{i \in \Controls} \abs{\POci} \given \Treated}}
	\\
	\leq
	\E[\Bigg]{\frac{\abs{\maxf{1, \nTreated} - \avgpscore n}}{\maxf{1, \nTreated}}} \frac{\E[\big]{\abs{\POcpop}}}{\avgpscore^2}.
\end{multline}

Consider the expectation for large samples.
In particular, consider when $\avgpscore^2 n / 4 \geq \logf{n} > 1$.
By the law of total expectation:
\begin{align}
	&\E[\Bigg]{\frac{\abs{\maxf{1, \nTreated} - \avgpscore n}}{\maxf{1, \nTreated}}}
	\\
	&\quad =
	\Pr[\Big]{\abs[\big]{\nTreated - \avgpscore n} < \sqrt{n \logf{n}}} \E[\Bigg]{\frac{\abs{\maxf{1, \nTreated} - \avgpscore n}}{\maxf{1, \nTreated}} \given \abs[\big]{\nTreated - \avgpscore n} < \sqrt{n \logf{n}}}
	\\
	&\qquad +
	\Pr[\Big]{\abs[\big]{\nTreated - \avgpscore n} \geq \sqrt{n \logf{n}}} \E[\Bigg]{\frac{\abs{\maxf{1, \nTreated} - \avgpscore n}}{\maxf{1, \nTreated}} \given \abs[\big]{\nTreated - \avgpscore n} \geq \sqrt{n \logf{n}}}.
\end{align}

The first probability is crudely bounded as:
\begin{equation}
	\Pr[\Big]{\abs[\big]{\nTreated - \avgpscore n} < \sqrt{n \logf{n}}} \leq 1.
\end{equation}
Recall that $\avgpscore^2 n / 4 \geq \logf{n}$, which together with $\abs[\big]{\nTreated - \avgpscore n} < \sqrt{n \logf{n}}$, implies that $\nTreated > \avgpscore n / 2$.
Use this to bound the first expectation as:
\begin{equation}
	\E[\Bigg]{\frac{\abs{\maxf{1, \nTreated} - \avgpscore n}}{\maxf{1, \nTreated}} \given \abs[\big]{\nTreated - \avgpscore n} \leq \sqrt{n \logf{n}}}
	\leq
	\frac{2}{\avgpscore}\sqrt{\frac{\logf{n}}{n}}.
\end{equation}

Bound the second probability using Hoeffding's inequality (Theorem~\ref{thm:hoeffdings-inequality}):
\begin{equation}
	\Pr[\Big]{\abs[\big]{\nTreated - \avgpscore n} \geq \sqrt{n \logf{n}}} = \Pr[\Big]{\abs[\big]{\nTreated / n - \avgpscore} \geq \sqrt{\logf{n} / n}} \leq 2 \expf{-2\logf{n}} = \frac{2}{n^2},
\end{equation}
and the second expectation is bounded as:
\begin{equation}
	\E[\Bigg]{\frac{\abs{\maxf{1, \nTreated} - \avgpscore n}}{\maxf{1, \nTreated}} \given \abs[\big]{\nTreated - \avgpscore n} \geq \sqrt{n \logf{n}}}
	\leq
	n.
\end{equation}
Taken together, when $\avgpscore^2 n / 4 \geq \logf{n} > 1$:
\begin{equation}
	\E[\Bigg]{\frac{\abs{\maxf{1, \nTreated} - \avgpscore n}}{\maxf{1, \nTreated}}} \leq \frac{2}{\avgpscore}\sqrt{\frac{\logf{n}}{n}} + \frac{2}{n} \leq \frac{4}{\avgpscore}\sqrt{\frac{\logf{n}}{n}}.
\end{equation}

Returning to the full expression:
\begin{equation}
	\E[\Bigg]{\frac{\abs{\maxf{1, \nTreated} - \avgpscore n}}{\maxf{1, \nTreated}}} \frac{\E[\big]{\abs{\POcpop}}}{\avgpscore^2}
	\leq
\frac{4 \E[\big]{\abs{\POcpop}}}{\avgpscore^3} \sqrt{\frac{\logf{n}}{n}}. \tag*{\qedhere}
\end{equation}
\end{proof}

\begin{lemma}\label{lem:bounded-POs}
	Under Condition~\refmain{cond:reg-cond}, $\E[\big]{\abs{\POcpop}}$ exists.
\end{lemma}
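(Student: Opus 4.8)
The plan is to reduce the unconditional first absolute moment of $\POcpop$ to the corresponding moment among the control units, where $\POcpop$ is actually observed as the realized outcome. On the event $\set{\Wpop = 0}$ we have $\Ypop = \POpop{\Wpop} = \POcpop$, and the ``well-behaved outcomes'' part of Condition~\refmain{cond:reg-cond} gives $\E{\Ypop^2} < \infty$. Since $\Pr{\Wpop = 0} = 1 - \avgpscore > 1/2$ by the ``abundance of control units'' part, I would first note
\begin{equation}
	\E[\big]{\Ypop^2 \given \Wpop = 0} \leq \frac{\E{\Ypop^2}}{1 - \avgpscore} < \infty,
\end{equation}
and then apply Jensen's inequality to conclude $\E[\big]{\abs{\POcpop} \given \Wpop = 0} = \E[\big]{\abs{\Ypop} \given \Wpop = 0} \leq \bigl(\E[\big]{\Ypop^2 \given \Wpop = 0}\bigr)^{1/2} < \infty$. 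The genuine difficulty is that $\POcpop$ is \emph{never} observed for the treated units, so a finite control-side moment does not immediately control the moment over the whole population; this is the step where the identifying assumptions must do the work.

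To handle the treated units I would transfer the control-side bound using the balancing property and an importance-weighting identity. Define $g(p) = \E[\big]{\abs{\POcpop} \given \rPSpop = p}$. By Theorem~\ref{thm:pscore-balancing}, $\POcpop$ is independent of $\Wpop$ given $\rPSpop$, so $g(p) = \E[\big]{\abs{\POcpop} \given \rPSpop = p, \Wpop = 0} = \E[\big]{\abs{\Ypop} \given \rPSpop = p, \Wpop = 0}$, which ties $g$ to observable quantities. Because $g(\rPSpop)$ is a function of $\rPSpop$ alone and $\E[\big]{\indicator{\Wpop = 0} \given \rPSpop} = 1 - \rPSpop$, the tower property yields the identity
\begin{equation}
	\E[\big]{g(\rPSpop)\paren{1 - \rPSpop}} = \E[\big]{g(\rPSpop)\indicator{\Wpop = 0}} = \paren{1 - \avgpscore}\,\E[\big]{\abs{\Ypop} \given \Wpop = 0},
\end{equation}
the right-hand side of which is finite by the previous paragraph.

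Finally I would invoke the ``overlap'' part of Condition~\refmain{cond:reg-cond}: since $\pscore{x}$ is bounded away from one on the support of $\Xvpop$, there is some $\delta > 0$ with $1 - \rPSpop \geq \delta$ almost surely. As $g \geq 0$, this gives $g(\rPSpop)\paren{1 - \rPSpop} \geq \delta\, g(\rPSpop)$, and combining with the identity above,
\begin{equation}
	\E[\big]{\abs{\POcpop}} = \E[\big]{g(\rPSpop)} \leq \frac{1}{\delta}\,\E[\big]{g(\rPSpop)\paren{1 - \rPSpop}} = \frac{1 - \avgpscore}{\delta}\,\E[\big]{\abs{\Ypop} \given \Wpop = 0} < \infty.
\end{equation}
The main obstacle, and the only place the argument is more than bookkeeping, is precisely the treated units: one cannot bound $\E[\big]{\abs{\POcpop} \given \Wpop = 1}$ directly from $\E{\Ypop^2}$, and the combination of unconfoundedness (to identify $g$ from controls) with a bounded importance weight $\paren{1-\rPSpop}^{-1}$ supplied by overlap is what makes the transfer go through. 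An essentially identical argument phrased in terms of $\Xvpop$ and Condition~\refmain{cond:reg-cond}'s unconfoundedness, rather than the propensity score, would also work; I would use the propensity-score version because the reweighting factor is then explicit.
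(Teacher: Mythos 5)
Your proof is correct and follows essentially the same route as the paper's: both establish finiteness of the control-side moment from $\E{\Ypop^2}$, identify $\E[\big]{\abs{\POcpop} \given \rPSpop}$ from the controls via Theorem~\ref{thm:pscore-balancing}, and transfer to the rest of the population with a reweighting factor that overlap keeps bounded. Your version is marginally cleaner in that it bounds $\E[\big]{\abs{\POcpop}}$ directly through nonnegative quantities, avoiding the paper's ``assume for the moment that $\E[\big]{\abs{\POcpop} \given \Wpop = 1}$ exists'' device.
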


\begin{proof}
Condition~\refmain{cond:reg-cond} states that $\E{\Ypop^2}$ exists.
Together with Lyapunov's inequality, this implies that $\E[\big]{\abs{\Ypop}}$ exists as well.
By the law of total expectation:
\begin{equation}
	\E[\big]{\abs{\Ypop}} = \avgpscore \E[\big]{\abs{\Ypop} \given \Wpop = 1} + \paren{1 - \avgpscore}\E[\big]{\abs{\Ypop} \given \Wpop = 0} \geq \paren{1 - \avgpscore} \E[\big]{\abs{\Ypop} \given \Wpop = 0},
\end{equation}
so $\E[\big]{\abs{\POcpop} \given \Wpop = 0} = \E[\big]{\abs{\Ypop} \given \Wpop = 0}$ exists.

Using the law of iterated expectations and unconfoundedness with respect to the propensity score (Theorem~\ref{thm:pscore-balancing}):
\begin{equation}
	\E[\big]{\abs{\POcpop} \given \Wpop = 0}
	= \E[\Big]{\E[\big]{\abs{\POcpop} \given \rPSpop, \Wpop = 0} \given \Wpop = 0}
	= \E[\Big]{\E[\big]{\abs{\POcpop} \given \rPSpop} \given \Wpop = 0},
\end{equation}
Assuming for the moment that $\E[\big]{\abs{\POcpop} \given \Wpop = 1}$ exists, then:
\begin{equation}
	\E[\big]{\abs{\POcpop} \given \Wpop = 1}
	= \E[\Big]{\E[\big]{\abs{\POcpop} \given \rPSpop} \given \Wpop = 1}
	= \frac{1 - \avgpscore}{\avgpscore} \E[\bigg]{\frac{\rPSpop}{1 - \rPSpop}\E[\big]{\abs{\POcpop} \given \rPSpop} \given \Wpop = 0},
\end{equation}
because $\rPSpop / \paren{1 - \rPSpop}$ is the ratio of treated units to control units at each value of the propensity score in the population, which is what the expectation is marginalizing over.
However, because the control units are more numerous in the population, the $\paren{1 - \avgpscore} / \avgpscore$ factor is needed for normalization.

Condition~\refmain{cond:reg-cond} states that the propensity score is bounded away from one.
Hence, $\rPSpop \leq \suppPSp < 1$ with probability one, where $\suppPSp = \sup \suppPS$ is the upper bound of the support of $\rPSpop$.
Thus, with probability one:
\begin{equation}
	\frac{\rPSpop}{1 - \rPSpop} \leq \frac{1}{1 - \suppPSp},
\end{equation}
and:
\begin{multline}
	\frac{1 - \avgpscore}{\avgpscore} \E[\bigg]{\frac{\rPSpop}{1 - \rPSpop}\E[\big]{\abs{\POcpop} \given \rPSpop} \given \Wpop = 0}
	\leq
	\frac{1}{\avgpscore \paren{1 - \suppPSp}} \E[\Big]{\E[\big]{\abs{\POcpop} \given \rPSpop} \given \Wpop = 0}
	\\
	=
	\frac{\E[\big]{\abs{\POcpop} \given \Wpop = 0}}{\avgpscore \paren{1 - \suppPSp}},
\end{multline}
where $\avgpscore > 0$ according to Condition~\refmain{cond:reg-cond}.
The conclusion is that $\E[\big]{\abs{\POcpop} \given \Wpop = 1}$ exists.
It follows from the law of total expectation that:
\begin{equation}
	\avgpscore \E[\big]{\abs{\POcpop} \given \Wpop = 1} + \paren{1 - \avgpscore} \E[\big]{\abs{\POcpop} \given \Wpop = 0}
	= \E[\big]{\abs{\POcpop}},
\end{equation}
exists as well.
\end{proof}

\begin{lemma}\label{lem:mdiscrep-bias-step2}
	Under Condition~\refmain{cond:reg-cond}:
	\begin{multline}
		\lim_{n \to \infty} \Pr{\nTreated = 0} \E[\big]{\altmdiscrep \given 1 \leq \nTreated \leq \nControls} = 0
		\qquad\text{and}
		\\
		\lim_{n \to \infty} \Pr{\nTreated > \nControls} \paren[\Big]{\E[\big]{\altmdiscrep \given \nTreated > \nControls} - \E[\big]{\altmdiscrep \given 1 \leq \nTreated \leq \nControls}} = 0,
	\end{multline}
	where $\altmdiscrep$ is defined in the proof of Lemma~\ref{lem:mdiscrep-bias}.
\end{lemma}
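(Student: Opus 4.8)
The plan is to recognize that each of the two displayed quantities is a probability that decays exponentially in $n$ multiplied by a conditional expectation of $\altmdiscrep$ that grows at most polynomially in $n$; exponential decay then dominates. Since $\avgpscore < 1/2$ by Condition~\refmain{cond:reg-cond}, the events $\setb{\nTreated = 0}$ and $\setb{\nTreated > \nControls}$ are both rare, and I would quantify this first. We have $\Pr{\nTreated = 0} = \paren{1 - \avgpscore}^n$. For the second event, note that $\setb{\nTreated > \nControls} = \setb{\nTreated / n > 1/2}$, so applying Hoeffding's inequality (Theorem~\ref{thm:hoeffdings-inequality}) with $t = 1/2 - \avgpscore > 0$ gives $\Pr{\nTreated > \nControls} \leq \expf{-2n\paren{1/2 - \avgpscore}^2}$. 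Both probabilities therefore vanish faster than any polynomial.

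Next I would establish crude bounds on the two conditional expectations. Writing $\mathbf{A} = \Treated \cup \Matched$ and using $\abs{\altmdiscrep} \leq \frac{1}{\maxf{1, \nTreated}} \sum_{i \in \mathbf{A}} \abs{\POci}$, the crucial move is to replace the matching-dependent set $\Matched$ by $\Controls$ through $\sum_{i \in \Matched} \abs{\POci} \leq \sum_{i \in \Controls} \abs{\POci}$; this removes the dependence on the covariate-driven matching and leaves a $\Treated$-measurable sum. Conditioning on $\Treated$ and using that $\Treated$ carries no more information about $\POci$ than $\Wi$ (as in the proof of Lemma~\ref{lem:mdiscrep-bias-step1}), the conditional expectation factorizes into $\nTreated \E{\abs{\POcpop} \given \Wpop = 1} + \nControls \E{\abs{\POcpop} \given \Wpop = 0}$. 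The bound $\E{\abs{\POcpop} \given \Wpop = w} \leq \E{\abs{\POcpop}}/\avgpscore$ from Lemma~\ref{lem:mdiscrep-bias-step1}, whose existence is guaranteed by Lemma~\ref{lem:bounded-POs}, then yields $\E[\big]{\abs{\altmdiscrep} \given \Treated} \leq \paren{1 + \nControls / \nTreated} \E{\abs{\POcpop}}/\avgpscore$ whenever $\nTreated \geq 1$. On the event $1 \leq \nTreated \leq \nControls$ the ratio satisfies $\nControls / \nTreated \leq n$, giving $\E[\big]{\abs{\altmdiscrep} \given 1 \leq \nTreated \leq \nControls} \leq \paren{1 + n} \E{\abs{\POcpop}}/\avgpscore = \bigO{n}$. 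On the event $\nTreated > \nControls$ we have $\Matched = \Controls$, so $\mathbf{A} = \Sample$ and $\maxf{1, \nTreated} = \nTreated > n/2$, which gives the sharper constant bound $\E[\big]{\abs{\altmdiscrep} \given \nTreated > \nControls} \leq 2 \E{\abs{\POcpop}}/\avgpscore$.

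Combining the pieces finishes the argument. For the first statement, $\Pr{\nTreated = 0}\,\E[\big]{\abs{\altmdiscrep} \given 1 \leq \nTreated \leq \nControls} \leq \paren{1 - \avgpscore}^n \paren{1 + n}\E{\abs{\POcpop}}/\avgpscore \to 0$. For the second, the triangle inequality splits the expression into $\Pr{\nTreated > \nControls}$ times each of the two conditional expectations; the first term is bounded by $\expf{-2n\paren{1/2-\avgpscore}^2} \cdot 2\E{\abs{\POcpop}}/\avgpscore$ and the second by $\expf{-2n\paren{1/2-\avgpscore}^2}\paren{1+n}\E{\abs{\POcpop}}/\avgpscore$, and both tend to zero because the exponential decay dominates the polynomial factor. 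The one place demanding care is the conditional-expectation bound, since the matching set $\Matched$ depends on the covariates rather than merely on $\Treated$; the inclusion $\Matched \subseteq \Controls$ is exactly what lets me bypass this dependence and reduce everything to sums over $\Treated$-measurable sets, so I would flag this as the main obstacle and the key step.
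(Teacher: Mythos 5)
Your proof is correct and follows essentially the same route as the paper's: the same exponential bounds on $\Pr{\nTreated = 0}$ and $\Pr{\nTreated > \nControls}$ via Hoeffding, the same reduction through $\Matched \subseteq \Controls$ to a $\Treated$-measurable sum, and the same factorization using $\E{\abs{\POcpop} \given \Wpop = w} \leq \E{\abs{\POcpop}} / \avgpscore$. The one divergence is the final step: you bound $\nControls / \nTreated \leq n$ on the event $1 \leq \nTreated \leq \nControls$ and let the exponential decay of the probabilities absorb the polynomial factor, whereas the paper splits this event once more with Hoeffding to establish $\limsup_{n \to \infty} \E[\big]{n / \nTreated \given 1 \leq \nTreated \leq \nControls} \leq 2 / \avgpscore$, so that the conditional expectation is asymptotically bounded by a constant. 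Both arguments suffice for this lemma, but the paper's sharper intermediate bound is not gratuitous: it is reused in the proofs of Lemma~\ref{lem:mdiscrep-bias-step3} and Lemma~\refmain{lem:bounded-var}, where no exponentially small probability is available to absorb a factor of $n$, so your shortcut would not transfer there.
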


\begin{proof}
Starting with the probabilities, note that $\Pr{\nTreated = 0} = \paren{1 - \avgpscore}^n$.
Next:
\begin{equation}
	\Pr{\nTreated > \nControls} = \Pr{\nTreated / n > 1 / 2} = \Pr[\big]{\nTreated / n - \avgpscore > \paren{1 - 2\avgpscore} / 2}.
\end{equation}
Note that $\avgpscore < 1 / 2$, so by Hoeffding's inequality (Theorem~\ref{thm:hoeffdings-inequality}):
\begin{equation}
	\Pr[\big]{\nTreated / n - \avgpscore > \paren{1 - 2\avgpscore} / 2} \leq \expf[\big]{-n\paren{1 - 2\avgpscore}^2 / 2}.
\end{equation}
It follows that:
\begin{equation}
	\lim_{n \to \infty} \Pr{\nTreated = 0} = 0
	\qquad\text{and}\qquad
	\lim_{n \to \infty} \Pr{\nTreated > \nControls} = 0.
\end{equation}

Now consider the expectations.
Note $\Matched \subseteq \Controls$, so:
\begin{equation}
	\abs[\big]{\E{\altmdiscrep \given \Treated}} \leq \E[\bigg]{\frac{1}{\nTreated} \sum_{i \in \Treated} \abs{\POci} + \frac{1}{\nTreated} \sum_{i \in \Controls} \abs{\POci} \given \Treated},
\end{equation}
when $\nTreated \geq 1$.
The set $\Treated$ contains no more information about $\POci$ than $\Wi$, so:
\begin{equation}
	\E[\bigg]{\frac{1}{\nTreated} \sum_{i \in \Treated} \abs{\POci} + \frac{1}{\nTreated} \sum_{i \in \Controls} \abs{\POci} \given \Treated}
	= \E[\big]{\abs{\POcpop} \given \Wpop = 1} + \frac{\nControls}{\nTreated}\E[\big]{\abs{\POcpop} \given \Wpop = 0}.
\end{equation}
As shown in the proof of Lemma~\ref{lem:mdiscrep-bias-step1}:
\begin{equation}
	\E[\big]{\abs{\POcpop} \given \Wpop = 0} \leq \frac{\E[\big]{\abs{\POcpop}}}{\avgpscore} \qquad\text{and}\qquad \E[\big]{\abs{\POcpop} \given \Wpop = 1} \leq \frac{\E[\big]{\abs{\POcpop}}}{\avgpscore},
\end{equation}
so:
\begin{equation}
	\E[\big]{\abs{\POcpop} \given \Wpop = 1} + \frac{\nControls}{\nTreated}\E[\big]{\abs{\POcpop} \given \Wpop = 0}
	\leq
	\paren[\bigg]{1 + \frac{\nControls}{\nTreated}} \frac{\E[\big]{\abs{\POcpop}}}{\avgpscore}.
\end{equation}
One of the expectations is thus bounded as:
\begin{equation}
	\abs[\big]{\E{\altmdiscrep \given \nTreated > \nControls}} \leq \E[\bigg]{1 + \frac{\nControls}{\nTreated} \given \nTreated > \nControls} \frac{\E[\big]{\abs{\POcpop}}}{\avgpscore} \leq \frac{2 \E[\big]{\abs{\POcpop}}}{\avgpscore}.
\end{equation}

For the other expectation, write:
\begin{equation}
	\abs[\big]{\E{\altmdiscrep \given 1 \leq \nTreated \leq \nControls}} \leq \E[\bigg]{\frac{n}{\nTreated} \given 1 \leq \nTreated \leq \nControls} \frac{\E[\big]{\abs{\POcpop}}}{\avgpscore}.
\end{equation}
Consider samples large enough to satisfy $\avgpscore n \geq 2$, and:
\begin{multline}
	\E[\bigg]{\frac{n}{\nTreated} \given 1 \leq \nTreated \leq \nControls}
	= \frac{n \Pr[\big]{1 \leq \nTreated \leq \avgpscore n / 2}}{\Pr[\big]{1 \leq \nTreated \leq \nControls}} \E[\bigg]{\frac{1}{\nTreated} \given 1 \leq \nTreated \leq \avgpscore n / 2}
	\\
	+ \frac{\Pr[\big]{\avgpscore n / 2 < \nTreated \leq \nControls}}{\Pr[\big]{1 \leq \nTreated \leq \nControls}} \E[\bigg]{\frac{n}{\nTreated} \given \avgpscore n / 2 < \nTreated \leq \nControls}.
\end{multline}
By Hoeffding's inequality (Theorem~\ref{thm:hoeffdings-inequality}):
\begin{equation}
	n \Pr[\big]{1 \leq \nTreated \leq \avgpscore n / 2} \leq n \Pr[\big]{\nTreated \leq \avgpscore n / 2} = n \Pr[\big]{\nTreated / n - \avgpscore \leq -\avgpscore / 2} \leq n \expf{-n\avgpscore^2 / 2},
\end{equation}
so $\lim_{n \to \infty} n \Pr[\big]{1 \leq \nTreated \leq \avgpscore n / 2} = 0$.
The start of the proof implies:
\begin{equation}
	\lim_{n \to \infty} \Pr[\big]{1 \leq \nTreated \leq \nControls} = 1,
\end{equation}
because $\Pr[\big]{1 \leq \nTreated \leq \nControls} = 1 - \Pr{\nTreated = 0} - \Pr{\nTreated > \nControls}$.
Bound the other parts as:
\begin{multline}
	\E[\bigg]{\frac{1}{\nTreated} \given 1 \leq \nTreated \leq \avgpscore n / 2} \leq 1,
	\qquad
	\Pr[\big]{\avgpscore n / 2 < \nTreated \leq \nControls} \leq 1
	\qquad\text{and}
	\\
	\E[\bigg]{\frac{n}{\nTreated} \given \avgpscore n / 2 < \nTreated \leq \nControls} \leq \frac{2}{\avgpscore}.
\end{multline}
Hence:
\begin{multline}
	\limsup_{n \to \infty} \E[\bigg]{\frac{n}{\nTreated} \given 1 \leq \nTreated \leq \nControls} \leq \frac{2}{\avgpscore}
	\qquad\text{and}\qquad
	\\
	\limsup_{n \to \infty} \abs[\big]{\E{\altmdiscrep \given 1 \leq \nTreated \leq \nControls}} \leq \frac{2 \E[\big]{\abs{\POcpop}}}{\avgpscore^2}. \tag*{\qedhere}
\end{multline}
\end{proof}

\begin{lemma}\label{lem:mdiscrep-bias-step3}
	Under Condition~\refmain{cond:reg-cond}:
	\begin{equation}
		\lim_{n \to \infty} \bracket[\big]{\Pr{\nTreated = 0} + \Pr{\nTreated > \nControls}} \E[\big]{\attest \given 1 \leq \nTreated \leq \nControls} = 0.
	\end{equation}
\end{lemma}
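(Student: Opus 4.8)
The plan is to factor the left-hand side into a vanishing first factor and an asymptotically bounded second factor, and then appeal to estimates already assembled in the proofs of Lemmas~\ref{lem:mdiscrep-bias-step1} and~\ref{lem:mdiscrep-bias-step2}.

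First I would dispose of the bracketed factor. The proof of Lemma~\ref{lem:mdiscrep-bias-step2} already records $\Pr{\nTreated = 0} = \paren{1 - \avgpscore}^n$ and, via Hoeffding's inequality, $\Pr{\nTreated > \nControls} \leq \expf[\big]{-n\paren{1 - 2\avgpscore}^2 / 2}$, both of which vanish because $0 < \avgpscore < 1 / 2$ under Condition~\refmain{cond:reg-cond}. Hence $\Pr{\nTreated = 0} + \Pr{\nTreated > \nControls} \to 0$, in fact exponentially fast.

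The substance is to bound the conditional expectation $\E{\attest \given 1 \leq \nTreated \leq \nControls}$. Since the matching is without replacement, $\moptsym$ is injective, so the matched controls $\setb{\mopt{i} \suchthat i \in \Treated}$ are distinct members of $\Controls$; thus on the event $1 \leq \nTreated \leq \nControls$,
\begin{equation}
	\abs{\attest} \leq \frac{1}{\nTreated}\paren[\bigg]{\sum_{i \in \Treated}\abs{\Yi} + \sum_{i \in \Controls}\abs{\Yi}}.
\end{equation}
Conditioning on $\Treated$, using that $\Treated$ carries no more information about $\Yi$ than $\Wi$, and invoking the total-expectation bounds $\E{\abs{\Ypop} \given \Wpop = w} \leq \E{\abs{\Ypop}} / \avgpscore$ (the same argument as in the proof of Lemma~\ref{lem:mdiscrep-bias-step1}, with $\E{\abs{\Ypop}}$ existing by Lyapunov's inequality since $\E{\Ypop^2}$ does), I would obtain
\begin{equation}
	\E[\big]{\abs{\attest} \given \Treated} \leq \frac{n}{\nTreated}\cdot\frac{\E{\abs{\Ypop}}}{\avgpscore}.
\end{equation}
Taking a further expectation and inserting the estimate $\limsup_{n \to \infty}\E[\big]{n / \nTreated \given 1 \leq \nTreated \leq \nControls} \leq 2 / \avgpscore$ established inside the proof of Lemma~\ref{lem:mdiscrep-bias-step2} gives $\limsup_{n \to \infty}\abs{\E{\attest \given 1 \leq \nTreated \leq \nControls}} \leq 2\E{\abs{\Ypop}} / \avgpscore^2 < \infty$.

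Combining the two factors then finishes the argument, since the product of a sequence tending to zero and a sequence that is eventually bounded tends to zero. I expect the second factor to be the only real obstacle: the conditioning event permits $\nTreated$ to be as small as one, in which case a single realization of $\attest$ can be of order $n$. What rescues the bound is precisely the control of $\E{n / \nTreated \given 1 \leq \nTreated \leq \nControls}$ from Lemma~\ref{lem:mdiscrep-bias-step2}, which shows that the conditional mass placed on very small $\nTreated$ decays fast enough to keep this expectation bounded.
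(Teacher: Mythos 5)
Your proposal is correct and follows essentially the same route as the paper's proof: the same factorization into a vanishing probability (via the exponential bounds from Lemma~\ref{lem:mdiscrep-bias-step2}) and a bounded conditional expectation, the same $\Matched \subseteq \Controls$ crude bound, the same conditioning on $\Treated$, and the same reuse of $\limsup_{n \to \infty}\E[\big]{n / \nTreated \given 1 \leq \nTreated \leq \nControls} \leq 2 / \avgpscore$, arriving at the identical bound $2\E{\abs{\Ypop}} / \avgpscore^2$.
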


\begin{proof}
The proof follows the structure of the proof of Lemma~\ref{lem:mdiscrep-bias-step2} closely.
It was there shown that:
\begin{equation}
	\lim_{n \to \infty} \Pr{\nTreated = 0} = 0
	\qquad\text{and}\qquad
	\lim_{n \to \infty} \Pr{\nTreated > \nControls} = 0.
\end{equation}
It remains to show that the expectation is bounded.
Recall that $\Matched = \setb{\mopt{i} \suchthat i \in \Treated}$ when $1 \leq \nTreated \leq \nControls$, so in that case:
\begin{equation}
	\attest = \frac{1}{\nTreated} \sum_{i \in \Treated} \paren[\big]{\Yi - \Y{\mopt{i}}} = \frac{1}{\nTreated} \sum_{i \in \Treated} \Yi + \frac{1}{\nTreated} \sum_{i \in \Matched} \Yi.
\end{equation}
As above, note $\Matched \subseteq \Controls$, so when $1 \leq \nTreated \leq \nControls$:
\begin{equation}
	\abs[\big]{\E{\attest \given \Treated}} \leq \E[\bigg]{\frac{1}{\nTreated} \sum_{i \in \Treated} \abs{\Yi} + \frac{1}{\nTreated} \sum_{i \in \Controls} \abs{\Yi} \given \Treated}.
\end{equation}
The set $\Treated$ contains no more information about $\Yi$ than $\Wi$, so:
\begin{equation}
	\E[\bigg]{\frac{1}{\nTreated} \sum_{i \in \Treated} \abs{\Yi} + \frac{1}{\nTreated} \sum_{i \in \Controls} \abs{\Yi} \given \Treated}
	= \E[\big]{\abs{\Ypop} \given \Wpop = 1} + \frac{\nControls}{\nTreated}\E[\big]{\abs{\Ypop} \given \Wpop = 0}.
\end{equation}
and by the same argument as in the previous proofs:
\begin{equation}
	\E[\big]{\abs{\Ypop} \given \Wpop = 1} + \frac{\nControls}{\nTreated}\E[\big]{\abs{\Ypop} \given \Wpop = 0} \leq
	\paren[\bigg]{1 + \frac{\nControls}{\nTreated}} \frac{\E[\big]{\abs{\Ypop}}}{\avgpscore},
\end{equation}
where Condition~\refmain{cond:reg-cond} ensures that $\E[\big]{\abs{\Ypop}}$ exists.
It follows that:
\begin{equation}
	\abs[\big]{\E{\attest \given 1 \leq \nTreated \leq \nControls}} \leq \E[\bigg]{\frac{n}{\nTreated} \given 1 \leq \nTreated \leq \nControls} \frac{\E[\big]{\abs{\Ypop}}}{\avgpscore}.
\end{equation}
The first expectation on the right-hand side was shown to be asymptotically bounded in the proof of Lemma~\ref{lem:mdiscrep-bias-step2}, so:
\begin{equation}
	\limsup_{n \to \infty} \abs[\big]{\E{\attest \given 1 \leq \nTreated \leq \nControls}} \leq \frac{2 \E[\big]{\abs{\Ypop}}}{\avgpscore^2}. \tag*{\qedhere}
\end{equation}
\end{proof}

\begin{lemma}\label{lem:mdiscrep-bias-step4}
	\begin{equation}
		\E[\big]{\attest \given 1 \leq \nTreated \leq \nControls} - \att = \E[\bigg]{\frac{1}{\nTreated} \sum_{i \in \Treated} \POci - \frac{1}{\nTreated} \sum_{i \in \Matched} \POci \given 1 \leq \nTreated \leq \nControls}.
	\end{equation}
\end{lemma}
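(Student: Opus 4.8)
The plan is to rewrite the conditional expectation of the estimator in terms of the control potential outcomes and then isolate the average individual treatment effect among the treated. Throughout I condition on the event $1 \leq \nTreated \leq \nControls$, so that the matching exists and $\Matched = \setb{\mopt{i} \suchthat i \in \Treated}$. Since $\moptsym$ is injective, summing $\Y{\mopt{i}}$ over $i \in \Treated$ is the same as summing $\Yi$ over $i \in \Matched$, which lets me write
\begin{equation}
	\attest = \frac{1}{\nTreated} \sum_{i \in \Treated} \Yi - \frac{1}{\nTreated} \sum_{i \in \Matched} \Yi.
\end{equation}
Each treated unit has $\Wi = 1$, so $\Yi = \POti$, and each matched control has $\Wi = 0$, so $\Yi = \POci$.

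Next I would add and subtract $\frac{1}{\nTreated}\sum_{i \in \Treated} \POci$ to split the estimator into two pieces:
\begin{equation}
	\attest = \frac{1}{\nTreated} \sum_{i \in \Treated} \paren[\big]{\POti - \POci} + \paren[\bigg]{\frac{1}{\nTreated} \sum_{i \in \Treated} \POci - \frac{1}{\nTreated} \sum_{i \in \Matched} \POci}.
\end{equation}
The second parenthesized piece is exactly the quantity whose conditional expectation appears on the right-hand side of the lemma, so it cancels by construction. It therefore remains to show that the first piece, the average of the individual treatment effects over the treated units, has conditional expectation equal to $\att$.

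For this step I would condition further on the full assignment vector $\paren{\W{1}, \dotsc, \W{n}}$, which determines both $\Treated$ and $\nTreated$. Because the units are drawn independently, unit $i$'s potential outcomes depend on the assignment vector only through $\Wi$, and hence for each $i \in \Treated$,
\begin{equation}
	\Et[\big]{\POti - \POci \given \W{1}, \dotsc, \W{n}} = \Es[\big]{\POtpop - \POcpop \given \Wpop = 1} = \att.
\end{equation}
Averaging over the $\nTreated$ treated units gives $\att$ whenever $\nTreated \geq 1$, a value that does not depend on the realized assignment vector. Applying the tower property over the event $1 \leq \nTreated \leq \nControls$, which forces $\nTreated \geq 1$, then shows the first piece has conditional expectation $\att$, and rearranging produces the stated identity.

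The main point requiring care is this independence argument: I must confirm that conditioning on the entire assignment vector, rather than on $\Wi$ alone, leaves the conditional mean of unit $i$'s treatment effect unchanged, and that the subsequent conditioning on $1 \leq \nTreated \leq \nControls$ is harmless precisely because the inner conditional expectation is the constant $\att$ on that whole event. Everything else is bookkeeping, since the matched-control sum $\frac{1}{\nTreated}\sum_{i \in \Matched}\POci$ is common to the estimator and to the target quantity.
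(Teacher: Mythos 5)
Your proposal is correct and follows essentially the same route as the paper's proof: rewrite the estimator using $\Yi = \POti$ on $\Treated$ and $\Yi = \POci$ on $\Matched$, add and subtract $\nTreated^{-1}\sum_{i \in \Treated}\POci$, and then use the fact that the assignment vector (equivalently, the set $\Treated$) carries no information about a unit's potential outcomes beyond its own $\Wi$ to reduce the residual term to $\att$. The only cosmetic difference is that you condition on the full assignment vector where the paper conditions on $\Treated$, which is the same thing.
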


\begin{proof}
In was shown in the proof of Lemma~\ref{lem:mdiscrep-bias-step3} that:
\begin{equation}
	\attest = \frac{1}{\nTreated} \sum_{i \in \Treated} \Yi - \frac{1}{\nTreated} \sum_{i \in \Matched} \Yi,
\end{equation}
when $1 \leq \nTreated \leq \nControls$.
Add and subtract $\nTreated^{-1}\sum_{i \in \Treated} \POci$ from the estimator to get:
\begin{multline}
	\E[\big]{\attest \given 1 \leq \nTreated \leq \nControls} = \E[\bigg]{\frac{1}{\nTreated} \sum_{i \in \Treated} \POci - \frac{1}{\nTreated} \sum_{i \in \Matched} \POci \given 1 \leq \nTreated \leq \nControls}
	\\
	+ \E[\bigg]{\frac{1}{\nTreated} \sum_{i \in \Treated} \POti - \frac{1}{\nTreated} \sum_{i \in \Treated} \POci \given 1 \leq \nTreated \leq \nControls}.
\end{multline}
where $\Yi = \POti$ for $i \in \Treated$ and $\Yi = \POci$ for $i \in \Matched \subseteq \Controls$ was used.
As above, $\Treated$ contains no more information about $\POci$ or $\POti$ than $\Wi$, so when $\nTreated \geq 1$:
\begin{equation}
	\E[\bigg]{\frac{1}{\nTreated} \sum_{i \in \Treated} \POti \given \Treated} = \E{\POtpop \given \Wpop = 1}
	\quad\text{and}\quad
	\E[\bigg]{\frac{1}{\nTreated} \sum_{i \in \Treated} \POci \given \Treated} = \E{\POcpop \given \Wpop = 1}.
\end{equation}
It follows that:
\begin{equation}
	\E[\bigg]{\frac{1}{\nTreated} \sum_{i \in \Treated} \POti - \frac{1}{\nTreated} \sum_{i \in \Treated} \POci \given 1 \leq \nTreated \leq \nControls} = \E{\POtpop - \POcpop \given \Wpop = 1} = \att. \tag*{\qedhere}
\end{equation}
\end{proof}

\subsection{Proofs of Lemmas~\ref*{lem:pscore-mdiscrep-term1},~\ref*{lem:pscore-mdiscrep-term1-no-upper-unmatched},~\ref*{lem:pscore-mdiscrep-term1-no-upper-unmatched-middle-part},~\ref*{lem:pscore-mdiscrep-term1-no-upper-unmatched-atom-point}~and~\ref*{lem:pscore-mdiscrep-term1-upper-balance}}

\begin{lemma}\label{lem:pscore-mdiscrep-term1}
	Under Conditions~\refmain{cond:reg-cond},~\refmain{cond:pscore-lipschitz} and~\refmain{cond:left-closed}:
	\begin{equation}
		\lim_{n \to \infty} \E[\bigg]{\frac{1}{\avgpscore n} \sum_{i \in \ControlsPSp} \POci - \frac{1}{\avgpscore n} \sum_{i \in \MatchedPSp} \POci} = 0.
	\end{equation}
\end{lemma}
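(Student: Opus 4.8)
The plan is to reduce the statement to a bound on cardinalities. Conditioning on the propensity scores $(\rPS{1}, \dotsc, \rPS{n})$ and the treatment indicators $(\W{1}, \dotsc, \W{n})$ fixes the optimal matching and hence fixes the index sets $\ControlsPSp$ and $\MatchedPSp$ of Definition~\ref{def:PSpoint-partition}. Because the sampling is independent across units and $\POcpop$ is conditionally independent of $\Wpop$ given the propensity score (Theorem~\ref{thm:pscore-balancing}), the conditional expectation of each $\POci$ equals $\EPOPSc{\rPSi}$ irrespective of how the matching sorts the units. Overlap places the support of $\rPSpop$ inside $[0, \suppPSp]$ with $\suppPSp < 1$, and Condition~\refmain{cond:pscore-lipschitz} makes $\EPOPSc{p}$ Lipschitz and therefore bounded by some constant $C$ on that support. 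Writing the difference of the two sums as a single sum over the symmetric difference of the index sets, I would bound the quantity in the statement in absolute value by $\frac{C}{\avgpscore}\E{\paren[\big]{\card{\ControlsPSp \setminus \MatchedPSp} + \card{\MatchedPSp \setminus \ControlsPSp}} / n}$, so that it suffices to show the expected normalized cardinality of this symmetric difference tends to zero.

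The geometric input is Lemma~\ref{lem:no-crossing-matches}. First I would show that, with probability tending to one, there is an unmatched control unit whose score $p^{\circ}$ lies in a shrinking band $[\PSpoint - \delta_n, \PSpoint)$ for a suitable sequence $\delta_n \to 0$. Below $\PSpoint$ the local treatment probability satisfies $\Pr{\Wpop = 1 \given \rPSpop = p} = p < 1/2$, so controls are locally abundant, and a Hoeffding bound (Theorem~\ref{thm:hoeffdings-inequality}) on the counts of treated and control units falling in the band yields a surplus of controls there with overwhelming probability, which forces at least one band control to be unmatched. Lemma~\ref{lem:no-crossing-matches} then implies that no match straddles the score $p^{\circ}$, so the matching decomposes at the band and the units above it are matched only among themselves. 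Since $\Pr{\Wpop = 1 \given \rPSpop \geq \PSpoint} \geq 1/2$, with Condition~\refmain{cond:left-closed} used to place any atom at $\PSpoint$ in the upper group, treated units are at least as numerous as controls above the band, so every control with $\rPSi \geq \PSpoint$ is matched apart from a vanishing remainder.

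With all upper controls matched, $\ControlsPSp$ and $\MatchedPSp$ can differ only through matches linking the band $(\PSpoint - \delta_n, \PSpoint)$ to the region above it. A control in $\ControlsPSp$ escapes $\MatchedPSp$ only when it is matched to a treated unit inside the band, and a control in $\MatchedPSp$ escapes $\ControlsPSp$ only when it is itself inside the band; in either case the exceptional unit is counted by the number of sample units with score in $(\PSpoint - \delta_n, \PSpoint)$, up to the negligible count of unmatched upper controls. As $\delta_n \to 0$ and the distribution carries no atom strictly below $\PSpoint$ that survives the limit, the population mass $\Pr{\rPSpop \in (\PSpoint - \delta_n, \PSpoint)}$ tends to zero, and a further concentration bound shows that the expected number of sample units in the band, divided by $n$, also tends to zero. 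Combined with the reduction of the first paragraph, this establishes the lemma.

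The main obstacle is the second step: establishing \emph{unconditionally}, without conditioning on the covariates, that an unmatched control sits just below $\PSpoint$ with high probability and that its band may be taken to shrink. Because the optimal matching is globally coupled, inserting a single unit can ripple through a long chain of rematches, so a naive pointwise argument is unavailable; the surplus-of-controls argument must instead be run uniformly over the shrinking band while correctly absorbing a possible atom at $\PSpoint$ through Condition~\refmain{cond:left-closed}. This is presumably why the full proof decomposes into separate components treating the absence of unmatched upper controls, the behaviour in the middle band, and the atom at $\PSpoint$, each feeding into the present lemma.
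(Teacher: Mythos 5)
Your opening reduction is exactly the paper's: condition on the scores and assignments so that the matching is fixed, invoke Theorem~\ref{thm:pscore-balancing} to replace each $\POci$ by $\EPOPSc{\rPSi}$, bound that function by a constant using Lipschitz continuity on the bounded support, and reduce the lemma to showing that the expected normalized cardinality of the symmetric difference of $\ControlsPSp$ and $\MatchedPSp$ vanishes. That part is correct and is what the paper's own proof of this lemma does before delegating to Lemmas~\ref{lem:pscore-mdiscrep-term1-no-upper-unmatched} and~\ref{lem:pscore-mdiscrep-term1-upper-balance}.

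The gap is in how you control the symmetric difference. First, the pivot of your argument --- that a Hoeffding bound giving a surplus of controls in the band $[\PSpoint-\delta_n,\PSpoint)$ ``forces at least one band control to be unmatched'' --- is not valid as stated: a local surplus of controls does not force any of them to be unmatched, because the surplus can be absorbed by treated units matched in from outside the band (by the chance excess, of order $\sqrt{n\log n}$, of treated units above $\PSpoint$, or by treated units from below the band), and ruling this out already requires an optimality argument of the very type you are deferring. Second, even granting an unmatched control at some $p^{\circ}$ and the resulting decomposition, the self-contained matching problem on $(p^{\circ},1]$ has a control surplus by construction --- that is what produced the unmatched unit --- so the unmatched controls could a priori sit anywhere in $(p^{\circ},1]$, including above $\PSpoint$. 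Your claim that they are confined to the band because treated units are at least as numerous as controls above $\PSpoint$ is essentially the statement to be proved; establishing it requires, for each control at $q\geq\PSpoint$, the implication ``if treated units outnumber the remaining controls above $q$, this control is matched above $q$'' (via optimality and Lemma~\refmain{lem:no-crossing-matches}) together with a concentration bound on that count, uniformly in $q$, which degenerates as $q\downarrow\PSpoint$, as $q\uparrow\suppPSp$, and at atoms. That is precisely the content of the paper's Lemmas~\ref{lem:pscore-mdiscrep-term1-no-upper-unmatched}, \ref{lem:pscore-mdiscrep-term1-no-upper-unmatched-middle-part} and~\ref{lem:pscore-mdiscrep-term1-no-upper-unmatched-atom-point}, so the detour through an unmatched control below $\PSpoint$ does not save any work; it prepends an unproved step to the argument you still need. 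A further loose end: your bound on $\card{\MatchedPSp\setminus\ControlsPSp}$ presumes the band carries suitable probability mass, which fails if the support of $\rPSpop$ has a gap just below $\PSpoint$; the paper instead disposes of this term exactly, using $\card{\MatchedPSp}=\card{\TreatedPSp}$ when the matching exists and $\Pr{\Wpop=1\given\rPSpop\geq\PSpoint}=1/2$ from Condition~\refmain{cond:left-closed}. You correctly flag this as the main obstacle in your final paragraph, but the proposal does not close it.
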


\begin{proof}
The matching only depends on $\W{1}, \dotsc, \W{n}$ and $\rPS{1}, \dotsc, \rPS{n}$, so $\MatchedPSp$ is determined by those variables, and:
\begin{multline}
	\E[\bigg]{\frac{1}{\avgpscore n} \sum_{i \in \MatchedPSp} \POci}
	=
	\E[\Bigg]{\E[\bigg]{\frac{1}{\avgpscore n} \sum_{i \in \MatchedPSp} \POci \given \W{1}, \dotsc, \W{n}, \rPS{1}, \dotsc, \rPS{n}}}
	\\
	= \E[\bigg]{\frac{1}{\avgpscore n} \sum_{i \in \MatchedPSp} \E[\big]{\POci \given \Wi, \rPSi}}
	= \E[\bigg]{\frac{1}{\avgpscore n} \sum_{i \in \MatchedPSp} \EPOPSc{\rPSi}},
\end{multline}
where the last equality follows from unconfoundedness with respect to the propensity score (Theorem~\ref{thm:pscore-balancing}).
By a similar argument:
\begin{equation}
	\E[\bigg]{\frac{1}{\avgpscore n} \sum_{i \in \ControlsPSp} \POci}
	= \E[\bigg]{\frac{1}{\avgpscore n} \sum_{i \in \ControlsPSp} \EPOPSc{\rPSi}}.
\end{equation}

Partition $\ControlsPSp$ and $\MatchedPSp$ as:
\begin{multline}
	\ControlsPSp = \paren[\big]{\ControlsPSp \setminus \paren{\ControlsPSp \cap \MatchedPSp}} \cup \paren{\ControlsPSp \cap \MatchedPSp}
	\qquad\text{and}
	\\
	\MatchedPSp = \paren[\big]{\MatchedPSp \setminus \paren{\ControlsPSp \cap \MatchedPSp}} \cup \paren{\ControlsPSp \cap \MatchedPSp}.
\end{multline}
Observe that the operands in both unions are disjoint, so we can write:
\begin{equation}
	\sum_{i \in \ControlsPSp} \EPOPSc{\rPSi} - \sum_{i \in \MatchedPSp} \EPOPSc{\rPSi}
	= \sum_{i \in \ControlsPSp \setminus \paren{\ControlsPSp \cap \MatchedPSp}} \EPOPSc{\rPSi} - \sum_{i \in \MatchedPSp \setminus \paren{\ControlsPSp \cap \MatchedPSp}} \EPOPSc{\rPSi},
\end{equation}
because the two terms summing over $\ControlsPSp \cap \MatchedPSp$ cancel.
It follows that:
\begin{equation}
	\frac{1}{\avgpscore n} \sum_{i \in \ControlsPSp} \EPOPSc{\rPSi} - \frac{1}{\avgpscore n} \sum_{i \in \MatchedPSp} \EPOPSc{\rPSi}
	\leq \frac{1}{\avgpscore n} \sum_{i \in \mathbf{A}} \abs{\EPOPSc{\rPSi}},
\end{equation}
where $\mathbf{A} = \paren[\big]{\ControlsPSp \setminus \paren{\ControlsPSp \cap \MatchedPSp}} \cup \paren[\big]{\MatchedPSp \setminus \paren{\ControlsPSp \cap \MatchedPSp}}$.

Condition~\refmain{cond:reg-cond} implies that $\EPOPSc{a}$ exists for some $a \in \suppPS$.
By Condition~\refmain{cond:pscore-lipschitz}, $\EPOPSc{p}$ is Lipschitz continuous on $\suppPS$.
Together with $\suppPS \subset \bracket{0, 1}$, this implies that $\EPOPSc{p}$ is bounded on $\suppPS$.
Let $c_\mu \in \Reals$ be this bound.
That is, $\abs{\EPOPSc{p}} \leq c_\mu$ for all $p \in \suppPS$.
It follows that:
\begin{equation}
	\E[\bigg]{\frac{1}{\avgpscore n} \sum_{i \in \mathbf{A}} \abs{\EPOPSc{\rPSi}}}
	\leq
	\frac{c_\mu}{\avgpscore} \E[\bigg]{\frac{\card{\mathbf{A}}}{n}}.
\end{equation}

Because the union operands in $\mathbf{A}$ are disjoint:
\begin{equation}
	\card{\mathbf{A}} = \card[\big]{\ControlsPSp \setminus \paren{\ControlsPSp \cap \MatchedPSp}} + \card[\big]{\MatchedPSp \setminus \paren{\ControlsPSp \cap \MatchedPSp}},
\end{equation}
and because $\paren{\ControlsPSp \cap \MatchedPSp} \subset \ControlsPSp$ and $\paren{\ControlsPSp \cap \MatchedPSp} \subset \MatchedPSp$:
\begin{align}
	\card{\mathbf{A}}
	&= \card{\ControlsPSp} - \card{\ControlsPSp \cap \MatchedPSp} + \card{\MatchedPSp} - \card{\ControlsPSp \cap \MatchedPSp}
	\\
	&= 2\paren[\big]{\card{\ControlsPSp} - \card{\ControlsPSp \cap \MatchedPSp}} + \paren[\big]{\card{\MatchedPSp} - \card{\ControlsPSp}}
	\\
	&= 2\card{\ControlsPSp \setminus \MatchedPSp} + \paren[\big]{\card{\MatchedPSp} - \card{\ControlsPSp}},
\end{align}
where the last equality follows from:
\begin{equation}
	\card{\ControlsPSp \setminus \MatchedPSp} = \card{\ControlsPSp \setminus \paren{\ControlsPSp \cap \MatchedPSp}} = \card{\ControlsPSp} - \card{\ControlsPSp \cap \MatchedPSp}.
\end{equation}
The proof is completed by:
\begin{equation}
	\frac{c_\mu}{\avgpscore} \E[\bigg]{\frac{\card{\mathbf{A}}}{n}}
	=
	\frac{2c_\mu}{\avgpscore} \E[\bigg]{\frac{\card{\ControlsPSp \setminus \MatchedPSp}}{n}}
	+
	\frac{c_\mu}{\avgpscore} \E[\bigg]{\frac{\card{\MatchedPSp} - \card{\ControlsPSp}}{n}},
\end{equation}
and Lemmas~\ref{lem:pscore-mdiscrep-term1-no-upper-unmatched} and~\ref{lem:pscore-mdiscrep-term1-upper-balance}.
\end{proof}

\begin{lemma}\label{lem:pscore-mdiscrep-term1-no-upper-unmatched}
	Under Conditions~\refmain{cond:reg-cond},~\refmain{cond:pscore-lipschitz} and~\refmain{cond:left-closed}:
	\begin{equation}
		\lim_{n \to \infty} \E[\bigg]{\frac{\card{\ControlsPSp \setminus \MatchedPSp}}{n}} = 0.
	\end{equation}
\end{lemma}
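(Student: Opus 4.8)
The plan is to bound the number of unmatched controls above $\PSpoint$ by a sample statistic whose expectation vanishes, combining the no-crossing structure of Lemma~\refmain{lem:no-crossing-matches} with a uniform law of large numbers. First I would dispose of the degenerate events. When $\nTreated > \nControls$ the definition of $\MatchedPSp$ sets $\MatchedPSp = \ControlsPSp$, so $\card{\ControlsPSp \setminus \MatchedPSp} = 0$; when $\nTreated = 0$ the ratio is at most one, but $\Pr{\nTreated = 0} = \paren{1 - \avgpscore}^n$ decays exponentially. Since $\avgpscore < 1/2$, Hoeffding's inequality (Theorem~\ref{thm:hoeffdings-inequality}) makes $\Pr{\nTreated > \nControls}$ exponentially small as well. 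It therefore suffices to control $\card{\ControlsPSp \setminus \MatchedPSp}/n$ on the event $1 \leq \nTreated \leq \nControls$, where an optimal matching exists, and to add back an $\littleO{1}$ term afterwards.

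The structural step exploits optimality directly. Suppose a control $c$ with $\rPS{c} = q \geq \PSpoint$ is left unmatched. If any treated unit $i$ with $\rPSi \geq q$ were matched to a control with score strictly below $q$, reassigning $i$ to $c$ would change the objective by $\abs{\rPSi - q} - \abs{\rPSi - \rPS{\mopt{i}}} = \rPS{\mopt{i}} - q < 0$, contradicting optimality; this is the unmatched-control consequence of Lemma~\refmain{lem:no-crossing-matches}. Hence every treated unit with score at least $q$ is matched to a control with score at least $q$, so the controls with score $\geq q$ contain at least as many matched units as there are treated units with score $\geq q$. Taking $q$ to be the smallest score among unmatched controls above $\PSpoint$, every such unmatched control has score $\geq q$, and therefore, on the event $1 \leq \nTreated \leq \nControls$,
\[
\card{\ControlsPSp \setminus \MatchedPSp} \leq \card{\setb{i \in \Controls \suchthat \rPSi \geq q}} - \card{\setb{i \in \Treated \suchthat \rPSi \geq q}} \leq \sup_{t \geq \PSpoint} \paren[\big]{N_0(t) - N_1(t)},
\]
where $N_0(t)$ and $N_1(t)$ denote the numbers of control and treated units with propensity score at least $t$.

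The final step is a uniform law of large numbers. Writing $Z_i(t) = \indicator{\Wi = 0, \rPSi \geq t} - \indicator{\Wi = 1, \rPSi \geq t}$, the bound reads $\card{\ControlsPSp \setminus \MatchedPSp}/n \leq \sup_{t \geq \PSpoint} n^{-1}\sumin Z_i(t)$. By the definition of $\PSpoint$ together with Condition~\refmain{cond:left-closed}, $\Pr{\Wpop = 1 \given \rPSpop \geq t} \geq 1/2$ for every $t \geq \PSpoint$, so the mean $\E{Z_i(t)} = \Pr{\Wpop = 0, \rPSpop \geq t} - \Pr{\Wpop = 1, \rPSpop \geq t}$ is nonpositive throughout $t \geq \PSpoint$. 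Consequently $\sup_{t \geq \PSpoint} n^{-1}\sumin Z_i(t) \leq \sup_{t} \abs[\big]{n^{-1}\sumin Z_i(t) - \E{Z_i(t)}}$, and since each of the two empirical survival functions $t \mapsto n^{-1}\sumin \indicator{\Wi = w, \rPSi \geq t}$ is monotone, the Dvoretzky--Kiefer--Wolfowitz inequality gives an exponential tail bound for this supremum, so its expectation is $\bigO{n^{-1/2}}$. Taking expectations and restoring the exponentially small contribution of $\setb{\nTreated = 0}$ yields the claim.

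The main obstacle is that the threshold $q$ is itself random: it is the location of the lowest unmatched control and depends on the entire sample through the optimal matching, so a pointwise law of large numbers is insufficient and the argument must pass through a supremum over all $t \geq \PSpoint$. The second delicate point is a possible atom of $\rPSpop$ at $\PSpoint$; this is precisely what forces the sharpened reassignment argument covering treated units at score exactly $q$ (not merely strictly above), and it is where Condition~\refmain{cond:left-closed} is needed to place any such atom on the high side of the partition so that $\E{Z_i(t)} \leq 0$ remains valid at $t = \PSpoint$.
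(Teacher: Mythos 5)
Your strategy is genuinely different from the paper's and, in outline, attractive: the paper bounds the per-unit probability $\Pr{\IMatchedPSpi = 0, \Wi = 0, \rPSi \geq \PSpoint}$ by partitioning the support of $\rPSpop$ into bands around $\PSpoint$ and $\suppPSp$ and applying Hoeffding's inequality band by band (with a separate lemma for atoms), whereas you bound the whole count at once by $\sup_{t \geq \PSpoint}\paren[\big]{N_0(t) - N_1(t)}$ and control that supremum with a uniform law of large numbers. Because $\Pr{\Wpop = 1 \given \rPSpop \geq t} \geq 1/2$ for every $t \geq \PSpoint$ (monotonicity of $t \mapsto \E{\rPSpop \given \rPSpop \geq t}$ plus Condition~\refmain{cond:left-closed}), the population excess of controls is nonpositive uniformly on the upper tail, and a DKW-type bound (after sending treated units to a point outside the support so each sub-survival function is a genuine empirical tail function) gives the $\littleO{1}$ expectation. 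This would eliminate the $\varepsilon_n^{\pm}$ bands and the dedicated atom lemma, which is a real simplification; the handling of the events $\nTreated = 0$ and $\nTreated > \nControls$ is also correct.

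There is, however, one genuine gap: you identify $\card{\ControlsPSp \setminus \MatchedPSp}$ with the number of \emph{unmatched} controls above $\PSpoint$, but by the definition of the partition, $\MatchedPSp = \setb{\mopt{i} \suchthat i \in \TreatedPSp}$, so $\ControlsPSp \setminus \MatchedPSp$ also contains every control with $\rPSi \geq \PSpoint$ that is matched to a treated unit with score \emph{below} $\PSpoint$. Your structural step takes $q$ to be the smallest score among unmatched controls above $\PSpoint$ and reassigns a treated unit to that unmatched control; this says nothing about the downward-matched controls, and it fails entirely on the event that every control above $\PSpoint$ is matched but some are matched across the partition. The repair uses exactly the tool you already cite: let $q$ be the smallest score among all controls in $\ControlsPSp \setminus \MatchedPSp$. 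If the minimizing control $c$ is matched to some treated $\ell$ with $\rPS{\ell} < \PSpoint$, then any treated unit $j$ with $\rPS{j} \geq q$ matched to a control with score below $q$ yields $\maxf[\big]{\rPS{\ell}, \rPS{\mopt{j}}} < q = \minf[\big]{\rPS{j}, \rPS{\mopt{\ell}}}$, a crossing pair contradicting Lemma~\refmain{lem:no-crossing-matches}; if $c$ is unmatched, your reassignment argument applies. Either way every treated unit with score at least $q$ is matched within $\setb{\rPSpop \geq q}$, those matches all lie in $\MatchedPSp$, and the bound $\card{\ControlsPSp \setminus \MatchedPSp} \leq \sup_{t \geq \PSpoint}\paren[\big]{N_0(t) - N_1(t)}$ goes through. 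This is precisely the case the paper handles as scenario (d) in the proof of Lemma~\ref{lem:pscore-mdiscrep-term1-no-upper-unmatched-middle-part}; without it your argument proves a strictly weaker statement than the lemma requires.
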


\newcommand{\UpperEpsiPSpoint}{\varepsilon^+_n}
\newcommand{\LowerEpsiPSpoint}{\varepsilon^-_n}

\DeclarePairedDelimiterXPP\IntervalPSpoint[1]{K_n}{\lparen}{\rparen}{}{#1}

\begin{proof}
Units in $\ControlsPSp \setminus \MatchedPSp$ are control units with $\rPSi \geq \PSpoint$ not in $\MatchedPSp$, so:
\begin{multline}
	\E[\bigg]{\frac{\card{\ControlsPSp \setminus \MatchedPSp}}{n}}
	= \E[\bigg]{\avgin \indicator[\big]{\IMatchedPSpi = 0, \Wi = 0, \rPSi \geq \PSpoint}}
	\\
	= \avgin \Pr[\big]{\IMatchedPSpi = 0, \Wi = 0, \rPSi \geq \PSpoint},
\end{multline}
where $\IMatchedPSpi = \indicator{i \in \MatchedPSp}$.
The summation on the right-hand side of the equation cannot be removed at this point because the matching may not be symmetric with respect to the unit indices.
For example, if tie breaking is done by picking units with lower indices as matches, then $\Pr{\IMatchedPSp{1} = 0, \W{1} = 0, \rPS{1} \geq \PSpoint}$ may be less than $\Pr{\IMatchedPSp{n} = 0, \W{n} = 0, \rPS{n} \geq \PSpoint}$.
Furthermore, the probability cannot be written with respect to the population distribution because $\IMatchedPSpi$ is only defined in the sample.

The proof is completed immediately if $\Pr{\rPSpop \geq \PSpoint} = 0$ because $\ControlsPSp$ is empty with probability one in that case.
Next consider when $\PSpoint = \suppPSp$.
This means:
\begin{equation}
	\avgin \Pr[\big]{\IMatchedPSpi = 0, \Wi = 0, \rPSi \geq \PSpoint}
	=
	\avgin \Pr[\big]{\IMatchedPSpi = 0, \Wi = 0, \rPSi = \PSpoint},
\end{equation}
and Lemma~\ref{lem:pscore-mdiscrep-term1-no-upper-unmatched-atom-point} immediately completes the proof because $\Pr{\Wpop = 1 \given \rPSpop \geq \PSpoint} = 1 / 2$ when $\PSpoint = \suppPSp$.
The rest of the proof considers the case when $\Pr{\rPSpop \geq \PSpoint} > 0$ and $\PSpoint < \suppPSp$.
It cannot be that $\suppPSp < 1 / 2$ here because $\setb{p \suchthat \Pr{\Wpop = 1 \given \rPSpop \geq p} \geq 1 / 2}$ would then be empty and $\PSpoint = 1 / 2$, which contradicts $\PSpoint < \suppPSp$.
Similarly, $\suppPSp = 1 / 2$ implies $\PSpoint = 1 / 2$.
It must thus be $\suppPSp > 1 / 2$, and then $\PSpoint < 1 / 2$, so this is the case to be considered.

Let $\paren{\UpperEpsiPSpoint}$ be a sequence in $\Reals^+$ such that $\UpperEpsiPSpoint \to 0$ at a sufficiently slow rate so to satisfy:
\begin{equation}
	\Pr{\rPSpop \geq \suppPSp - \UpperEpsiPSpoint} \geq \sqrt{\logf{n} / n},
\end{equation}
for sufficiently large $n$.
Similarly, let $\paren{\LowerEpsiPSpoint}$ be a sequence in $\Reals^+$ such that $\LowerEpsiPSpoint \to 0$ and:
\begin{equation}
	\Pr{\Wpop = 1 \given \rPSpop \geq \PSpoint + \LowerEpsiPSpoint} \geq 1 / 2 + \sqrt{\logf{n} / n},
\end{equation}
for sufficiently large $n$.
Finally, let:
\begin{equation}
	\IntervalPSpoint{p}
	=
	\begin{cases}
		0 & \text{if } p < \PSpoint,
		\\
		1 & \text{if } p = \PSpoint,
		\\
		2 & \text{if } \PSpoint < p < \PSpoint + \LowerEpsiPSpoint,
		\\
		3 & \text{if } \PSpoint + \LowerEpsiPSpoint \leq p \leq \suppPSp - \UpperEpsiPSpoint,
		\\
		4 & \text{if } \suppPSp - \UpperEpsiPSpoint < p < \suppPSp,
		\\
		5 & \text{if } p = \suppPSp.
	\end{cases}
\end{equation}
In other words, $\IntervalPSpoint{p}$ partitions the support of $\rPSpop$ into six groups based on the quantities defined above.

Write the quantity under consideration as:
\begin{align}
	\avgin \Pr[\big]{\IMatchedPSpi = 0, \Wi = 0, \rPSi \geq \PSpoint}
	&= \avgin \Pr[\big]{\IMatchedPSpi = 0, \Wi = 0, \IntervalPSpoint{\rPSi} = 3}
	\\
	&\qquad + \avgin \Pr[\big]{\IMatchedPSpi = 0, \Wi = 0, \IntervalPSpoint{\rPSi} \in \setb{1, 5}}
	\\
	&\qquad + \avgin \Pr[\big]{\IMatchedPSpi = 0, \Wi = 0, \IntervalPSpoint{\rPSi} \in \setb{2, 4}}.
\end{align}
Lemmas~\ref{lem:pscore-mdiscrep-term1-no-upper-unmatched-middle-part} demonstrates that the first term converges to zero, and Lemma~\ref{lem:pscore-mdiscrep-term1-no-upper-unmatched-atom-point} does the same for the second term.
This is because Condition~\refmain{cond:left-closed} implies $\Pr{\Wpop = 1 \given \rPSpop \geq \PSpoint} = 1 / 2$, and $\suppPSp > 1 / 2$ implies $\Pr{\Wpop = 1 \given \rPSpop \geq \suppPSp} > 1 / 2$, so the premise of Lemma~\ref{lem:pscore-mdiscrep-term1-no-upper-unmatched-atom-point} holds for the second term.
For the third term, write:
\begin{equation}
	\Pr[\big]{\IMatchedPSpi = 0, \Wi = 0, \IntervalPSpoint{\rPSi} \in \setb{2, 4}}
	\leq \Pr[\big]{\IntervalPSpoint{\rPSpop} \in \setb{2, 4}},
\end{equation}
so:
\begin{multline}
	\avgin \Pr[\big]{\IMatchedPSpi = 0, \Wi = 0, \IntervalPSpoint{\rPSi} \in \setb{2, 4}}
	\\
	\leq \Pr[\big]{\PSpoint < \rPSpop < \PSpoint + \LowerEpsiPSpoint} + \Pr[\big]{\suppPSp - \UpperEpsiPSpoint < \rPSpop < \suppPSp},
\end{multline}
and $\LowerEpsiPSpoint \to 0$ and $\UpperEpsiPSpoint \to 0$ ensure that also this term converges to zero.
\end{proof}

\begin{lemma}\label{lem:pscore-mdiscrep-term1-no-upper-unmatched-middle-part}
	Given $\PSpoint < 1 / 2 < \suppPSp$ and Conditions~\refmain{cond:reg-cond},~\refmain{cond:pscore-lipschitz} and~\refmain{cond:left-closed}:
	\begin{equation}
		\lim_{n \to \infty} \avgin \Pr[\big]{\IMatchedPSpi = 0, \Wi = 0, \IntervalPSpoint{\rPSi} = 3} = 0,
	\end{equation}
	where $\IntervalPSpoint{p}$ is defined in the proof of Lemma~\ref{lem:pscore-mdiscrep-term1-no-upper-unmatched}.
\end{lemma}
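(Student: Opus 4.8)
The plan is to pass from the probability sum to the expected number of control units lying in region~3 but outside $\MatchedPSp$, and then to dominate that count by the fluctuations of a single surplus process. Write $\Delta(p) = \card{\setb{i \suchthat \Wi = 1, \rPSi \geq p}} - \card{\setb{i \suchthat \Wi = 0, \rPSi \geq p}}$ for the treated-minus-control surplus among sampled units with score at least $p$. First I would record that $\E{\Delta(p)} = n\,\Et{(2\rPSpop - 1)\indicator{\rPSpop \geq p}} \geq 0$ for every $p \geq \PSpoint$: this is exactly $\Pr{\Wpop = 1 \given \rPSpop \geq p} \geq 1/2$, which holds on $[\PSpoint,\infty)$ because $p \mapsto \Pr{\Wpop = 1 \given \rPSpop \geq p} = \Et{\rPSpop \given \rPSpop \geq p}$ is nondecreasing and $\PSpoint$ is the infimum of the set (left-closed by Condition~\refmain{cond:left-closed}) on which it reaches $1/2$.

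Since $\avgin \Pr[\big]{\IMatchedPSpi = 0, \Wi = 0, \IntervalPSpoint{\rPSi} = 3} = n^{-1}\E{Z}$, where $Z$ counts the units in question, and since $\Pr{\nTreated > \nControls}$ is exponentially small (contributing $\littleO{1}$, as in the proof of Lemma~\ref{lem:mdiscrep-bias-step2}), I would work on $\setb{1 \leq \nTreated \leq \nControls}$, where every treated unit is matched. A control $i$ counted by $Z$ has $\rPSi \geq \PSpoint$ and is not the match of any treated unit with score $\geq \PSpoint$, so it is either (A) unmatched, or (B) matched to a treated unit with score strictly below $\PSpoint$. I would bound the two types separately and show $\E{Z} = \bigO{\sqrt{n}}$.

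For type~(A), let $p_{\min} \geq \PSpoint$ be the smallest score of an unmatched control (when any exist above $\PSpoint$). By Lemma~\refmain{lem:no-crossing-matches}, no treated unit with score $\geq p_{\min}$ is matched below $p_{\min}$ and no treated unit below $p_{\min}$ is matched above it, since otherwise rematching to the unmatched control at $p_{\min}$ strictly lowers the objective. Hence every treated unit with score $\geq p_{\min}$ is matched within $\setb{\rPSpop \geq p_{\min}}$, the number of unmatched controls at or above $p_{\min}$ equals $-\Delta(p_{\min})$, and all type-(A) units (which sit at scores $\geq p_{\min}$) number at most $\sup_{p \geq \PSpoint}\paren{-\Delta(p)}^+$. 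For type~(B), the key point is that an optimal matching cannot contain both an up-crossing of level $\PSpoint$ (a treated unit below $\PSpoint$ matched to a control at or above it) and a down-crossing (a treated unit at or above $\PSpoint$ matched to a control below it): exchanging the two controls replaces two level-crossing pairs by two same-side pairs and strictly decreases total distance. Combining this with the elementary identity that the number of unmatched controls above $\PSpoint$ equals $-\Delta(\PSpoint)$ plus the number of down-crossings minus the number of up-crossings, the number of up-crossings, and hence of type-(B) units, is at most $\paren{-\Delta(\PSpoint)}^+$.

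Both bounds are dominated by $\sup_{p \geq \PSpoint}\paren{-\Delta(p)}^+ \leq \sup_{p}\abs{\Delta(p) - \E{\Delta(p)}}$, the inequality using $\E{\Delta(p)} \geq 0$. Because $\Delta(p)$ is a difference of two monotone counting functions, each a sum of independent indicators, a maximal bound over thresholds (Hoeffding over a grid of population quantiles, or the Dvoretzky--Kiefer--Wolfowitz inequality applied to each piece) gives $\E{\sup_{p}\abs{\Delta(p) - \E{\Delta(p)}}} = \bigO{\sqrt{n}}$, so $n^{-1}\E{Z} = \bigO{n^{-1/2}} \to 0$. The main obstacle is the clean treatment of type~(B): one must convert ``no simultaneous up- and down-crossing'' into a usable count, which is exactly where the no-crossing lemma carries the weight, whereas the concentration step is routine once $\Delta(p)$ is written as a difference of monotone empirical processes. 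It is worth noting that excluding the thin slivers in regions~2 and~4 and the atoms in regions~1 and~5 is what allows the companion lemmas to avoid this uniform control near the endpoints $\PSpoint$ and $\suppPSp$, where the surplus process is least concentrated.
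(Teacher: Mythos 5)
Your argument is correct, but it takes a genuinely different route from the paper's, so a comparison is worth recording. The paper argues unit by unit: conditioning on $\Wi = 0$ and $\rPSi = p$, it shows by a case analysis --- a direct exchange for the unmatched case (your type (A)) and Lemma~\refmain{lem:no-crossing-matches} for the case $i \in \MatchedPSm$ (your type (B)) --- that $i \in \MatchedPSp$ whenever the leave-one-out treated-minus-control balance above $p$ is at least one, and then applies Hoeffding's inequality \emph{pointwise} in $p$. That pointwise bound needs a strictly positive margin, namely $\Pr{\rPSpop \geq p}\bracket[\big]{\Pr{\Wpop = 1 \given \rPSpop \geq p} - 1/2} \geq \sqrt{C\logf{n}/n}$, and this is precisely what the restriction to region $3$ (the buffers $\varepsilon^{\pm}_n$) and the hypothesis $\PSpoint < 1/2 < \suppPSp$ are there to deliver. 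You instead bound the \emph{total count} of offending controls by the worst negative excursion $\sup_{p \geq \PSpoint}\paren{-\Delta(p)}^{+}$ of the surplus process and control that excursion uniformly; since you only need $\E{\Delta(p)} \geq 0$ rather than a positive margin, your argument never uses the region-$3$ restriction or the hypothesis $\PSpoint < 1/2 < \suppPSp$, and in fact it proves the parent Lemma~\ref{lem:pscore-mdiscrep-term1-no-upper-unmatched} directly (subsuming Lemma~\ref{lem:pscore-mdiscrep-term1-no-upper-unmatched-atom-point}) with an explicit $\bigO{n^{-1/2}}$ rate. The price is a maximal inequality for $\sup_{p}\abs[\big]{\Delta(p) - \E{\Delta(p)}}$, which is routine (a grid argument with monotonicity of the two counting processes, or DKW applied to each piece after conditioning on the group sizes) but is exactly the machinery the paper's buffering construction is designed to avoid. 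Two small repairs: the identity for the number of unmatched controls at or above $p_{\min}$ should be the inequality $\leq -\Delta(p_{\min})$, since your exchange is only cost-neutral for a treated unit below $p_{\min}$ matched to a control at exactly $p_{\min}$ (this only helps you); and the discarded event should include $\nTreated = 0$ as well as $\nTreated > \nControls$, both being exponentially rare under Condition~\refmain{cond:reg-cond}.
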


\DeclarePairedDelimiterXPP\PSBalCount[2]{B_{#1}}{\lparen}{\rparen}{}{#2}

\DeclarePairedDelimiterXPP\AvgPSBalCount[1]{\bar{H}}{\lparen}{\rparen}{}{#1}
\DeclarePairedDelimiterXPP\UnitAvgPSBalCount[2]{H_{#1}}{\lparen}{\rparen}{}{#2}
\DeclarePairedDelimiterXPP\ExpoPS[1]{S}{\lparen}{\rparen}{}{#1}

\begin{proof}
Note that:
\begin{equation}
	\Pr[\big]{\IMatchedPSpi = 0, \Wi = 0, \IntervalPSpoint{\rPSi} = 3}
	\leq \Pr[\big]{\IMatchedPSpi = 0 \given \Wi = 0, \IntervalPSpoint{\rPSi} = 3},
\end{equation}
and write:
\begin{equation}
	\Pr[\big]{\IMatchedPSpi = 0 \given \Wi = 0, \IntervalPSpoint{\rPSi} = 3}
	= \E[\Big]{\Pr[\big]{\IMatchedPSpi = 0 \given \Wi = 0, \rPSi} \given \Wi = 0, \IntervalPSpoint{\rPSi} = 3}.
\end{equation}

Let $\PSBalCount{i}{p} = \sum_{j \neq i} \paren{2\Wj - 1} \indicator{\rPSj \geq p}$ count the balance of treated and control units with propensity scores greater than or equal to $p$ excluding unit $i$.
For example, if there are $25$ treated units and $19$ control units with $\rPSj \geq p$ for $j \neq i$, then $\PSBalCount{i}{p} = 25 - 19 = 6$.
Use the law of total probability to write:
\begin{align}
	&\Pr[\big]{\IMatchedPSpi = 0 \given \Wi = 0, \rPSi = p}
	\\
	&\qquad \qquad =
	\Pr[\big]{\PSBalCount{i}{p} \geq 1 \given \Wi = 0, \rPSi = p} \Pr[\big]{\IMatchedPSpi = 0 \given \Wi = 0, \rPSi = p, \PSBalCount{i}{p} \geq 1}
	\\
	&\qquad \qquad \qquad +
	\Pr[\big]{\PSBalCount{i}{p} \leq 0 \given \Wi = 0, \rPSi = p} \Pr[\big]{\IMatchedPSpi = 0 \given \Wi = 0, \rPSi = p, \PSBalCount{i}{p} \leq 0}.
\end{align}
Bound two of the factors as:
\begin{equation}
	\Pr[\big]{\PSBalCount{i}{p} \geq 1 \given \Wi = 0, \rPSi = p} \leq 1
	\quad\text{and}\quad
	\Pr[\big]{\IMatchedPSpi = 0 \given \Wi = 0, \rPSi = p, \PSBalCount{i}{p} \leq 0} \leq 1,
\end{equation}
to get:
\begin{multline}
	\Pr[\big]{\IMatchedPSpi = 0 \given \Wi = 0, \rPSi = p}
	\leq \Pr[\big]{\IMatchedPSpi = 0 \given \Wi = 0, \rPSi = p, \PSBalCount{i}{p} \geq 1}
	\\
	+ \Pr[\big]{\PSBalCount{i}{p} \leq 0 \given \Wi = 0, \rPSi = p}.
\end{multline}

Now for the key step of the proof, namely showing that:
\begin{equation}
	\Pr[\big]{\IMatchedPSpi = 0 \given \Wi = 0, \rPSi = p, \PSBalCount{i}{p} \geq 1} = 0,
\end{equation}
for all $i \in \Sample$.
There are five scenarios to consider:
\begin{enumerate}[label=(\alph*)]
	\item $\nTreated = 0$,
	\item $\nTreated > \nControls$,
	\item $1 \leq \nTreated \leq \nControls$ and $i \not\in \Matched$,
	\item $1 \leq \nTreated \leq \nControls$ and $i \in \MatchedPSm$,
	\item $1 \leq \nTreated \leq \nControls$ and $i \in \MatchedPSp$.
\end{enumerate}
The first scenario can be ignored because $\PSBalCount{i}{p} \geq 1$ implies that at least one treated unit exists in the sample.
In the second scenario, $\MatchedPSp = \ControlsPSp$.
This implies $\IMatchedPSpi = 1$ because $\ControlsPSp = \setb{i \in \Controls \suchthat \rPSi \geq \PSpoint}$ and we are only considering control units with $\rPSi \geq \PSpoint$.

In the third scenario, $\IMatchedPSpi = 0$, but such a matching cannot be optimal.
In particular, $\PSBalCount{i}{p} \geq 1$ means that there is at least one treated unit $k$ with $\rPS{k} \geq p$ that is matched with a control unit $j$ with $\rPSj < p$.
Because unit $i$ is unmatched, we could match unit $k$ with $i$ without otherwise changing the matching, and the sum of within-match propensity score differences would then change by:
\begin{equation}
	\paren{\rPS{k} - \rPSi} - \paren{\rPS{k} - \rPSj} = \rPSj - \rPSi < 0,
\end{equation}
because $\rPSj < p$ and $\rPSi = p$.
Hence, the matching in the third scenario cannot be optimal.

The fourth scenario follows a similar argument.
Also in this scenario, $\IMatchedPSpi = 0$, but again such a matching cannot be optimal.
As before, $\PSBalCount{i}{p} \geq 1$ means that there is at least one treated unit $k$ with $\rPS{k} \geq p$ that is matched with a control unit $j$ with $\rPSj < p$.
Because $i \in \MatchedPSm$, there exists a treated unit $\ell$ with $\rPS{\ell} < p$ that is matched with $i$.
Taken together:
\begin{equation}
	\maxf{\rPS{\ell}, \rPS{\mopt{k}}} < p \leq \minf{\rPS{k}, \rPS{\mopt{\ell}}},
\end{equation}
which means that $\moptsym$ contains crossing matches, but Lemma~\refmain{lem:no-crossing-matches} tells us that no such matching is optimal.

The conclusion is that the only possible scenarios are the second and fifth, and then $\IMatchedPSpi = 1$.
It follows that:
\begin{equation}
	\Pr[\big]{\IMatchedPSpi = 0 \given \Wi = 0, \rPSi = p, \PSBalCount{i}{p} \geq 1} = 0,
\end{equation}
as desired, which gives:
\begin{equation}
	\Pr[\big]{\IMatchedPSpi = 0 \given \Wi = 0, \rPSi = p}
	\leq \Pr[\big]{\PSBalCount{i}{p} \leq 0 \given \Wi = 0, \rPSi = p}.
\end{equation}
Note that $\PSBalCount{i}{p}$ does not depend on $\Wi = 0$ or $\rPSi = p$ other than through the value $p$ because unit $i$ is excluded from the count in $\PSBalCount{i}{p}$.
It follows that:
\begin{equation}
	\Pr[\big]{\PSBalCount{i}{p} \leq 0 \given \Wi = 0, \rPSi = p} = \Pr[\big]{\PSBalCount{i}{p} \leq 0}.
\end{equation}
Note that $\Pr[\big]{\PSBalCount{i}{p} \leq 0} = \Pr[\big]{\PSBalCount{j}{p} \leq 0}$ for all $i, j \in \Sample$ and $p \in \suppPS$ because the probability does not depend on the matching and the observations are otherwise identically distributed.
The rest of the proof uses $\Pr[\big]{\PSBalCount{i}{p} \leq 0}$ for $i = 1$ to represent all units in $\Sample$.

Consider a normalized version of $\PSBalCount{1}{p}$:
\begin{equation}
	\AvgPSBalCount{p} = \frac{1}{n - 1} \sum_{i = 2}^n \UnitAvgPSBalCount{i}{p},
	\qquad\text{where}\qquad
	\UnitAvgPSBalCount{i}{p} =
	\begin{cases}
		\Wi & \text{if } \rPSi \geq p,
		\\
		1 / 2 & \text{if } \rPSi < p.
	\end{cases}
\end{equation}
In particular:
\begin{equation}
	\AvgPSBalCount{p} = \frac{1}{2} + \frac{\PSBalCount{1}{p}}{2\paren{n - 1}}.
\end{equation}
Consider its expectation:
\begin{equation}
	\E[\big]{\AvgPSBalCount{p}} = \frac{1}{2} + \frac{\Pr{\Wpop = 1, \rPSpop \geq p} - \Pr{\Wpop = 0, \rPSpop \geq p}}{2}.
\end{equation}
Define $\ExpoPS{p} = \Pr{\rPSpop \geq p} \bracket[\big]{\Pr{\Wpop = 1 \given \rPSpop \geq p} - 1 / 2}$, so:
\begin{align}
	&\Pr{\Wpop = 1, \rPSpop \geq p} - \Pr{\Wpop = 0, \rPSpop \geq p}
	\\
	&\qquad \qquad = \Pr{\rPSpop \geq p} \bracket[\big]{\Pr{\Wpop = 1 \given \rPSpop \geq p} - \Pr{\Wpop = 0 \given \rPSpop \geq p}}
	\\
	&\qquad \qquad = \Pr{\rPSpop \geq p} \bracket[\big]{2\Pr{\Wpop = 1 \given \rPSpop \geq p} - 1}
	\\
	&\qquad \qquad = 2 \ExpoPS{p},
\end{align}
and $\E[\big]{\AvgPSBalCount{p}} = 1 / 2 + \ExpoPS{p}$.
It follows that:
\begin{equation}
	\Pr[\big]{\PSBalCount{1}{p} \leq 0} = \Pr[\big]{\AvgPSBalCount{p} \leq 1 / 2} = \Pr[\Big]{\AvgPSBalCount{p} - \E[\big]{\AvgPSBalCount{p}} \leq - \ExpoPS{p}}.
\end{equation}
Apply Hoeffding's inequality (Theorem~\ref{thm:hoeffdings-inequality}) to get:
\begin{multline}
	\Pr[\Big]{\AvgPSBalCount{p} - \E[\big]{\AvgPSBalCount{p}} \leq - \ExpoPS{p}}
	\leq \expf[\big]{- \paren{n - 1} \bracket{\ExpoPS{p}}^2}
	\\
	= \expf[\big]{\bracket{\ExpoPS{p}}^2} \expf[\big]{- n \bracket{\ExpoPS{p}}^2}
	\leq 2 \expf[\big]{- n \bracket{\ExpoPS{p}}^2},
\end{multline}
where the last inequality follows from $\exp\paren[\big]{\bracket{\ExpoPS{p}}^2} \leq \exp\paren{1 / 4} \leq 2$.

Recapitulating what we have shown so far, for all $i \in \Sample$:
\begin{multline}
	\E[\Big]{\Pr[\big]{\IMatchedPSpi = 0 \given \Wi = 0, \rPSi} \given \Wi = 0, \IntervalPSpoint{\rPSi} = 3}
	\\
	\leq
	2 \E[\Big]{\expf[\big]{- n \bracket{\ExpoPS{\rPSpop}}^2} \given \Wpop = 0, \IntervalPSpoint{\rPSpop} = 3}.
\end{multline}

Recall that $\PSpoint < 1 / 2 < \suppPSp$, which means:
\begin{equation}
	\Pr{\rPSpop \geq 1 / 2} > 0
	\qquad\text{and}\qquad
	\Pr{\Wpop = 1 \given \rPSpop \geq 1 / 2} > 1 / 2.
\end{equation}
Also recall that $\IntervalPSpoint{p} = 3$ means $\PSpoint + \LowerEpsiPSpoint \leq p \leq \suppPSp - \UpperEpsiPSpoint$.
The rest of the proof considers sufficiently large $n$ so that $\PSpoint + \LowerEpsiPSpoint < 1 / 2 < \suppPSp - \UpperEpsiPSpoint$.
Such samples exist because $\LowerEpsiPSpoint \to 0$ and $\UpperEpsiPSpoint \to 0$.

Consider the events $\PSpoint + \LowerEpsiPSpoint \leq \rPSpop \leq 1 / 2$ and $1 / 2 < \rPSpop \leq \suppPSp - \UpperEpsiPSpoint$.
The function $\ExpoPS{p} = \Pr{\rPSpop \geq p} \bracket[\big]{\Pr{\Wpop = 1 \given \rPSpop \geq p} - 1 / 2}$ is key here.
Note that $\Pr{\rPSpop \geq p}$ is non-negative and decreasing in $p$, and $\Pr{\Wpop = 1 \given \rPSpop \geq p}$ is non-negative and increasing in $p$.
Thus, for any $p$ such that $\PSpoint + \LowerEpsiPSpoint \leq p \leq 1 / 2$:
\begin{equation}
	\ExpoPS{p} \geq \Pr{\rPSpop \geq 1 / 2} \bracket[\big]{\Pr{\Wpop = 1 \given \rPSpop \geq \PSpoint + \LowerEpsiPSpoint} - 1 / 2}.
\end{equation}
Furthermore, $\LowerEpsiPSpoint$ was defined so that:
\begin{equation}
	\Pr{\Wpop = 1 \given \rPSpop \geq \PSpoint + \LowerEpsiPSpoint} \geq 1 / 2 + \sqrt{\logf{n} / n},
\end{equation}
for sufficiently large $n$, and then:
\begin{equation}
	\ExpoPS{p} \geq \Pr{\rPSpop \geq 1 / 2} \sqrt{\logf{n} / n}.
\end{equation}
Similarly, by the definition of $\UpperEpsiPSpoint$, for any $p$ such that $1 / 2 < p \leq \suppPSp - \UpperEpsiPSpoint$:
\begin{multline}
	\ExpoPS{p}
	\geq \Pr{\rPSpop \geq \suppPSp - \UpperEpsiPSpoint} \bracket[\big]{\Pr{\Wpop = 1 \given \rPSpop \geq 1 / 2} - 1 / 2}
	\\
	\geq \sqrt{\logf{n} / n} \bracket[\big]{\Pr{\Wpop = 1 \given \rPSpop \geq 1 / 2} - 1 / 2},
\end{multline}
for sufficiently large $n$.

Let $C = \bracket[\Big]{\minf[\big]{\Pr{\rPSpop \geq 1 / 2}, \Pr{\Wpop = 1 \given \rPSpop \geq 1 / 2} - 1 / 2}}^2$, so:
\begin{equation}
	\ExpoPS{p} \geq \sqrt{C \logf{n} / n},
\end{equation}
for all $p$ such that $\IntervalPSpoint{p} = 3$ when $n$ is sufficiently large.
It follows that:
\begin{equation}
	2 \E[\Big]{\expf[\big]{- n \bracket{\ExpoPS{\rPSpop}}^2} \given \Wpop = 0, \IntervalPSpoint{\rPSpop} = 3}
	\leq
	2 \expf[\big]{- C \logf{n}}
	=
	\frac{2}{n^C}.
\end{equation}
As noted above, $\PSpoint < 1 / 2 < \suppPSp$ implies that $C > 0$.
\end{proof}

\begin{lemma}\label{lem:pscore-mdiscrep-term1-no-upper-unmatched-atom-point}
	Given $\Pr{\Wpop = 1 \given \rPSpop \geq p} \geq 1 / 2$:
	\begin{equation}
		\lim_{n \to \infty} \avgin \Pr[\big]{\IMatchedPSpi = 0, \Wi = 0, \rPSi = p} = 0.
	\end{equation}
\end{lemma}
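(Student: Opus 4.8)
The plan is to read the left-hand side as $n^{-1}$ times the expected number of unmatched control units sitting at the atom $p$. Since $\sumin \Pr[\text{event}_i] = \E{\sumin \indicator{\text{event}_i}}$, the quantity equals $n^{-1}\E{N}$ with $N = \sumin \indicator{\IMatchedPSpi = 0, \Wi = 0, \rPSi = p}$, so it suffices to show $\E{N} = \littleO{n}$. First I would dispose of the two degenerate events. When $\nTreated > \nControls$ we have $\MatchedPSp = \ControlsPSp$, so every control with $\rPSi \geq \PSpoint$ is counted as matched and $N = 0$. When $\nTreated = 0$ only the crude bound $N \leq \nControls \leq n$ is available, but this event has probability $\paren{1 - \avgpscore}^n$, so it contributes at most $\paren{1 - \avgpscore}^n$ to $n^{-1}\E{N}$, which vanishes. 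All the content lives on the event $1 \leq \nTreated \leq \nControls$, where $\moptsym$ is a genuine injection and every treated unit is matched.

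The core is a deterministic counting bound. Writing $t = \sumin \indicator{\Wi = 1, \rPSi \geq p}$ and $c = \sumin \indicator{\Wi = 0, \rPSi \geq p}$, I claim that on the event $1 \leq \nTreated \leq \nControls$,
\begin{equation}
	N \leq \maxf[\big]{0, c - t}.
\end{equation}
To prove it, suppose $N \geq 1$, so some control $i_0$ with $\rPS{i_0} = p$ has $\IMatchedPSp{i_0} = 0$; that is, $i_0$ is either unmatched or matched to a treated unit with score below $\PSpoint \leq p$. I would show that every treated unit with score at least $p$ is then matched to a control with score at least $p$. If a treated unit $k$ with $\rPS{k} \geq p$ were matched to a control $j$ with $\rPS{j} < p$, then in the first case rematching $k$ to $i_0$ changes the objective by $\paren{\rPS{k} - p} - \paren{\rPS{k} - \rPS{j}} = \rPS{j} - p < 0$, and in the second case the two-match swap that rematches $k$ to $i_0$ and the old partner $\ell$ of $i_0$ to $j$ changes the objective by $2\maxf{\rPS{\ell}, \rPS{j}} - 2p < 0$, since $\rPS{\ell} < p$ and $\rPS{j} < p$. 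Either way optimality is contradicted, exactly as in Lemma~\refmain{lem:no-crossing-matches}. Hence the $t$ treated units with score $\geq p$ occupy $t$ distinct controls with score $\geq p$, and these are disjoint from the $N$ counted controls, which (having $\IMatchedPSpi = 0$) are matched to no treated unit with score $\geq \PSpoint$, let alone $\geq p$; here $p \geq \PSpoint$ because the hypothesis places $p$ in the set whose infimum defines $\PSpoint$. Therefore $c \geq t + N$.

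The probabilistic step is then routine. Setting $X = c - t = -\sumin \paren{2\Wi - 1}\indicator{\rPSi \geq p}$, the hypothesis $\Pr{\Wpop = 1 \given \rPSpop \geq p} \geq 1/2$ gives $\Pr{\Wpop = 1, \rPSpop \geq p} \geq \Pr{\Wpop = 0, \rPSpop \geq p}$, so $\E{X} \leq 0$; and $X$ is a sum of $n$ i.i.d.\ terms bounded in $\bracket{-1, 1}$, so $\Var{X} \leq n$. Since $\E{X} \leq 0$ implies $\maxf{0, X} \leq \abs{X - \E{X}}$, we get $\E{\maxf{0, X}} \leq \sqrt{\Var{X}} \leq \sqrt{n}$, whence $n^{-1}\E{N} \leq \paren{1 - \avgpscore}^n + n^{-1}\sqrt{n} \to 0$.

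The main obstacle is the counting bound, and its delicate point is that it must be stated for controls at exactly $p$ rather than for all unmatched controls with score $\geq p$. The exchange only yields an improvement when the exposed control's score does not exceed that of the misdirected treated unit, so an unmatched control sitting strictly above a treated unit need not admit a profitable swap; anchoring the argument at the atom forces $\rPS{k} \geq p = \rPS{i_0}$ for every relevant treated $k$, which is what makes the rematch work. This is also why a per-unit first-moment bound is insufficient: in the balanced case $\Pr{\Wpop = 1 \given \rPSpop \geq p} = 1/2$ each individual control is exposed with probability near $1/2$, and only the aggregate cap $\maxf{0, c - t}$, which is of order $\bigO{\sqrt n}$ in expectation, rescues the claim.
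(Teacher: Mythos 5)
Your proof is correct and takes essentially the same route as the paper's: the heart of both arguments is the deterministic counting bound $N \leq \maxf{0, c - t}$ for the exposed controls at the atom, derived from the exchange/no-crossing argument showing that an optimal matching with an exposed control at $p$ must match every treated unit with score at least $p$ to a control with score at least $p$. The only (immaterial) difference is the final concentration step, where the paper splits on a $\sqrt{n \logf{n}}$ threshold and applies Hoeffding's inequality while you bound $\E{\maxf{0, c - t}}$ by $\sqrt{\Var{c - t}} \leq \sqrt{n}$; both yield the required $\littleO{n}$ rate.
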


\DeclarePairedDelimiterXPP\PSBalCountAtom[1]{B}{\lparen}{\rparen}{}{#1}
\DeclarePairedDelimiterXPP\CountAtom[1]{C}{\lparen}{\rparen}{}{#1}

\begin{proof}
The proof is completed immediately if $\Pr{\rPSpop = p} = 0$ because:
\begin{equation}
	\Pr[\big]{\IMatchedPSpi = 0, \Wi = 0, \rPSi = p} \leq \Pr{\rPSpop = p}.
\end{equation}
The rest of the proof considers the case when $\Pr{\rPSpop = p} > 0$.

Let $\PSBalCountAtom{p} = \sumin \paren{2\Wi - 1} \indicator{\rPSi \geq p}$ and let $\CountAtom{p} = \sumin \indicator{\Wi = 0, \rPSi = p}$.
By the same argument as in the proof of Lemma~\ref{lem:pscore-mdiscrep-term1-no-upper-unmatched-middle-part}, if $\PSBalCountAtom{p} \geq 0$, then $\IMatchedPSpi = 1$ must be true for all control units with $\rPSi = p$.
If $\PSBalCountAtom{p} < 0$, then some of these units may not be matched.
However, all treated units with $\rPSi \geq p$ will be matched with control units with $\rPSi \geq p$ if possible.
This means that at most:
\begin{equation}
	-\PSBalCountAtom{p} = \CountAtom{p} - \sumin \indicator{\Wi = 1, \rPSi = p} - \sumin \paren{2\Wi - 1} \indicator{\rPSi > p},
\end{equation}
control units with $\rPSi = p$ are unmatched, and:
\begin{equation}
	\sumin \indicator[\big]{\IMatchedPSpi = 0, \Wi = 0, \rPSi = p} \leq \maxf[\big]{0, \minf[\big]{\CountAtom{p}, - \PSBalCountAtom{p}}}.
\end{equation}

Write:
\begin{align}
	\avgin \Pr[\big]{\IMatchedPSpi = 0, \Wi = 0, \rPSi = p}
	&= \E[\bigg]{\avgin \indicator[\big]{\IMatchedPSpi = 0, \Wi = 0, \rPSi = p}}
	\\
	&\leq \E[\Big]{\maxf[\big]{0, \minf[\big]{\CountAtom{p}, - \PSBalCountAtom{p}}} / n}
	\\
	&\leq \E[\Big]{\maxf[\big]{0, - \PSBalCountAtom{p}} / n},
\end{align}
and:
\begin{align}
	&\E[\Big]{\maxf[\big]{0, - \PSBalCountAtom{p}} / n}
	\\
	&\qquad\qquad = \Pr[\Big]{\PSBalCountAtom{p} \leq -\sqrt{n \logf{n}}} \E[\Big]{\maxf[\big]{0, - \PSBalCountAtom{p}} / n \given \PSBalCountAtom{p} \leq -\sqrt{n \logf{n}}}
	\\
	&\qquad\qquad\qquad + \Pr[\Big]{\PSBalCountAtom{p} > -\sqrt{n \logf{n}}} \E[\Big]{\maxf[\big]{0, - \PSBalCountAtom{p}} / n \given \PSBalCountAtom{p} > -\sqrt{n \logf{n}}}.
\end{align}
Bound two of the factors as:
\begin{equation}
	\E[\Big]{\maxf[\big]{0, - \PSBalCountAtom{p}} / n \given \PSBalCountAtom{p} \leq -\sqrt{n \logf{n}}} \leq 1
	\qquad\text{and}\qquad
	\Pr[\Big]{\PSBalCountAtom{p} > -\sqrt{n \logf{n}}} \leq 1,
\end{equation}
so:
\begin{multline}
	\E[\Big]{\maxf[\big]{0, - \PSBalCountAtom{p}} / n}
	\leq \Pr[\Big]{\PSBalCountAtom{p} \leq -\sqrt{n \logf{n}}}
	\\
	+ \E[\Big]{\maxf[\big]{0, - \PSBalCountAtom{p}} / n \given \PSBalCountAtom{p} > -\sqrt{n \logf{n}}}.
\end{multline}

Consider the first term:
\begin{align}
	\Pr[\Big]{\PSBalCountAtom{p} \leq -\sqrt{n \logf{n}}}
	&= \Pr[\Big]{\PSBalCountAtom{p} / n \leq -\sqrt{\logf{n} / n}}
	\\
	&= \Pr[\Big]{\PSBalCountAtom{p} / n - \E{\PSBalCountAtom{p} / n} \leq - \E{\PSBalCountAtom{p} / n} - \sqrt{\logf{n} / n}}
	\\
	&\leq \Pr[\Big]{\PSBalCountAtom{p} / n - \E{\PSBalCountAtom{p} / n} \leq - \sqrt{\logf{n} / n}},
\end{align}
where the last inequality follows from:
\begin{align}
	\E{\PSBalCountAtom{p} / n}
	&= \Pr{\Wpop = 1, \rPSpop \geq p} - \Pr{\Wpop = 0, \rPSpop \geq p}
	\\
	&= \Pr{\rPSpop \geq p} \bracket[\big]{\Pr{\Wpop = 1 \given \rPSpop \geq p} - \Pr{\Wpop = 0 \given \rPSpop \geq p}}
	\\
	&= 2 \Pr{\rPSpop \geq p} \bracket[\big]{\Pr{\Wpop = 1 \given \rPSpop \geq p} - 1 / 2}
	\\
	&\geq 0,
\end{align}
which in turn holds because $\Pr{\rPSpop \geq p} > 0$ and $\Pr{\Wpop = 1 \given \rPSpop \geq p} \geq 1 / 2$.
Apply Hoeffding's inequality (Theorem~\ref{thm:hoeffdings-inequality}) to get:
\begin{equation}
	\Pr[\Big]{\PSBalCountAtom{p} / n - \E{\PSBalCountAtom{p} / n} \leq - \sqrt{\logf{n} / n}}
	\leq \expf[\big]{-2\logf{n}}
	= \frac{1}{n^2}.
\end{equation}

Complete the proof by noting:
\begin{equation}
	\E[\Big]{\maxf[\big]{0, - \PSBalCountAtom{p}} / n \given \PSBalCountAtom{p} > -\sqrt{n \logf{n}}}
	\leq
	\sqrt{\frac{\logf{n}}{n}}. \tag*{\qedhere}
\end{equation}
\end{proof}

\begin{lemma}\label{lem:pscore-mdiscrep-term1-upper-balance}
	Under Conditions~\refmain{cond:reg-cond} and~\refmain{cond:left-closed}:
	\begin{equation}
		\lim_{n \to \infty} \E[\bigg]{\frac{\card{\MatchedPSp} - \card{\ControlsPSp}}{n}} = 0.
	\end{equation}
\end{lemma}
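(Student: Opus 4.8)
The plan is to reduce $\card{\MatchedPSp} - \card{\ControlsPSp}$ to a plain count of treated versus control units above $\PSpoint$, and then to use the defining property of $\PSpoint$ together with the fact that the sample contains too few controls only with exponentially small probability.

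First I would observe that the optimal matching $\moptsym$ is injective, so that $\card{\MatchedPSp} = \nTreatedPSp$ whenever $\nTreated \leq \nControls$: by definition $\MatchedPSp = \setb{\mopt{i} \suchthat i \in \TreatedPSp}$ is the image of the $\nTreatedPSp$ units of $\TreatedPSp$ under an injection. When $\nTreated > \nControls$, the definition instead sets $\MatchedPSp = \ControlsPSp$, so the difference vanishes identically. Combining the two branches,
\begin{equation}
	\frac{\card{\MatchedPSp} - \card{\ControlsPSp}}{n} = \frac{\nTreatedPSp - \nControlsPSp}{n}\, \indicator{\nTreated \leq \nControls}.
\end{equation}

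Next I would write $\nTreatedPSp - \nControlsPSp = \sumin \paren{2\Wi - 1}\indicator{\rPSi \geq \PSpoint}$ and split the expectation into an unconditional part and a correction:
\begin{equation}
	\E[\bigg]{\frac{\card{\MatchedPSp} - \card{\ControlsPSp}}{n}} = \E[\bigg]{\avgin \paren{2\Wi - 1}\indicator{\rPSi \geq \PSpoint}} - \E[\bigg]{\frac{\nTreatedPSp - \nControlsPSp}{n}\, \indicator{\nTreated > \nControls}}.
\end{equation}
The summands in the first expectation are i.i.d., so its value is $\Pr{\rPSpop \geq \PSpoint}\bracket[\big]{2\Pr{\Wpop = 1 \given \rPSpop \geq \PSpoint} - 1}$ for every $n$. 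As already invoked in the proof of Lemma~\ref{lem:pscore-mdiscrep-term1-no-upper-unmatched}, Condition~\refmain{cond:left-closed} forces $\Pr{\Wpop = 1 \given \rPSpop \geq \PSpoint} = 1/2$ whenever $\Pr{\rPSpop \geq \PSpoint} > 0$, while the prefactor $\Pr{\rPSpop \geq \PSpoint}$ is zero otherwise; in either case this term is exactly zero.

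For the correction term I would bound crudely, using $\abs{\nTreatedPSp - \nControlsPSp} \leq n$, by $\Pr{\nTreated > \nControls}$, which the proof of Lemma~\ref{lem:mdiscrep-bias-step2} shows converges to zero at an exponential rate (via $\avgpscore < 1/2$ and Hoeffding's inequality, Theorem~\ref{thm:hoeffdings-inequality}). The two pieces together give the claim. I expect no real obstacle: the only genuine content is the vanishing of the leading term, which is handed to us by Condition~\refmain{cond:left-closed}, and the rest is bookkeeping—identifying $\card{\MatchedPSp}$ with $\nTreatedPSp$ via injectivity and tracking the $\nTreated > \nControls$ branch in the definition of $\MatchedPSp$.
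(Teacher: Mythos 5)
Your proof is correct and takes essentially the same route as the paper: both identify $\card{\MatchedPSp}$ with $\card{\TreatedPSp}$ via injectivity of $\moptsym$ on the event $\nTreated \leq \nControls$, dispose of the complementary event using the exponential bound on $\Pr{\nTreated > \nControls}$ from the proof of Lemma~\ref{lem:mdiscrep-bias-step2}, and kill the remaining i.i.d.\ expectation with $\Pr{\Wpop = 1 \given \rPSpop \geq \PSpoint} = 1/2$ from Condition~\refmain{cond:left-closed}. The only cosmetic difference is that you note the difference vanishes identically on $\{\nTreated > \nControls\}$ (since $\MatchedPSp = \ControlsPSp$ there) where the paper simply bounds that branch by $\Pr{\nTreated > \nControls}$.
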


\begin{proof}
Write:
\begin{equation}
	\E[\bigg]{\frac{\card{\MatchedPSp} - \card{\ControlsPSp}}{n}}
	=
	\E[\bigg]{\frac{\card{\MatchedPSp} - \card{\TreatedPSp}}{n}} + \E[\bigg]{\frac{\card{\TreatedPSp} - \card{\ControlsPSp}}{n}}.
\end{equation}
Consider the absolute value of the first expectation:
\begin{equation}
	\abs[\bigg]{\E[\bigg]{\frac{\card{\MatchedPSp} - \card{\TreatedPSp}}{n}}}
	= \Pr[\big]{\nTreated > \nControls} \abs[\bigg]{\E[\bigg]{\frac{\card{\MatchedPSp} - \card{\TreatedPSp}}{n} \given \nTreated > \nControls}}
	\leq \Pr[\big]{\nTreated > \nControls},
\end{equation}
because $\card{\MatchedPSp} = \card{\setb{\mopt{i} \suchthat i \in \TreatedPSp}} = \card{\TreatedPSp}$ when $\nTreated \leq \nControls$.
As noted in the proof of Lemma~\ref{lem:mdiscrep-bias-step2}:
\begin{equation}
	\lim_{n \to \infty} \Pr[\big]{\nTreated > \nControls} = 0,
\end{equation}
given Condition~\refmain{cond:reg-cond}.

Next:
\begin{equation}
	\E[\bigg]{\frac{\card{\TreatedPSp} - \card{\ControlsPSp}}{n}}
	=
	\Pr{\Wpop = 1, \rPSpop \geq \PSpoint} - \Pr{\Wpop = 0, \rPSpop \geq \PSpoint},
\end{equation}
because $\TreatedPSp = \setb{i \in \Treated \suchthat \rPSi \geq \PSpoint}$ and $\ControlsPSp = \setb{i \in \Controls \suchthat \rPSi \geq \PSpoint}$.
Condition~\refmain{cond:left-closed} implies that $\Pr{\Wpop = 1 \given \rPSpop \geq \PSpoint} = 1 / 2$, so:
\begin{align}
	&\Pr{\Wpop = 1, \rPSpop \geq \PSpoint} - \Pr{\Wpop = 0, \rPSpop \geq \PSpoint}
	\\
	&\qquad\qquad = \Pr{\rPSpop \geq \PSpoint} \bracket[\big]{\Pr{\Wpop = 1 \given \rPSpop \geq \PSpoint} - \Pr{\Wpop = 0 \given \rPSpop \geq \PSpoint}}
	\\
	&\qquad\qquad = 2\Pr{\rPSpop \geq \PSpoint} \bracket[\big]{\Pr{\Wpop = 1 \given \rPSpop \geq \PSpoint} - 1 / 2}
	\\
	&\qquad\qquad = 0,
\end{align}
because $\Pr{\Wpop = 1 \given \rPSpop \geq \PSpoint} + \Pr{\Wpop = 0 \given \rPSpop \geq \PSpoint} = 1$.
\end{proof}

\subsection{Proofs of Lemmas~\ref*{lem:pscore-mdiscrep-term2}~and~\ref*{lem:pscore-mdiscrep-term2-bin-overflow}}

\begin{lemma}\label{lem:pscore-mdiscrep-term2}
	Under Conditions~\refmain{cond:reg-cond},~\refmain{cond:pscore-lipschitz} and~\refmain{cond:left-closed}:
	\begin{equation}
		\lim_{n \to \infty} \E[\bigg]{\frac{1}{\avgpscore n} \sum_{i \in \TreatedPSm} \POci - \frac{1}{\avgpscore n} \sum_{i \in \MatchedPSm} \POci} = 0.
	\end{equation}
\end{lemma}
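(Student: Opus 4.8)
The plan is to mirror the structure of the proof of Lemma~\ref{lem:pscore-mdiscrep-term1}, reducing the difference of outcomes first to a difference of conditional expectations and then to a bound on matching distances. First I would condition on the vectors $\paren{\W{1}, \dotsc, \W{n}}$ and $\paren{\rPS{1}, \dotsc, \rPS{n}}$, which determine the matching and hence the sets $\TreatedPSm$ and $\MatchedPSm$. Because the observations are independent and $\POcpop$ is conditionally independent of $\Wpop$ given $\rPSpop$ (Theorem~\ref{thm:pscore-balancing}), the conditional expectation of each $\POci$ equals $\EPOPSc{\rPSi}$ whether $i$ is a treated unit or a matched control. Thus
\begin{equation}
	\E[\bigg]{\frac{1}{\avgpscore n} \sum_{i \in \TreatedPSm} \POci - \frac{1}{\avgpscore n} \sum_{i \in \MatchedPSm} \POci}
	= \E[\bigg]{\frac{1}{\avgpscore n} \sum_{i \in \TreatedPSm} \EPOPSc{\rPSi} - \frac{1}{\avgpscore n} \sum_{i \in \MatchedPSm} \EPOPSc{\rPSi}}.
\end{equation}
Condition~\refmain{cond:pscore-lipschitz} together with $\suppPS \subset \bracket{0,1}$ makes $\EPOPSc{p}$ bounded, say by $c_\mu$, so the contributions from the events $\nTreated = 0$ and $\nTreated > \nControls$ are at most $c_\mu / \avgpscore$ in absolute value and are multiplied by probabilities that vanish exponentially (as in the proof of Lemma~\ref{lem:mdiscrep-bias-step2}); these may therefore be discarded.

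On the event $1 \leq \nTreated \leq \nControls$, the optimal matching restricts to a bijection $i \mapsto \mopt{i}$ from $\TreatedPSm$ onto $\MatchedPSm = \setb{\mopt{i} \suchthat i \in \TreatedPSm}$, so I would re-index the second sum by the treated units and write the integrand as $\frac{1}{\avgpscore n} \sum_{i \in \TreatedPSm} \bracket{\EPOPSc{\rPSi} - \EPOPSc{\rPS{\mopt{i}}}}$. Applying the Lipschitz bound of Condition~\refmain{cond:pscore-lipschitz} with constant $c$ gives
\begin{equation}
	\abs[\bigg]{\frac{1}{\avgpscore n} \sum_{i \in \TreatedPSm} \bracket[\big]{\EPOPSc{\rPSi} - \EPOPSc{\rPS{\mopt{i}}}}}
	\leq \frac{c}{\avgpscore n} \sum_{i \in \TreatedPSm} \abs[\big]{\rPSi - \rPS{\mopt{i}}}.
\end{equation}
It therefore suffices to show that the expected normalized matching objective below $\PSpoint$ vanishes, i.e.\ that $\E[\big]{n^{-1} \sum_{i \in \TreatedPSm} \abs{\rPSi - \rPS{\mopt{i}}} \given 1 \leq \nTreated \leq \nControls} \to 0$.

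The crux is this last bound, which I would establish by a binning argument. Partition the support below $\PSpoint$ into $K_n$ bins of width $\delta_n$, with $\delta_n \to 0$ slowly enough that each bin still receives many units. A treated unit matched to a control in its own bin contributes at most $\delta_n$ to the sum, so together these contribute at most $\delta_n \card{\TreatedPSm} / n \leq \delta_n$. The remaining treated units, those matched outside their bin, contribute at most $1$ each since $\rPSpop \in \bracket{0,1}$, so their total contribution is at most the fraction of such ``overflow'' units. The key observation is that for every $p < \PSpoint$ one has $\Pr{\Wpop = 1 \given \rPSpop \geq p} < 1/2$, which together with $\PSpoint \leq 1/2$ and $\Pr{\Wpop = 1 \given \rPSpop = p} = p$ forces a local abundance of controls: within any bin below $\PSpoint$ the expected number of controls exceeds the expected number of treated. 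No-crossing (Lemma~\refmain{lem:no-crossing-matches}) gives the matching a monotone structure, so that a treated unit overflows its bin only when the cumulative balance of treated over controls near that bin is positive, an event whose probability is controlled by Hoeffding's inequality exactly as in the proof of Lemma~\ref{lem:pscore-mdiscrep-term1-no-upper-unmatched-atom-point}. Summing these bounds over the $K_n$ bins shows the expected fraction of overflow units is $\littleO{1}$; this is the content of the auxiliary Lemma~\ref{lem:pscore-mdiscrep-term2-bin-overflow}.

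The main obstacle is the overflow bound in the final paragraph: the two limits $\delta_n \to 0$ and $n \to \infty$ must be taken jointly, and shrinking the bins inflates the relative size of the stochastic fluctuations in each bin even as the local abundance of controls works in our favour. Choosing the bin width to decay at the right rate, fast enough that within-bin matches are close but slow enough that concentration still suppresses the per-bin imbalance, is the delicate step, and translating the monotone no-crossing structure into a clean bound on the total overflow across all bins is where the combinatorial work of Lemma~\ref{lem:pscore-mdiscrep-term2-bin-overflow} is concentrated.
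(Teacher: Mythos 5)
Your overall strategy matches the paper's: reduce $\POci$ to $\EPOPSc{\rPSi}$ via Theorem~\ref{thm:pscore-balancing}, apply the Lipschitz condition to reduce the problem to the normalized matching objective below $\PSpoint$, and control that objective by comparing the optimal matching to an explicit binned matching whose overflow is bounded by Hoeffding (Lemma~\ref{lem:pscore-mdiscrep-term2-bin-overflow}). However, there is a genuine gap in your overflow accounting. You claim that a treated unit with $\rPSi < \PSpoint$ overflows its bin only when the raw count of treated units exceeds the raw count of controls near that bin. This is false: the treated units in $\TreatedPSm$ do not get to match against all of $\ControlsPSm$, but only against the controls not already appropriated by treated units with $\rPSi \geq \PSpoint$. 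Since the whole point of the construction of $\PSpoint$ is that there is an excess of treated units above it, some controls below $\PSpoint$ are necessarily matched upward (they lie in $\MatchedPSp$), and a bin can therefore overflow even when its raw control count comfortably exceeds its treated count. The paper handles this by restricting the comparison class to injections into $\Controls \setminus \MatchedPSp$ and explicitly carrying the extra overflow term $n^{-1}\sum_k \card{\ControlsBink \cap \MatchedPSp} = n^{-1}\card{\ControlsPSm \cap \MatchedPSp}$, which it then decomposes as $\paren[\big]{\card{\MatchedPSp} - \card{\ControlsPSp}}/n + \card{\ControlsPSp \setminus \MatchedPSp}/n$ and kills using Lemmas~\ref{lem:pscore-mdiscrep-term1-no-upper-unmatched} and~\ref{lem:pscore-mdiscrep-term1-upper-balance}. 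Your proposal never invokes those lemmas, and without that term the binning bound does not close; this is precisely why the dependency diagram routes Lemma~\ref{lem:pscore-mdiscrep-term2} through them.

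Two smaller points. First, the comparison-matching step needs a justification that the restriction of $\moptsym$ to $\TreatedPSm$ is optimal among injections from $\TreatedPSm$ into the \emph{available} controls (the paper sets up the class $\allM_e$ and an optimum $\matchsym^*_e$ for exactly this purpose); appealing to the no-crossing lemma for ``monotone structure'' does not by itself deliver the inequality objective-of-optimum $\leq$ objective-of-binned-matching. Second, the local-abundance fact you want inside a bin is $\Pr{\Wpop = 1 \given \rPSpop \in \text{bin}} \leq 1/2$, which follows from the bin lying below $\PSpoint \leq 1/2$ and $\Pr{\Wpop = 1 \given \rPSpop = p} = p$; the tail statement $\Pr{\Wpop = 1 \given \rPSpop \geq p} < 1/2$ that you cite conditions on a set extending above $\PSpoint$ and is not the quantity Hoeffding is applied to.
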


\newcommand{\eTreatedPSm}{\Treated_e}

\newcommand{\eallM}{\allM_e}

\newcommand{\emoptsym}{\matchsym^*_e}
\DeclarePairedDelimiterXPP\emopt[1]{\emoptsym}{\lparen}{\rparen}{}{#1}
\newcommand{\emopti}{\emopt{i}}
\newcommand{\emoptj}{\emopt{j}}

\newcommand{\mrestsym}{\matchsym_r}
\DeclarePairedDelimiterXPP\mrest[1]{\mrestsym}{\lparen}{\rparen}{}{#1}
\newcommand{\mresti}{\mrest{i}}
\newcommand{\mrestj}{\mrest{j}}

\newcommand{\LipschitzConst}{c_\mu}

\newcommand{\BinSeq}{b_n}
\newcommand{\BinWidth}{w_n}

\newcommand{\SampleBin}[1]{\Sample_{#1,n}}
\newcommand{\TreatedBin}[1]{\Treated_{#1,n}}
\newcommand{\ControlsBin}[1]{\Controls_{#1,n}}

\newcommand{\SampleBink}{\SampleBin{k}}
\newcommand{\TreatedBink}{\TreatedBin{k}}
\newcommand{\ControlsBink}{\ControlsBin{k}}

\begin{proof}
By the same argument as in the proof of Lemma~\ref{lem:pscore-mdiscrep-term1}, namely that the matching only depends on $\W{1}, \W{2}, \dotsc, \W{n}$ and $\rPS{1}, \rPS{2}, \dotsc, \rPS{n}$ and unconfoundedness with respect to the propensity score (Theorem~\ref{thm:pscore-balancing}):
\begin{equation}
	\E[\bigg]{\frac{1}{\avgpscore n} \sum_{i \in \TreatedPSm} \POci - \frac{1}{\avgpscore n} \sum_{i \in \MatchedPSm} \POci}
	= \E[\Bigg]{\frac{1}{\avgpscore n} \sum_{i \in \TreatedPSm} \EPOPSc{\rPSi} - \frac{1}{\avgpscore n} \sum_{i \in \MatchedPSm} \EPOPSc{\rPSi}}
\end{equation}

The unit index will now be extended beyond $\Sample$.
For any $i > n$, set $\rPSi = -1$.
Also extend $\EPOPSc{p}$ so that $\EPOPSc{-1} = 0$.
Define:
\begin{equation}
	\eTreatedPSm
	=
	\begin{cases}
		\TreatedPSm & \text{if } \nTreated \leq \nControls
		\\
		\TreatedPSm \cup \setb{n + i \suchthat 1 \leq i \leq \card{\ControlsPSm} - \card{\TreatedPSm}} & \text{if } \nTreated > \nControls
	\end{cases}
\end{equation}
so $\card{\Treated_e} = \card{\MatchedPSm}$ no matter whether $\nTreated \leq \nControls$ or $\nTreated > \nControls$, because:
\begin{equation}
	\MatchedPSm =
	\begin{cases}
		\setb{\mopt{i} \suchthat i \in \TreatedPSm} & \text{if } \nTreated \leq \nControls,
		\\
		\ControlsPSm & \text{if } \nTreated > \nControls.
	\end{cases}
\end{equation}
Because we defined $\EPOPSc{-1} = 0$, we can write:
\begin{equation}
	\frac{1}{\avgpscore n} \sum_{i \in \TreatedPSm} \EPOPSc{\rPSi} - \frac{1}{\avgpscore n} \sum_{i \in \MatchedPSm} \EPOPSc{\rPSi}
	=
	\frac{1}{\avgpscore n} \sum_{i \in \eTreatedPSm} \EPOPSc{\rPSi} - \frac{1}{\avgpscore n} \sum_{i \in \MatchedPSm} \EPOPSc{\rPSi}
\end{equation}

Let $\eallM$ be all injective functions from $\eTreatedPSm$ to $\Controls \setminus \MatchedPSp$.
Select a $\emoptsym \in \eallM$ satisfying:
\begin{equation}
	\emoptsym \in \argmin_{\mgensym \in \eallM} \sum_{i \in \eTreatedPSm} \abs{\rPSi - \rPS{\mgeni}}.
\end{equation}
If $\nTreated \leq \nControls$, then select $\emoptsym = \moptsym$, so $\emopti = \mopti$ for all $i \in \eTreatedPSm = \TreatedPSm$.
This is possible because $\MatchedPSm \subseteq \Controls \setminus \MatchedPSp$.
If $\nTreated > \nControls$, then $\MatchedPSm = \ControlsPSm = \Controls \setminus \MatchedPSp$.
This means that $\emoptsym$ is a bijection from $\Treated_e$ to $\MatchedPSm$ no matter whether $\nTreated \leq \nControls$ or $\nTreated > \nControls$, and:
\begin{multline}
	\frac{1}{\avgpscore n} \sum_{i \in \eTreatedPSm} \EPOPSc{\rPSi} - \frac{1}{\avgpscore n} \sum_{i \in \MatchedPSm} \EPOPSc{\rPSi}
	=
	\frac{1}{\avgpscore n} \sum_{i \in \eTreatedPSm} \paren[\big]{\EPOPSc{\rPSi} - \EPOPSc{\rPS{\emopti}}}
	\\
	\leq
	\frac{1}{\avgpscore n} \sum_{i \in \eTreatedPSm} \abs[\big]{\EPOPSc{\rPSi} - \EPOPSc{\rPS{\emopti}}}
\end{multline}

Condition~\refmain{cond:pscore-lipschitz} stipulates that $\EPOPSc{p}$ is Lipschitz continuous on the support of $\rPSpop$.
The task now is to extend Lipschitz continuity to also include the point $p = -1$.
Condition~\refmain{cond:reg-cond} implies that $\EPOPSc{a}$ exists for some $a \in \suppPS$.
By the triangle inequality, for any $p \in \suppPS$:
\begin{equation}
	\abs[\big]{\EPOPSc{-1} - \EPOPSc{p}} \leq \abs[\big]{\EPOPSc{-1} - \EPOPSc{a}} + \abs[\big]{\EPOPSc{a} - \EPOPSc{p}}
\end{equation}
Recall $\EPOPSc{-1} = 0$, and $a$ was picked so $\EPOPSc{a}$ existed, so $\abs[\big]{\EPOPSc{-1} - \EPOPSc{a}}$ exists.
Furthermore:
\begin{equation}
	\abs[\big]{\EPOPSc{a} - \EPOPSc{p}} \leq c,
\end{equation}
for all $p \in \suppPS$ because of Lipschitz continuity on $\suppPS \subseteq \bracket{0, 1}$ and $a \in \suppPS$.
The constant $c$ is the Lipschitz constant.
It follows that $\EPOPSc{p}$ is Lipschitz continuous on $\suppPS \cup \setb{-1}$ with Lipschitz constant $\LipschitzConst = c + \abs[\big]{\EPOPSc{a}}$.

By virtue of being Lipschitz continuous:
\begin{equation}
	\frac{1}{\avgpscore n} \sum_{i \in \eTreatedPSm} \abs[\big]{\EPOPSc{\rPSi} - \EPOPSc{\rPS{\emopti}}}
	\leq
	\frac{\LipschitzConst}{\avgpscore n} \sum_{i \in \eTreatedPSm} \abs[\big]{\rPSi - \rPS{\emopti}}
\end{equation}

Now for the key step of the proof.
Consider a weakly growing sequence $\paren{\BinSeq}$ in $\Naturals$ such that $\BinSeq \geq 1$ and $\BinSeq \to \infty$.
The growth rate is, however, sufficiently slow so that:
\begin{equation}
	\lim_{n \to \infty} \BinSeq \logf{n} / n = 0.
\end{equation}
Let $\BinWidth = \paren{\PSpoint - \suppPSm} / \BinSeq$.
For $k \in \setb{1, \dotsc, \BinSeq}$, let:
\begin{align}
	\SampleBink &= \setb{i \in \SamplePSm \suchthat \suppPSm + \paren{k - 1} \BinWidth \leq \rPSi < \suppPSm + k \BinWidth}
	\\
	\TreatedBink &= \setb{i \in \TreatedPSm \suchthat \suppPSm + \paren{k - 1} \BinWidth \leq \rPSi < \suppPSm + k \BinWidth}
	\\
	\ControlsBink &= \setb{i \in \ControlsPSm \suchthat \suppPSm + \paren{k - 1} \BinWidth \leq \rPSi < \suppPSm + k \BinWidth}
\end{align}

Recall that $\eallM$ contains all injective functions from $\eTreatedPSm$ to $\Controls \setminus \MatchedPSp$.
Consider a matching $\mrestsym \in \eallM$ such that $\setb{\mresti \suchthat i \in \TreatedBink} \subseteq \ControlsBink \setminus \MatchedPSp$ if $\card{\TreatedBink} \leq \card{\ControlsBink \setminus \MatchedPSp}$, and $\ControlsBink \setminus \MatchedPSp \subseteq \setb{\mresti \suchthat i \in \TreatedBink}$ if $\card{\TreatedBink} > \card{\ControlsBink \setminus \MatchedPSp}$.
In other words, $\mrestsym$ is such that units in $\TreatedBink$ are first matched with control units in $\ControlsBink$ not matched to a treated unit in $\TreatedPSp$ in $\moptsym$, and if there are not sufficient many such units, the remaining units in $\TreatedBink$ are matched arbitrarily.

Because $\emoptsym$ is an optimum in $\eallM$ and $\mrestsym \in \eallM$:
\begin{equation}
	\frac{\LipschitzConst}{\avgpscore n} \sum_{i \in \eTreatedPSm} \abs[\big]{\rPSi - \rPS{\emopti}}
	\leq
	\frac{\LipschitzConst}{\avgpscore n} \sum_{i \in \eTreatedPSm} \abs[\big]{\rPSi - \rPS{\mresti}}
\end{equation}

Let $\TreatedBin{0} = \setb{i \in \eTreatedPSm \suchthat \rPSi = -1}$, and for completeness, let $\SampleBin{0} = \TreatedBin{0}$ and $\ControlsBin{0} = \emptyset$.
This means that $\TreatedBin{0}, \TreatedBin{1}, \dotsc, \TreatedBin{\BinSeq}$ partition $\eTreatedPSm$, so:
\begin{equation}
	\frac{\LipschitzConst}{\avgpscore n} \sum_{i \in \eTreatedPSm} \abs[\big]{\rPSi - \rPS{\mresti}}
	=
	\frac{\LipschitzConst}{\avgpscore n} \sum_{i \in \TreatedBin{0}} \abs[\big]{\rPSi - \rPS{\mresti}}
	+
	\frac{\LipschitzConst}{\avgpscore n} \sum_{k = 1}^{\BinSeq} \sum_{i \in \TreatedBink} \abs[\big]{\rPSi - \rPS{\mresti}}
\end{equation}
Note $\abs{\rPSi - \rPS{\mresti}} \leq 2$ for $i \in \TreatedBin{0} = \setb{i \in \eTreatedPSm \suchthat \rPSi = -1}$, so:
\begin{equation}
	\frac{\LipschitzConst}{\avgpscore n} \sum_{i \in \TreatedBin{0}} \abs[\big]{\rPSi - \rPS{\mresti}}
	\leq
	\frac{2 \LipschitzConst \card{\TreatedBin{0}}}{\avgpscore n}
\end{equation}

By a similar argument, $\abs{\rPSi - \rPS{\mresti}} \leq 1$ for $i \in \TreatedBink$ and $k \geq 1$.
However, we sometimes have a sharper bound.
If $\card{\TreatedBink} \leq \card{\ControlsBink \setminus \MatchedPSp}$, then $\setb{\mresti \suchthat i \in \TreatedBink} \subseteq \ControlsBink \setminus \MatchedPSp$, so all matches are inside the bin, and $\abs{\rPSi - \rPS{\mresti}} \leq \BinWidth$ for $i \in \TreatedBink$.
If instead $\card{\TreatedBink} > \card{\ControlsBink \setminus \MatchedPSp}$, then $\abs{\rPSi - \rPS{\mresti}} \leq \BinWidth$ holds only for a subset of $\TreatedBink$ of size $\card{\ControlsBink \setminus \MatchedPSp}$.
Taken together, this means that $\abs{\rPSi - \rPS{\mresti}} \leq \BinWidth$ is true for $\minf[\big]{\card{\TreatedBink}, \card{\ControlsBink \setminus \MatchedPSp}}$ units in $\TreatedBink$, and $\abs{\rPSi - \rPS{\mresti}} \leq 1$ for the remaining units.
It follows that, for any $1 \leq k \leq \BinSeq$:
\begin{equation}
	\sum_{i \in \TreatedBink} \abs[\big]{\rPSi - \rPS{\mresti}}
	\leq
	\BinWidth \minf[\big]{\card{\TreatedBink}, \card{\ControlsBink \setminus \MatchedPSp}}
	+
	\paren[\Big]{\card{\TreatedBink} - \minf[\big]{\card{\TreatedBink}, \card{\ControlsBink \setminus \MatchedPSp}}}
\end{equation}
In the first term, bound the minimum as:
\begin{equation}
	\minf[\big]{\card{\TreatedBink}, \card{\ControlsBink \setminus \MatchedPSp}} \leq \card{\TreatedBink}.
\end{equation}
For the second term:
\begin{equation}
	\card{\TreatedBink} - \minf[\big]{\card{\TreatedBink}, \card{\ControlsBink \setminus \MatchedPSp}}
	=
	\maxf[\big]{0, \card{\TreatedBink} - \card{\ControlsBink \setminus \MatchedPSp}}
\end{equation}

Note that $\ControlsBink \setminus \MatchedPSp = \ControlsBink \setminus \paren{\ControlsBink \cap \MatchedPSp}$, so $\card{\ControlsBink \setminus \MatchedPSp} = \card{\ControlsBink} - \card{\ControlsBink \cap \MatchedPSp}$.
Write:
\begin{multline}
	\maxf[\big]{0, \card{\TreatedBink} - \card{\ControlsBink \setminus \MatchedPSp}}
	=
	\maxf[\big]{0, \card{\TreatedBink} - \card{\ControlsBink} + \card{\ControlsBink \cap \MatchedPSp}}
	\\
	\leq
	\maxf[\big]{0, \card{\TreatedBink} - \card{\ControlsBink}} + \card{\ControlsBink \cap \MatchedPSp}
\end{multline}
so that:
\begin{multline}
	\frac{\LipschitzConst}{\avgpscore n} \sum_{k = 1}^{\BinSeq} \sum_{i \in \TreatedBink} \abs[\big]{\rPSi - \rPS{\mresti}}
	\leq
	\frac{\LipschitzConst}{\avgpscore n} \sum_{k = 1}^{\BinSeq} \BinWidth \card{\TreatedBink}
	+
	\frac{\LipschitzConst}{\avgpscore n} \sum_{k = 1}^{\BinSeq} \maxf[\big]{0, \card{\TreatedBink} - \card{\ControlsBink}}
	\\
	+
	\frac{\LipschitzConst}{\avgpscore n} \sum_{k = 1}^{\BinSeq} \card{\ControlsBink \cap \MatchedPSp}
\end{multline}

The sets $\TreatedBin{1}, \TreatedBin{2}, \dotsc, \TreatedBin{\BinSeq}$ partition $\TreatedPSm$, so:
\begin{equation}
	\sum_{k = 1}^{\BinSeq} \BinWidth \card{\TreatedBink} = \BinWidth \card{\TreatedPSm}
\end{equation}
Similarly, $\ControlsBin{1}, \ControlsBin{2}, \dotsc, \ControlsBin{\BinSeq}$ partition $\ControlsPSm$, and:
\begin{multline}
	\ControlsPSm \cap \MatchedPSp
	=
	\paren{\ControlsBin{1} \cup \ControlsBin{2} \cup \dotsb \cup \ControlsBin{\BinSeq}} \cap \MatchedPSp
	\\
	=
	\paren{\ControlsBin{1} \cap \MatchedPSp} \cup \paren{\ControlsBin{2} \cap \MatchedPSp} \cup \dotsb \cup \paren{\ControlsBin{\BinSeq} \cap \MatchedPSp}
\end{multline}
so:
\begin{equation}
	\sum_{k = 1}^{\BinSeq} \card{\ControlsBink \cap \MatchedPSp}
	= \card{\ControlsPSm \cap \MatchedPSp}
\end{equation}
To continue, note that $\ControlsPSp$ and $\ControlsPSm$ partition $\Controls$, so:
\begin{equation}
	\card{\ControlsPSp \cap \MatchedPSp} + \card{\ControlsPSm \cap \MatchedPSp} = \card{\Controls \cap \MatchedPSp} = \card{\MatchedPSp}
\end{equation}
where the last equality follows from $\MatchedPSp \subset \Controls$.
This implies:
\begin{multline}
	\card{\ControlsPSm \cap \MatchedPSp}
	= \card{\MatchedPSp} - \card{\ControlsPSp \cap \MatchedPSp}
	= \paren[\big]{\card{\MatchedPSp} - \card{\ControlsPSp}} + \paren[\big]{\card{\ControlsPSp} - \card{\ControlsPSp \cap \MatchedPSp}}
	\\
	= \paren[\big]{\card{\MatchedPSp} - \card{\ControlsPSp}} + \card{\ControlsPSp \setminus \MatchedPSp}
\end{multline}
where the last equality follows from:
\begin{equation}
	\card{\ControlsPSp \setminus \MatchedPSp} = \card{\ControlsPSp \setminus \paren{\ControlsPSp \cap \MatchedPSp}} = \card{\ControlsPSp} - \card{\ControlsPSp \cap \MatchedPSp}
\end{equation}

Recapitulating what we have shown so far:
\begin{equation}
	\E[\bigg]{\frac{1}{\avgpscore n} \sum_{i \in \TreatedPSm} \POci - \frac{1}{\avgpscore n} \sum_{i \in \MatchedPSm} \POci}
	\leq
	\E[\bigg]{\frac{\LipschitzConst}{\avgpscore n} \sum_{i \in \eTreatedPSm} \abs[\big]{\rPSi - \rPS{\mresti}}}
\end{equation}
and:
\begin{multline}
	\E[\bigg]{\frac{\LipschitzConst}{\avgpscore n} \sum_{i \in \eTreatedPSm} \abs[\big]{\rPSi - \rPS{\mresti}}}
	\leq
	\frac{2 \LipschitzConst}{\avgpscore} \E[\bigg]{\frac{\card{\TreatedBin{0}}}{n}}
	+
	\frac{\LipschitzConst \BinWidth}{\avgpscore} \E[\bigg]{\frac{\card{\TreatedPSm}}{n}}
	+
	\frac{\LipschitzConst}{\avgpscore} \E[\bigg]{\frac{\card{\ControlsPSp \setminus \MatchedPSp}}{n}}
	\\
	+
	\frac{\LipschitzConst}{\avgpscore} \E[\bigg]{\frac{\card{\MatchedPSp} - \card{\ControlsPSp}}{n}}
	+
	\frac{\LipschitzConst}{\avgpscore} \E[\bigg]{\frac{1}{n}\sum_{k = 1}^{\BinSeq} \maxf[\big]{0, \card{\TreatedBink} - \card{\ControlsBink}}}
\end{multline}

For the first term, recall that $\TreatedBin{0} = \emptyset$ when $\nTreated \leq \nControls$, and $\card{\TreatedBin{0}} \leq n$ otherwise.
It follows:
\begin{equation}
	\E[\bigg]{\frac{\card{\TreatedBin{0}}}{n}} \leq \Pr[\big]{\nTreated > \nControls},
\end{equation}
which was shown to converge to zero given Condition~\refmain{cond:reg-cond} in the proof of Lemma~\ref{lem:mdiscrep-bias-step2}.
For the second term, note that $\card{\TreatedPSm} \leq n$, so:
\begin{equation}
	\frac{\LipschitzConst \BinWidth}{\avgpscore} \E[\bigg]{\frac{\card{\TreatedPSm}}{n}} \leq \frac{\LipschitzConst \BinWidth}{\avgpscore},
\end{equation}
which converges to zero because $\BinSeq \to \infty$ implies that $\BinWidth \to 0$.
Lemmas~\ref{lem:pscore-mdiscrep-term1-no-upper-unmatched} and~\ref{lem:pscore-mdiscrep-term1-upper-balance} demonstrate that the third and fourth terms converge to zero.
Lemma~\ref{lem:pscore-mdiscrep-term2-bin-overflow} completes the proof.
\end{proof}

\begin{lemma}\label{lem:pscore-mdiscrep-term2-bin-overflow}
	Under Conditions~\refmain{cond:reg-cond},~\refmain{cond:pscore-lipschitz} and~\refmain{cond:left-closed}:
	\begin{equation}
		\lim_{n \to \infty} \E[\bigg]{\frac{1}{n}\sum_{k = 1}^{\BinSeq} \maxf[\big]{0, \card{\TreatedBink} - \card{\ControlsBink}}} = 0.
	\end{equation}
	where $\BinSeq$, $\TreatedBink$ and $\ControlsBink$ are defined in the proof of Lemma~\ref{lem:pscore-mdiscrep-term2}.
\end{lemma}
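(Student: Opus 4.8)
The plan is to reduce the claim to an elementary second-moment bound carried out bin by bin. First I would abbreviate $N_k = \card{\SampleBink}$ and $T_k = \card{\TreatedBink}$, so that $\card{\ControlsBink} = N_k - T_k$ and the per-bin overflow is a centered sum of independent bounded terms,
\begin{equation}
	\card{\TreatedBink} - \card{\ControlsBink} = 2 T_k - N_k = \sumin \paren[\big]{2 \Wi - 1} \indicator[\big]{\suppPSm + \paren{k - 1}\BinWidth \leq \rPSi < \suppPSm + k \BinWidth}.
\end{equation}
Because each summand lies in $\setb{-1, 0, 1}$ and the observations are i.i.d., and because $\frac{1}{n}\sum_{k=1}^{\BinSeq} \maxf[\big]{0, 2 T_k - N_k}$ is a finite sum of nonnegative terms, I may move the expectation inside the sum and bound each bin in isolation.

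The decisive observation is that every bin lies below $\PSpoint$, which forces the mean overflow in each bin to be nonpositive. Let $p_k$ denote the probability mass of bin $k$ and $r_k = \Pr{\Wpop = 1, \rPSpop \in \text{bin } k}$. Since $\Pr{\Wpop = 1 \given \rPSpop = p} = p$ and each half-open bin is contained in $\bracket{\suppPSm, \PSpoint}$, the conditional treatment probability inside bin $k$ equals $\Es{\rPSpop \given \rPSpop \in \text{bin } k}$, which is bounded by the bin's right endpoint $\suppPSm + k \BinWidth \leq \PSpoint \leq 1/2$. Hence $2 r_k - p_k \leq 0$ and $\E{2 T_k - N_k} = n\paren{2 r_k - p_k} \leq 0$.

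Given a nonpositive mean, I would invoke the elementary fact that any random variable $X$ with $\E{X} \leq 0$ satisfies
\begin{equation}
	\E{\maxf[\big]{0, X}} \leq \E{\maxf[\big]{0, X - \E{X}}} \leq \sqrt{\Var{X}},
\end{equation}
where the first inequality uses $X \leq X - \E{X}$ and the second uses $\maxf{0, \cdot} \leq \abs{\cdot}$ together with Jensen's inequality. Applying this with $X = 2 T_k - N_k$ and bounding the variance of the i.i.d. sum via $\paren{2\Wpop - 1}^2 = 1$ as $\Var{2 T_k - N_k} \leq n p_k$, I obtain $\E{\maxf[\big]{0, 2 T_k - N_k}} \leq \sqrt{n p_k}$. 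Summing over bins, applying the Cauchy--Schwarz inequality, and using $\sum_{k=1}^{\BinSeq} p_k \leq 1$ gives
\begin{equation}
	\frac{1}{n}\sum_{k=1}^{\BinSeq} \E{\maxf[\big]{0, 2 T_k - N_k}} \leq \frac{1}{n}\sum_{k=1}^{\BinSeq} \sqrt{n p_k} \leq \frac{\sqrt{n}}{n}\sqrt{\BinSeq \sum_{k=1}^{\BinSeq} p_k} \leq \sqrt{\frac{\BinSeq}{n}},
\end{equation}
which vanishes because the growth condition $\BinSeq \logf{n} / n \to 0$ from the proof of Lemma~\ref{lem:pscore-mdiscrep-term2} implies $\BinSeq / n \to 0$. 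The only subtle step is the nonpositivity of the mean overflow: it is exactly here that the placement of the bins below $\PSpoint \leq 1/2$, and the half-open convention that keeps the top bin strictly under $\PSpoint$, do the work; the remaining estimates are routine.
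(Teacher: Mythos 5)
Your proof is correct, but it takes a genuinely different and more elementary route than the paper's. Both arguments turn on the same key fact: every bin lies in $\bracket{\suppPSm, \PSpoint}$ with $\PSpoint \leq 1/2$, so the expected fraction of treated units in each bin is at most one half and the mean overflow $\E{2\card{\TreatedBink} - \card{\SampleBink}}$ is nonpositive. From there the paper conditions on the bin size $\card{\SampleBink} = u$, splits on whether $\card{\TreatedBink}$ exceeds $u/2 + \sqrt{u \logf{n}}$, controls the exceedance probability with Hoeffding's inequality, and recombines across bins with two applications of Jensen's inequality, arriving at a bound of order $\sqrt{\BinSeq \logf{n} / n}$ plus an $n^{-2}$ remainder. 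You instead work unconditionally, writing the overflow as a centered i.i.d.\ sum, passing from $\E{\maxf{0, X}}$ with $\E{X} \leq 0$ to $\sqrt{\Var{X}} \leq \sqrt{n p_k}$, and applying Cauchy--Schwarz across bins to get $\sqrt{\BinSeq / n}$. Your bound is tighter by a $\sqrt{\logf{n}}$ factor and dispenses with both the exponential inequality and the conditioning; either bound vanishes under the stipulated growth condition $\BinSeq \logf{n}/n \to 0$ (yours needs only $\BinSeq/n \to 0$). One small remark: the half-open bin convention you flag as essential is not actually load-bearing, since a weak inequality $\Pr{\Wpop = 1 \given \rPSpop \in \text{bin } k} \leq 1/2$ already gives $2 r_k - p_k \leq 0$; what matters is only that the bins' right endpoints do not exceed $\PSpoint \leq 1/2$, which holds by the definition of $\BinWidth$.
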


\DeclarePairedDelimiterXPP\ExpCountBink[1]{\bar{T}_{k,n}}{\lparen}{\rparen}{}{#1}

\begin{proof}
Recall $\SampleBink = \TreatedBink \cup \ControlsBink$, so:
\begin{equation}
	\maxf[\big]{0, \card{\TreatedBink} - \card{\ControlsBink}}
	=
	\maxf[\big]{0, 2 \card{\TreatedBink} - \card{\SampleBink}}.
\end{equation}
Use the law of iterated expectations to write:
\begin{equation}
	\E[\Big]{\maxf[\big]{0, 2 \card{\TreatedBink} - \card{\SampleBink}}}
	=
	\E[\bigg]{\E[\Big]{\maxf[\big]{0, 2 \card{\TreatedBink} - \card{\SampleBink}} \given \card{\SampleBink}}},
\end{equation}
and:
\begin{align}
	&\E[\Big]{\maxf[\big]{0, 2 \card{\TreatedBink} - \card{\SampleBink}} \given \card{\SampleBink} = u}
	\\
	&\qquad =
	\Pr[\Big]{\card{\TreatedBink} < u / 2 + \sqrt{u\logf{n}} \given \card{\SampleBink} = u}
	\\
	&\qquad \qquad \qquad \times \E[\Big]{\maxf[\big]{0, 2 \card{\TreatedBink} - \card{\SampleBink}} \given \card{\TreatedBink} < u / 2 + \sqrt{u\logf{n}}, \card{\SampleBink} = u}
	\\
	&\qquad \qquad +
	\Pr[\Big]{\card{\TreatedBink} \geq u / 2 + \sqrt{u\logf{n}} \given \card{\SampleBink} = u}
	\\
	&\qquad \qquad \qquad \times \E[\Big]{\maxf[\big]{0, 2 \card{\TreatedBink} - \card{\SampleBink}} \given \card{\TreatedBink} \geq u / 2 + \sqrt{u\logf{n}}, \card{\SampleBink} = u}.
\end{align}
Bound the first probability as:
\begin{equation}
	\Pr[\Big]{\card{\TreatedBink} < u / 2 + \sqrt{u\logf{n}} \given \card{\SampleBink} = u} \leq 1,
\end{equation}
and the first expectation as:
\begin{equation}
	\E[\Big]{\maxf[\big]{0, 2 \card{\TreatedBink} - \card{\SampleBink}} \given \card{\TreatedBink} < u / 2 + \sqrt{u\logf{n}}, \card{\SampleBink} = u} \leq 2\sqrt{u\logf{n}},
\end{equation}
and the second expectation as:
\begin{equation}
	\E[\Big]{\maxf[\big]{0, 2 \card{\TreatedBink} - \card{\SampleBink}} \given \card{\TreatedBink} \geq u / 2 + \sqrt{u\logf{n}}, \card{\SampleBink} = u} \leq u.
\end{equation}

Consider the second probability when $u \geq 1$:
\begin{multline}
	\Pr[\Big]{\card{\TreatedBink} \geq u / 2 + \sqrt{u\logf{n}} \given \card{\SampleBink} = u}
	\\
	=
	\Pr[\Big]{\card{\TreatedBink} / u \geq 1 / 2 + \sqrt{\logf{n} / u} \given \card{\SampleBink} = u}.
\end{multline}
Next, let $\ExpCountBink{u} = \E[\big]{\card{\TreatedBink} \given \card{\SampleBink} = u}$, so:
\begin{multline}
	\Pr[\Big]{\card{\TreatedBink} / u \geq 1 / 2 + \sqrt{\logf{n} / u} \given \card{\SampleBink} = u}
	\\
	=
	\Pr[\Big]{\card{\TreatedBink} / u - \ExpCountBink{u} / u \geq 1 / 2 - \ExpCountBink{u} / u + \sqrt{\logf{n} / u} \given \card{\SampleBink} = u}.
\end{multline}
Note that:
\begin{equation}
	\ExpCountBink{u} / u
	= \Pr[\big]{\Wpop = 1 \given \suppPSm + \paren{k - 1} \BinWidth \leq \rPSpop < \suppPSm + k \BinWidth},
\end{equation}
because $\TreatedBink = \SampleBink \cap \Treated$ and $\SampleBink = \setb{i \in \SamplePSm \suchthat \suppPSm + \paren{k - 1} \BinWidth \leq \rPSi < \suppPSm + k \BinWidth}$.
Recall that $\suppPSm + \BinSeq \BinWidth = \PSpoint \leq 1 / 2$, so for all $k \in \setb{1, 2, \dotsc, \BinSeq}$:
\begin{equation}
	\Pr[\big]{\Wpop = 1 \given \suppPSm + \paren{k - 1} \BinWidth \leq \rPSpop < \suppPSm + k \BinWidth} \leq 1 / 2,
\end{equation}
and:
\begin{multline}
	\Pr[\Big]{\card{\TreatedBink} / u - \ExpCountBink{u} / u \geq 1 / 2 - \ExpCountBink{u} / u + \sqrt{\logf{n} / u} \given \card{\SampleBink} = u}
	\\
	\leq
	\Pr[\Big]{\card{\TreatedBink} / u - \ExpCountBink{u} / u \geq \sqrt{\logf{n} / u} \given \card{\SampleBink} = u}.
\end{multline}
Apply Hoeffding's inequality (Theorem~\ref{thm:hoeffdings-inequality}) to get:
\begin{equation}
	\Pr[\Big]{\card{\TreatedBink} / u - \ExpCountBink{u} / u \geq \sqrt{\logf{n} / u} \given \card{\SampleBink} = u}
	\leq \expf[\big]{-2\logf{n}}
	= \frac{1}{n^2}.
\end{equation}

Taken together:
\begin{equation}
	\E[\Big]{\maxf[\big]{0, 2 \card{\TreatedBink} - \card{\SampleBink}} \given \card{\SampleBink} = u}
	\leq 2\sqrt{u\logf{n}} + \frac{u}{n^2},
\end{equation}
and:
\begin{equation}
	\E[\Big]{\maxf[\big]{0, 2 \card{\TreatedBink} - \card{\SampleBink}}}
	=
	\E[\bigg]{2\sqrt{\card{\SampleBink} \logf{n}} + \frac{\card{\SampleBink}}{n^2}},
\end{equation}
so:
\begin{equation}
	\E[\bigg]{\frac{1}{n}\sum_{k = 1}^{\BinSeq} \maxf[\big]{0, \card{\TreatedBink} - \card{\ControlsBink}}}
	\leq
	\frac{1}{n^3}\sum_{k = 1}^{\BinSeq} \E[\big]{\card{\SampleBink}}
	+
	\frac{2\sqrt{\logf{n}}}{n}\sum_{k = 1}^{\BinSeq} \E[\bigg]{\sqrt{\card{\SampleBink}}}.
\end{equation}

First consider:
\begin{equation}
	\frac{1}{n^3}\sum_{k = 1}^{\BinSeq} \E[\big]{\card{\SampleBink}}
	= \frac{1}{n^3} \E[\bigg]{\sum_{k = 1}^{\BinSeq} \card{\SampleBink}}
	= \frac{\E[\big]{\card{\SamplePSm}}}{n^3}
	= \frac{\Pr{\rPSpop < \PSpoint}}{n^2}
	\leq \frac{1}{n^2},
\end{equation}
because $\SampleBin{1}, \SampleBin{2}, \dotsc, \SampleBin{\BinSeq}$ partition $\SamplePSm$.
It follows that the first term converges to zero.

Next, use Jensen's inequality and concavity of the square root to get:
\begin{equation}
	\frac{2\sqrt{\logf{n}}}{n}\sum_{k = 1}^{\BinSeq} \E[\bigg]{\sqrt{\card{\SampleBink}}}
	\leq
	\frac{2\sqrt{\logf{n}}}{n}\sum_{k = 1}^{\BinSeq} \sqrt{\E[\big]{\card{\SampleBink}}}.
\end{equation}
Use Jensen's inequality once more:
\begin{equation}
	\frac{2\sqrt{\logf{n}}}{n}\sum_{k = 1}^{\BinSeq} \sqrt{\E[\big]{\card{\SampleBink}}}
	\leq
	\frac{2 \sqrt{\BinSeq \logf{n}}}{n} \sqrt{\sum_{k = 1}^{\BinSeq}\E[\big]{\card{\SampleBink}}},
\end{equation}
and, finally:
\begin{equation}
	\frac{2 \sqrt{\BinSeq \logf{n}}}{n} \sqrt{\sum_{k = 1}^{\BinSeq}\E[\big]{\card{\SampleBink}}}
	=
	\frac{2 \sqrt{\BinSeq \logf{n}}}{n} \sqrt{n \Pr{\rPSpop < \PSpoint}}
	\leq
	2 \sqrt{\frac{\BinSeq \logf{n}}{n}},
\end{equation}
which implies that also this term converges to zero because $\BinSeq$ was defined in the proof Lemma~\ref{lem:pscore-mdiscrep-term2} so that:
\begin{equation}
	\lim_{n \to \infty} \frac{\BinSeq \logf{n}}{n} = 0. \tag*{\qedhere}
\end{equation}
\end{proof}

\subsection{Proof of Lemma~\ref*{lem:pscore-mdiscrep-term3}}

\begin{lemma}\label{lem:pscore-mdiscrep-term3}
	Under Condition~\refmain{cond:left-closed}:
	\begin{multline}
		\E[\bigg]{\frac{1}{\avgpscore n} \sum_{i \in \TreatedPSp} \POci - \frac{1}{\avgpscore n} \sum_{i \in \ControlsPSp} \POci}
		\\
		=
		\frac{\Pr{\rPSpop \geq \PSpoint}}{2\avgpscore} \paren[\Big]{\E[\big]{\POcpop \given \Wpop = 1, \rPSpop \geq \PSpoint} - \E[\big]{\POcpop \given \Wpop = 0, \rPSpop \geq \PSpoint}}.
	\end{multline}
\end{lemma}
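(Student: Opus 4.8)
The plan is to exploit the feature emphasised in the proof overview: this term is the only one of the three that does not involve the matching at all. Both $\TreatedPSp$ and $\ControlsPSp$ are determined entirely by the pairs $(\Wi, \rPSi)$, so the two sums are ordinary sums of independent and identically distributed contributions, and none of the matching machinery used for the other terms is needed. First I would rewrite the two sums over the whole sample using indicators,
\begin{equation}
	\sum_{i \in \TreatedPSp} \POci = \sumin \POci \indicator{\Wi = 1, \rPSi \geq \PSpoint}
	\qquad\text{and}\qquad
	\sum_{i \in \ControlsPSp} \POci = \sumin \POci \indicator{\Wi = 0, \rPSi \geq \PSpoint},
\end{equation}
which exposes the expression as a plain average of i.i.d.\ terms.

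Next I would take expectations summand by summand. Because the observations are identically distributed, each of the $n$ terms contributes the same population quantity, and the factor $n$ cancels the $n$ in the denominator $\avgpscore n$, giving
\begin{equation}
	\E[\bigg]{\frac{1}{\avgpscore n} \sum_{i \in \TreatedPSp} \POci}
	= \frac{1}{\avgpscore} \E[\big]{\POcpop \indicator{\Wpop = 1, \rPSpop \geq \PSpoint}},
\end{equation}
and the analogous identity for the control sum with $\Wpop = 0$. I would then factor each expectation as a joint probability times a conditional expectation, for instance $\E[\big]{\POcpop \indicator{\Wpop = 1, \rPSpop \geq \PSpoint}} = \Pr{\Wpop = 1, \rPSpop \geq \PSpoint} \E[\big]{\POcpop \given \Wpop = 1, \rPSpop \geq \PSpoint}$, and similarly for the controls.

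The final step is to collapse the two probability weights using Condition~\refmain{cond:left-closed}. As established in the proof of Lemma~\ref{lem:pscore-mdiscrep-term1-upper-balance}, that condition gives $\Pr{\Wpop = 1 \given \rPSpop \geq \PSpoint} = 1/2$, and hence $\Pr{\Wpop = 0 \given \rPSpop \geq \PSpoint} = 1/2$ as well, so both joint probabilities equal $\tfrac{1}{2}\Pr{\rPSpop \geq \PSpoint}$. Substituting this common weight and collecting the two conditional expectations reproduces exactly
\begin{equation}
	\frac{\Pr{\rPSpop \geq \PSpoint}}{2\avgpscore} \paren[\Big]{\E[\big]{\POcpop \given \Wpop = 1, \rPSpop \geq \PSpoint} - \E[\big]{\POcpop \given \Wpop = 0, \rPSpop \geq \PSpoint}},
\end{equation}
which is the claimed identity.

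There is no substantial obstacle here; the whole purpose of partitioning at $\PSpoint$ is to make this term matching-free, so the argument is just a law-of-total-expectation computation. The only points requiring mild care are that the sole use of Condition~\refmain{cond:left-closed} is the symmetric weighting $\Pr{\Wpop = 1 \given \rPSpop \geq \PSpoint} = 1/2$, without which the two sums would carry distinct masses and the clean factorisation would fail, and the degenerate case $\Pr{\rPSpop \geq \PSpoint} = 0$, where both $\TreatedPSp$ and $\ControlsPSp$ are empty almost surely so the left-hand side vanishes and the right-hand side vanishes through its leading factor, rendering the conditional expectations immaterial.
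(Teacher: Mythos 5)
Your proposal is correct and follows essentially the same route as the paper: rewrite each sum via the population distribution, factor the resulting expectation into a joint probability times a conditional expectation, and invoke Condition~\refmain{cond:left-closed} to set both conditional treatment probabilities given $\rPSpop \geq \PSpoint$ equal to one half (handling $\Pr{\rPSpop \geq \PSpoint} = 0$ separately, as the paper also does). The only cosmetic difference is that you pass through explicit indicators and i.i.d.\ summands where the paper works with $\E{\card{\TreatedPSp}}$ and the observation that membership in $\TreatedPSp$ carries no information about $\POci$ beyond $\Wi = 1$ and $\rPSi \geq \PSpoint$; the substance is identical.
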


\begin{proof}
The proof is completed immediately if $\Pr{\rPSpop \geq \PSpoint} = 0$ because $\TreatedPSp$ and $\ControlsPSp$ are then empty with probability one.
The rest of the proof considers the case when $\Pr{\rPSpop \geq \PSpoint} > 0$.

By the same argument as in the previous proofs, $i \in \TreatedPSp$ provides no more information about $\POci$ than $\Wi = 1$ and $\rPSi \geq \PSpoint$, so:
\begin{equation}
\E[\bigg]{\frac{1}{\avgpscore n} \sum_{i \in \TreatedPSp} \POci}
= \frac{\E[\big]{\nTreatedPSp}}{\avgpscore n} \E[\big]{\POcpop \given \Wpop = 1, \rPSpop \geq \PSpoint}.
\end{equation}
Similarly, $i \in \ControlsPSp$ provides no more information about $\POci$ than $\Wi = 0$ and $\rPSi \geq \PSpoint$, and:
\begin{equation}
\E[\bigg]{\frac{1}{\avgpscore n} \sum_{i \in \ControlsPSp} \POci}
= \frac{\E[\big]{\nControlsPSp}}{\avgpscore n} \E[\big]{\POcpop \given \Wpop = 0, \rPSpop \geq \PSpoint}.
\end{equation}
Note that $\E[\big]{\nTreatedPSp} = n \Pr{\Wpop = 1, \rPSi \geq \PSpoint}$, and similarly for $\E[\big]{\nControlsPSp}$ so:
\begin{multline}
\frac{\E[\big]{\nTreatedPSp}}{\avgpscore n} = \frac{\Pr{\rPSi \geq \PSpoint} \Pr{\Wpop = 1 \given \rPSi \geq \PSpoint}}{\avgpscore}
\qquad\text{and}
\\
\frac{\E[\big]{\nControlsPSp}}{\avgpscore n} = \frac{\Pr{\rPSi \geq \PSpoint} \Pr{\Wpop = 0 \given \rPSi \geq \PSpoint}}{\avgpscore}.
\end{multline}

Condition~\refmain{cond:left-closed} and $\Pr{\rPSpop \geq \PSpoint} > 0$ imply that $\Pr{\Wpop = 1 \given \rPSi \geq \PSpoint} = 1 / 2$.
It follows that:
\begin{equation}
	\Pr{\Wpop = 0 \given \rPSi \geq \PSpoint}
	= 1 - \Pr{\Wpop = 1 \given \rPSi \geq \PSpoint}
	= 1 / 2
	= \Pr{\Wpop = 1 \given \rPSi \geq \PSpoint},
\end{equation}
and:
\begin{equation}
\frac{\Pr{\rPSi \geq \PSpoint} \Pr{\Wpop = 1 \given \rPSi \geq \PSpoint}}{\avgpscore}
= \frac{\Pr{\rPSi \geq \PSpoint} \Pr{\Wpop = 0 \given \rPSi \geq \PSpoint}}{\avgpscore}
= \frac{\Pr{\rPSi \geq \PSpoint}}{2 \avgpscore}. \tag*{\qedhere}
\end{equation}
\end{proof}

\end{document}